\documentclass[article]{amsart}

\usepackage[round]{natbib}
\usepackage[a4paper, total={6in, 8in}]{geometry}

\usepackage[english]{babel}
\usepackage{blindtext}
\usepackage{a4wide}
\usepackage[T1]{fontenc}
\usepackage{lscape}
\usepackage{amsaddr}

\usepackage{graphicx}
\usepackage{amsfonts,amsmath,amssymb,mathtools}
\usepackage{enumitem}
\usepackage{booktabs}
\usepackage{comment}
\usepackage{xcolor}
\usepackage[linesnumbered,ruled,vlined]{algorithm2e}
\SetKwInput{KwSetup}{Setup}                
\SetKwInput{KwInput}{Input}                

\SetCommentSty{mycommfont}

\usepackage{hyperref}
%
\newcommand{\p}{\mathbb{P}}

\newcommand{\e}{\mathbb{E}}

\newcommand{\reals}{\mathbb{R}}
\newcommand{\ind}{\mathbf{1}}

\newcommand{\diff}{\,\mathrm{d}}
\newcommand{\ttau}{\tilde{\tau}}
\newcommand{\xtau}{\tau^x_{t}}
\newcommand{\xtaus}{\tau^x_{s}}

\newcommand{\Sr}{\mathcal{S}} 
\newcommand{\Cr}{\mathcal{C}} 

\newcommand{\abs}[1]{\left\lvert#1\right\rvert}
\renewcommand{\tilde}{\widetilde}
\DeclareMathOperator*{\esssup}{ess\,sup}


\usepackage{graphicx}
\newtheorem{theorem}{Theorem}[section]
\newtheorem{proposition}{Proposition}[section]
\newtheorem{lemma}{Lemma}[section]
\newtheorem{corollary}{Corollary}[section]
\newtheorem{assump}{Assumption}[section]
\newtheorem{remark}{Remark}[section]
\newtheorem{example}{Example}[section]
\newtheorem*{proposition*}{Proposition}
\theoremstyle{definition}
\newtheorem{defi}[theorem]{Definition}

\pagestyle{headings}
\allowdisplaybreaks
\setlength{\parindent}{0pt}

\pagestyle{headings}
\allowdisplaybreaks
\setlength{\parindent}{0pt}

	\title[On an Optimal Stopping Problem with a Discontinuous Reward]{On an Optimal Stopping Problem with a Discontinuous Reward}

\author{Anne MacKay$^{\ast\dag\ddag}$ and Marie-Claude Vachon$^\ddag$}
\address{$\dag$Universit\'e de Sherbooke, Sherbrooke, Qu\'ebec, Canada\\
$\ddag$ Universit\'e du Qu\'ebec \`a Montr\'eal, Montr\'eal,  Qu\'ebec, Canada\\
}
\thanks{$^\ast$Corresponding author. Email: Anne.MacKay@USherbrooke.ca\\}

\date{\today}


\begin{document}
	
\begin{abstract} 
We study an optimal stopping problem with an unbounded, time-dependent and discontinuous reward function. 
This problem is motivated by the pricing of a variable annuity contract with guaranteed minimum maturity benefit, under the assumption that the policyholder's surrender behaviour maximizes the risk-neutral value of the contract. 
We consider a general fee and surrender charge function, and give a condition under which optimal stopping always occurs at maturity.
Using an alternative representation for the value function of the optimization problem, we study its analytical properties and the resulting surrender (or exercise) region.
In particular, we show that the non-emptiness and the shape of the surrender region are fully characterized by the fee and the surrender charge functions, which provides a powerful tool to understand their interrelation and how it affects early surrenders and the optimal surrender boundary. 
Under certain conditions on these two functions, we develop three representations for the value function; two are analogous to their American option counterpart, and one is new to the actuarial and American option pricing literature.
	
	
	 \vspace{0.5cm}
	\textit{Keywords}: {Variable annuities, Optimal stopping, Surrender option, Discontinuous reward, Free boundary problems}
	$\quad$\newline
	\vspace{1cm}
	\textbf{JEL Classification}: C63, G12, G13, G22
\end{abstract}
	\maketitle
	
	\newpage
\section{Introduction}
\label{intro}
Variable annuities (VA) are structured products sold by insurance companies that are mainly used for retirement planning. An initial premium is deposited in an investment account (or fund) whose return is linked to that of one or more risky assets. They are similar to mutual funds, but they also offer financial guarantees at the end of a pre-determined accumulation period. The embedded guarantees are akin to long-dated options, but they are funded by a periodic fee, generally set as a percentage of the investment account value, rather than being paid upfront. In this paper, we focus on a guaranteed minimum maturity benefit (GMMB), which provides the policyholder a minimum guaranteed amount at maturity of the contract.  Another common feature of variable annuity contracts is that policyholders generally have the right to surrender, or lapse, their contract prior to maturity.  When policyholders choose to do so, they receive the value accumulated in the investment account, reduced by a penalty for early surrender (or surrender charge).  
The uncertainty faced by VA providers with respect to early termination is known as surrender risk. Early surrenders can entail negative consequences such as liquidity issues. Thus, incorporating adequate surrender assumptions in the pricing of VAs is essential for risk management purposes; see \citet{niittuinpera2020IAA}.

Different early surrender modelling approaches have been proposed in the literature (see \cite{bauer2017policyholder} and \cite{feng2022variable} for a review), from the use of utility functions, \cite{gao2012optimal}, to modern statistical techniques, \cite{zhu2015statistical}. Early surrenders are also often expressed as a decision taken by policyholders on a strictly rational basis, meaning that the contract is terminated as soon as it is optimal to do so from a risk-neutral perspective (see \cite{Bernard2014}, \cite{jeon2018optimal}, \cite{kang2018optimal}, \cite{MackayVachonCui2022VaCTMC}, and \cite{milevsky2001real}, among others). \cite{bauer2017policyholder} discusses the impact of other factors on policyholder behavior, such as taxes and expenses. This article gave rise to another stream of literature in which market frictions such as taxation rules are considered (\cite{alonso2020taxation}, \cite{bauer2023cheaper}, and \cite{moenig2016revisiting}), which helps to explain the discrepancies between market and model fee rates. Other recent studies include lapse and reentry strategies in their analysis (\cite{bernard2019less} and \cite{moenig2018lapse}), whereas \cite{moenig2021economics} consider a third-party investor to whom the policyholder can sell her contract.  

The goal of this paper is to study the properties of the value function of the VA contract under the assumption that the policyholder maximizes its risk-neutral value, when the fee and the surrender penalty (or charge) are both time and state-dependent (that is, depending on the value of the VA account, see \cite{Bernard2014}).
Our setup includes, among others, the constant fee case, the state-dependent fees of \cite{Bernard2014}, \cite{delong2014pricing}, and \cite{Mackay2017}, and the time-dependent fees of \cite{bernard2019less} and \cite{kirkby2022valuation}. 
Under this assumption, valuing a variable annuity contract is equivalent to solving an optimal stopping problem similar to pricing an American option. However, the financial guarantee being applied only at maturity in a VA contract creates a discontinuity in the reward function at maturity, which sets it apart from the continuous reward function of the American option and complicates the optimal stopping problem involved in the pricing of VAs.  
The impact of the surrender charge and the fee structure on the optimal surrender strategy has been studied in the Black-Scholes framework by \cite{Bernard2014}, \cite{BernardMackay2015}, \cite{bernard2019less}, \cite{Mackay2014}, \cite{Mackay2017}, and \cite{moenig2018lapse}, and in a more general setting by \cite{kang2018optimal} and \cite{MackayVachonCui2022VaCTMC}. However, these articles approach the problem from a numerical perspective.
Recently, and independently from our work, \cite{de2024variable} study the optimal surrender problem in VAs from a theoretical perspective. 
While their setting includes mortality, which ours does not, we consider a more general structure for the fee and surrender charge functions. 
The analysis presented in this paper is therefore significantly different from theirs, but a parallel can be drawn between some of their results and ours.

Recently, \cite{luo2021optimal} studied variable annuity contracts under regime-switching volatility models from an optimal stopping perspective. However, they consider a different surrender benefit, which results in a continuous reward function similar to that of a put option. \cite{chiarolla2020analytical} perform an analysis similar to ours, but for participating policies with minimum rate guarantee and surrender option. Participating policies are akin to variable annuities in that the premium paid by the policyholder tracks a financial portfolio, subject to a minimum rate guarantee. However, the problem they consider is fundamentally different from ours in multiple ways. i) In variable annuity contracts, guarantees are funded via periodic fees set as a percentage of the account value. This creates a discrepancy between the fee amount and the value of the financial guarantee, which becomes an incentive for the policyholder to surrender the policy early (\cite{milevsky2001real}). In participating policies, there is no such ongoing fee; upon termination or at maturity, the policyholder receives the reserve and a given percentage of the surplus, defined as a percentage of the tracking portfolio over the policy reserve (the intrinsic value). ii) In variable annuities, the surrender value differs from that of the maturity payout, which creates a time discontinuity in the reward function at maturity; whereas in participating policies, the intrinsic value is paid at maturity or upon early termination, so that the reward function is independent of time and continuous. iii) In participating policies, the contract is terminated if the underlying portfolio falls below the reserve. There is no such feature in a variable annuity, that is, early termination is at the sole discretion of the policyholder. 
The resulting optimal stopping problem we study in this paper is thus very different from the one considered by \cite{chiarolla2020analytical}, even if they are motivated by similar insurance products.

The time discontinuity and the unboundedness of the reward function involved in the pricing of variable annuities with GMMB prevent the simple application of results from the American option pricing literature to our problem.
For example, to express the value function as the solution to a free boundary value problem, one needs to establish its continuity. 
However, continuity of the value function often follows from the continuity of the reward function, which, in our setting, is unbounded and only upper semi-continuous, making the results usually cited in the context of American options inapplicable (see, for instance,  \cite{bassan2002optimal}, \cite{bensoussan2011applications}, \cite{de2019lipschitz}, \cite{jaillet1990variational}, \cite{krylov2008controlled}, and \cite{Lamberton2009}).   
In this paper, following the work of \cite{van1974optimal} and \cite{palczewski2010finite} (in the context of impulse control problems), we establish continuity of the value function by showing that it admits an alternative representation in terms of a continuous reward function. 
We can then confirm that it is a solution to a free boundary value problem and prove further properties, 
which provides enough regularity to apply a generalized version of Itô's formula to the value process. 
This allows us to derive various integral representations for the value function, thus
extending the results of \cite{bernard2014optimal} to general time-dependent fee and surrender charge functions and introducing a new decomposition, the continuation premium representation, to the literature on VA pricing.

The shape and the (non-)emptiness of the exercise region of an American option with a time-homogeneous reward function has first been studied rigorously by \cite{villeneuve1999exercise}. 
Other authors, such as \cite{jonsson2006threshold}, \cite{kotlow1973free}, and \cite{villeneuve2007threshold}, also studied the conditions under which the exercise region has a specific shape. 
These results are however not directly applicable to the problem we study in this paper, because of the time-dependence and the discontinuity of our reward function.
Under the assumption of a general fee and surrender charge structure, we are nonetheless able to identify a condition, expressed as a partial differential inequality, under which the optimal stopping strategy is always to hold on to the contract until maturity, thus mitigating surrender risk.
This condition provides a powerful tool to understand the interplay between the fees and surrender charges, and its effect on the optimal stopping strategy. 
It can also be used to further study and characterize the shape of the optimal exercise, or surrender, region.
In particular, we show that the surrender region can be a disconnected set and that the optimal surrender boundary can be discontinuous,
thus further highlighting the complexity of this optimal stopping problem.
We also extend the work of \cite{milevsky2001real}, who consider an infinite horizon, to a more general finite horizon problem.

Our results on the shape of the surrender region also shed light on the link between our original optimal stopping problem and its continuous reward function counterpart.
In particular, we obtain conditions under which the two reward functions lead to equivalent optimal stopping problems; that is, we identify cases when they present the same surrender regions and optimal stopping times. 
While the idea of using an alternate continuous reward process to obtain continuity of the value function is not new to the literature, it is the first time, to the authors' knowledge, that an in-depth comparison of the two problems is presented.

The main contributions of this paper are summarized below.
\vspace{-5pt}
\begin{itemize}
	\item 
	We give a condition under which it is never optimal for the policyholder to surrender before maturity of the VA contract.
	\item We present an alternative representation for the value function of the VA contract in terms of a continuous reward function.  
	As stated above, this representation makes it easy to show continuity of the value function.
	In our paper, this alternative representation is used in Section \ref{sectionVAproperties} to show that the value function solves a free boundary value problem. These results justify the use of different numerical methods already applied to VA pricing in the actuarial literature.
	\item 
	We provide two further representations for the value function. 
	For the first one, we decompose the value of the contract into the value of the maturity benefit and an integral term akin to the early exercise premium from the American option literature.
	The second representation is new to the actuarial and American option pricing literature and expresses the contract value as the sum of the surrender benefit and an integral term coined the continuation premium.
	These two representations are used in Section \ref{sectSurrRegion} to characterize the {(non-) emptiness} of the surrender region. 
	The two representations can also be used to develop numerical pricing algorithms.
	\item We obtain conditions under which the original optimal stopping problem and the one with a continuous reward function are equivalent.
\end{itemize} 

This paper is organized as follows. Section \ref{sectFinSet} presents the optimal stopping problem involved in the pricing of VA contracts with a GMMB. The existence of an optimal stopping time is discussed in Section \ref{sectOptStop}.
In Section \ref{sectionVAproperties}, we obtain analytical properties of the value function, derive its integral representations and study the shape of the surrender region. We also discuss the equivalence between the original optimal stopping problem and its continuous reward function counterpart.
Numerical examples are provided in Section \ref{sectNumResults}.
Section \ref{sectionConclu} concludes the paper.

\section{Financial Setting}\label{sectFinSet}

\subsection{Market Model}

On a probability space $(\Omega,\mathcal{F},\mathbb{Q})$, let $W=\{W_t\}_{t\geq 0}$ be a standard Brownian motion whose augmented filtration is denoted by $\mathbb{F} = \{\mathcal F_t\}_{t\geq0}$. $\mathbb{Q}$ is the pricing (probability) measure in the market presented below.

We consider a financial market consisting of two primary assets, a risk-free bond $B=\{B_t\}_{t\geq 0}$ and a risky asset $S=\{S_t\}_{t\geq 0}$ whose dynamics under the measure $\mathbb{Q}$ are given by
\begin{equation}
	\begin{array}{rll}
		\diff B_t & = &rB_t\diff t,\\
		\diff S_t & =& r S_t \diff t+\sigma S_t \diff W_t,
	\end{array}\label{eqEDSBs}
\end{equation}
where $r$, $\sigma$ are deterministic constants with $r \geq 0$ and $\sigma>0$. It is easy to verify that \eqref{eqEDSBs} has a unique strong solution given by $B_t =  B_0 e^{rt}$  and
\begin{equation*}
	S_t  = S_0e^{\left(r-\frac{\sigma^2}{2}\right)t+\sigma W_t}. 
\end{equation*}
Readers will recognize the so-called Black-Scholes model, presented here directly in terms of its unique risk-neutral measure.
\subsection{Variable Annuity Contract}\label{subsectionVA}

We consider a simplified variable annuity contract offering a guaranteed minimum accumulation benefit at maturity $T\in\reals_+$, with $\reals_+=\{x\in\reals | x>0\}$. 
At inception of the contract, the policyholder deposits an initial premium in an investment account (or fund) tracking the risky asset. We denote by $F=\{F_t\}_{0\leq t\leq T}$ the value process of the account.

We remark that variable annuity policyholders usually benefit from a return-of-premium guarantee if they die before maturity. Under the assumption that mortality risk is completely diversifiable, it is straight-forward to add the death benefit in the analysis of the contract, see for example \cite{Mackay2017} and \cite{de2024variable}. In this paper, we ignore mortality risk.

The guarantees embedded in VA contracts are funded via a fee levied continuously from the investment account at a rate $c_t$ defined by
\begin{equation}\label{eqFeeFct}
	c_t:=c(t, F_t),\quad 0\leq t\leq T,
\end{equation}
where $c:[0,T]\times\reals_+\rightarrow [0,1]$, 
so that \eqref{eqEDSFs} has a unique strong solution. 
This fee structure is general enough to include state-dependent (\cite{bernard2014optimal}, \cite{delong2014pricing}, \cite{Mackay2017}) and time-dependent fees (\cite{bernard2019less}, \cite{kirkby2022valuation}).
The account value is then defined by
\begin{equation}
	F_t=S_te^{-\int_0^t c_u\diff u}, \quad 0\leq t\leq T, \label{eqF1}
\end{equation}
with $F_0=S_0$, so that
\begin{equation}
	\diff F_t  = (r-c_t) F_t \diff t+\sigma F_t \diff W_t.\label{eqEDSFs}
\end{equation}
Going forward, we denote by $\lbrace F_s^{t,x}\rbrace_{t\leq s\leq T}$ the solution to \eqref{eqEDSFs} with starting condition $F_t=x \in \reals_+$. When $t=0$, we simplify the notation and write $F_s^{x}=F_s^{0,x}$.

At maturity, the policyholder receives the maximum between a pre-determined amount $G\in\reals_+$ and the value of the investment account. 
Given $F_t=x$, the time-$t$ risk-neutral value of the maturity benefit is given by
\begin{equation}\label{eqVAnosurrender}
	h(t,x):=\e\left[e^{-r(T-t)}\max(G,F_T^{t,x})\right].
\end{equation}
Should she decide to surrender her contract prior to maturity, the policyholder receives the amount accumulated in the account subject to a penalty, expressed as a percentage that can depend on time and the value of the account. If no surrenders occur, the maturity benefit is paid at $T$.

More formally, let $\varphi:[0,\,T]\times \reals_+\rightarrow \reals_+$ denote the reward (or gain) function defined by
\begin{equation}
	\varphi(t,x)=\begin{cases}
		g(t,x)\,x & \text{if $t<T$,}\\
		\max(G,\,x) & \text{if $t=T$},
	\end{cases}\label{eqRewardFct}
\end{equation}
where $g:[0,\,T]\times\reals_+\rightarrow (0, 1]$ is in $C^{1,2}$ with bounded derivatives on $[0,T]\times\reals_+$, is non-decreasing in $t$ and satisfies $g(T,x)=1$ for all $x\in\reals_+$.
Thus, $\varphi(t,x)$ represents the amount received by the policyholder upon surrender at $t < T$ or at maturity $T$, given that the account has value $x$.

In the literature, $1-g$ is often called the surrender charge. 
It is non-increasing in time; a later surrender will yield a higher proportion of the account value.
In this paper, we use the term \emph{surrender charge function} to refer to either $g$ or $1-g$.

Imposing $g(T,\cdot)=1$ is common in the actuarial literature.
It allows the function $g$ to be defined on the closed interval $[0,T]$ and represents the fact that at maturity, the policyholder is entitled to the full amount accumulated in the account.
In practice, it is also common to see the surrender charge vanish before maturity. Examples of surrender charges are listed in \cite{palmer2006equity}. 

It is the first time, to our knowledge, that state-dependent surrender charges are considered. 
We will show that state-dependent surrender charges arise naturally when trying to mitigate surrender risk in the presence of a state-dependent fee.
\begin{remark}\label{rmkVarphiDiscoun}
	\begin{sloppypar}For $x<G$, the function $t\mapsto\varphi(t,x)$ is discontinuous at $T$ since 
		\begin{equation*}
			{\lim_{t\rightarrow T^-}\varphi(t,x)=g(T,x)x}= x<G=\varphi(T,x).
		\end{equation*}
	\end{sloppypar}
\end{remark}
Under the assumption that the policyholder maximizes the risk-neutral value of her contract, its value at time $t$ with account value $x$ is given by
\begin{equation}
	v(t,\, x)= \esssup_{\tau \in \mathcal{T}_{t,\, T}} \e[e^{-r(\tau -t)}\varphi(\tau, F_{\tau}^{t,\,x})]
	=\sup_{\tau \in \mathcal{T}_{t,\, T}} \e[e^{-r(\tau -t)}\varphi(\tau, F_{\tau}^{t,\,x})],\label{eqAmOptVA}  
\end{equation} 
for $(t,x) \in [0,T]\times \reals_+$, where $\mathcal{T}_{t,\, T}$ is the set of all stopping times taking value in the interval $[t,\,T]$. Moreover, throughout the paper, it must be understood that for $s > t$, 
\begin{equation*}
	v(s,F^{t,x}_s) =  \esssup_{\tau \in \mathcal{T}_{s,\, T}} \e[e^{-r(\tau -s)}\varphi(\tau, F_{\tau}^{t,\,x}) | \mathcal F_s],
\end{equation*}
since in that case, the essential supremum does not simplify.


\section{Optimal Stopping Time}\label{sectOptStop}

In this section, we show that under a simple condition on the fee and the surrender charge function, the optimal stopping problem in \eqref{eqAmOptVA} admits a trivial solution: an optimal stopping time is $T$, the maturity of the contract. 
In the second part of the present section, we discuss the existence of an optimal stopping time when this condition does not hold. 

We say that a stopping time $\tau^\star$ is \emph{optimal} if 
\begin{align}
	v(t,\, x)&= \sup_{\tau \in \mathcal{T}_{t,\, T}} \e[e^{-r(\tau-t) }\varphi(\tau, F_{\tau}^{t,x})]\label{eqDefOptStop}\\
	&= \e[e^{-r(\tau^\star-t)}\varphi(\tau^\star, F_{\tau^\star}^{t,x})]\nonumber.
\end{align}
In this case, we also say that $\tau^\star$ is \emph{optimal for \eqref{eqDefOptStop}}.

For a general reward function $\varphi$, such an optimal stopping time does not necessarily exist. 
However, under some regularity conditions on the discounted reward process ${Z=\{Z_t\}_{0\leq t\leq T}}$, with $Z_t:=e^{-rt}\varphi(t,F_t)$, the existence and the form of an optimal stopping time are well known from the theory of optimal stopping for random processes in continuous time, see, for instance, \cite{karatzas1998methods} (Appendix D, Theorem D.12),  \cite{peskir2006optimal} (Theorem 2.2), \cite{lamberton1998} (Section 10.2.1), \cite{Lamberton2009} (Theorem 2.3.5 and Section 2.3.4), \cite{el1981aspects}, Chapter 2. 
With the exception of \cite{el1981aspects}, these results rely on the (almost sure) continuity of the reward process, which does not hold for the one involved in variable annuity pricing since its trajectories are discontinuous at time $T$ with positive probability.
In fact the trajectories of $Z$ are upper semi-continuous.
Thus, additional work is required to show the existence of an optimal stopping time in our setting.

\subsection{Trivial Case}\label{subSectSurOptTrivial}
In this section, we derive a simple condition on the fee and surrender charge functions under which it is always optimal for the risk-neutral policyholder to hold the contract until maturity.

\begin{lemma}\label{lemmaVarphiSG}
	\begin{sloppypar}
		Let $g$ and $\varphi$ be defined as in \eqref{eqRewardFct}. If the discounted surrender value process $Y=\{Y_t\}_{0\leq t\leq T}$, with $Y_t:=e^{-rt}F_tg(t,\,F_t)$, is a submartingale, then the discounted reward process $Z = \{Z_t\}_{0\leq t\leq T}$, with $Z_t:=e^{-rt}\varphi(t, F_t)$, is also a submartingale.
	\end{sloppypar}
\end{lemma}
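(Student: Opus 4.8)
The plan is to exploit the fact that the discounted reward process $Z$ coincides with the discounted surrender value process $Y$ at all times strictly before maturity and dominates it at maturity. First I would observe that for $t<T$ we have $\varphi(t,F_t)=g(t,F_t)\,F_t$, so that $Z_t=Y_t$ identically; whereas at $t=T$, since $g(T,\cdot)=1$ and $\max(G,F_T)\geq F_T$, we get $Z_T=e^{-rT}\max(G,F_T)\geq e^{-rT}F_T=Y_T$. Hence $Z_t\geq Y_t$ for every $t\in[0,T]$, with equality on $[0,T)$. The crucial structural point is that the time-discontinuity of $\varphi$ at maturity (Remark \ref{rmkVarphiDiscoun}) pushes $Z$ \emph{upward} relative to $Y$, so the jump can only help, not hinder, the submartingale inequality.

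Next I would discharge the measurability and integrability prerequisites. Adaptedness of $Z$ is immediate from that of $F$ and the measurability of $\varphi$. For integrability, $Z_t=Y_t$ is integrable for every $t<T$ (inherited from $Y$, which is integrable by assumption), and at maturity $Z_T=e^{-rT}\max(G,F_T)\leq e^{-rT}(G+F_T)$ is integrable because $F_T=S_Te^{-\int_0^T c_u\diff u}\leq S_T$ by $c_u\geq 0$, and $S_T$ is lognormal with finite $\mathbb{Q}$-expectation.

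Then I would verify the submartingale inequality $\e[Z_t\mid\mathcal F_s]\geq Z_s$ for $0\leq s\leq t\leq T$, splitting on whether $t<T$ or $t=T$. When $t<T$ both sides reduce to the statement for $Y$, namely $\e[Y_t\mid\mathcal F_s]\geq Y_s$, which holds by hypothesis. When $t=T$ (so that necessarily $s<T$ unless $s=T$, which is trivial), I would combine $Z_T\geq Y_T$ with $Z_s=Y_s$ to obtain
\begin{equation*}
	\e[Z_T\mid\mathcal F_s]\geq\e[Y_T\mid\mathcal F_s]\geq Y_s=Z_s,
\end{equation*}
where the first inequality uses monotonicity of conditional expectation and the second uses the submartingale property of $Y$.

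The argument is essentially elementary, and I do not expect a substantive obstacle. The only point requiring care is the boundary at maturity, where the discontinuity of $\varphi$ is concentrated: one must keep track of the regime change in the definition of $\varphi$ and treat the terminal conditional expectation separately, relying on the domination $Z_T\geq Y_T$ rather than on any pointwise equality. Confirming integrability of $Z_T$ (which is where unboundedness of the reward enters) is the other detail to handle explicitly.
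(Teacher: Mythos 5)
Your proof is correct and follows essentially the same route as the paper's: identify $Z_t=Y_t$ for $t<T$, use $g(T,\cdot)=1$ to get $Z_T\geq Y_T$, and conclude via the submartingale property of $Y$ and monotonicity of conditional expectation. The only difference is that you additionally spell out adaptedness and integrability of $Z$, which the paper leaves implicit; this is a harmless (indeed welcome) refinement, not a different argument.
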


In the proof and throughout the paper, we use the notation $\e_t[\cdot] = \e[\cdot|\mathcal F_t]$.

\begin{proof}
	We first observe that for $0 \leq t < T$, $Y_t = Z_t$ so that if $Y$ is a submartingale, then for any $0\leq s\leq t<T$, $\e_s[Z_t] \geq Z_s$.
	For $0\leq s < t=T$, since $g(T,x)=1$ for all $x\in\reals_+$, we have
	\begin{equation*}
		\e_s[e^{-rT}\varphi(T,\,F_T)]
		\geq\e_s[e^{-rT}F_T g(T,\,F_T)]
		\geq e^{-rs}F_s g(s,\,F_s)
		=e^{-rs}\varphi(s,\,F_s).	
	\end{equation*}
\end{proof}
The next result is well known in optimal stopping theory (see, for example, \cite{bjork2009}, Proposition 21.2) and is reproduced here for completeness. 
It states that if the discounted reward process is a submartingale, maturity of the contract is an optimal stopping time. 
A simple financial interpretation is that if the discounted reward process $Z$ is a submartingale, it is non-decreasing on average over time. 
Holding the contract as long as possible is then optimal because its value is expected to increase. 
This is also the reasoning behind the optimal exercise strategy for an American call option on a non-dividend-paying stock. 
\begin{lemma} \label{lemmaOptStopTime}
	\begin{sloppypar}
		If the discounted reward process ${\{Z_t\}_{0 \leq t\leq T}}$, with $Z_t:=e^{-rt}\varphi(t, F_t)$, is a submartingale, then $T$ is an optimal stopping time, that is
		\begin{align*}
			v(t,\, x)&= \sup_{\tau \in \mathcal{T}_{t,\, T}} \e[e^{-r(\tau -t)}\varphi(\tau, F_{\tau}^{t,\,x})]
			=\e[e^{-r(T -t)}\varphi(T, F_{T}^{t,\,x})].
		\end{align*}
	\end{sloppypar}
\end{lemma}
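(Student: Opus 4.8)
The plan is to reduce the statement to the optional sampling theorem for the submartingale $Z$. One inequality is free: since the deterministic time $T$ belongs to $\mathcal{T}_{t,T}$, we automatically have
\begin{equation*}
	v(t,x) \ge \e[e^{-r(T-t)}\varphi(T, F_T^{t,x})].
\end{equation*}
All the content is therefore in the reverse inequality, namely that stopping at maturity dominates every other admissible stopping time.

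To obtain it, the key is the optional sampling inequality $\e[Z_\tau] \le \e[Z_T]$ for every stopping time $\tau \le T$. I would apply this to the discounted reward process of the sub-account started at $(t,x)$, which is again a submartingale, being driven by the same dynamics \eqref{eqEDSFs} with initial data $(t,x)$ (equivalently, one conditions on $\{F_t = x\}$ via the Markov property of $F$). Using $\e[e^{-r(\tau-t)}\varphi(\tau, F_\tau^{t,x})] = e^{rt}\,\e[e^{-r\tau}\varphi(\tau, F_\tau^{t,x})]$, the sampling inequality gives
\begin{equation*}
	\e[e^{-r(\tau-t)}\varphi(\tau, F_\tau^{t,x})] \le \e[e^{-r(T-t)}\varphi(T, F_T^{t,x})] \qquad \text{for all } \tau \in \mathcal T_{t,T}.
\end{equation*}
Taking the supremum over $\tau \in \mathcal T_{t,T}$ yields $v(t,x) \le \e[e^{-r(T-t)}\varphi(T, F_T^{t,x})]$, and combined with the trivial bound this proves that $T$ is optimal.

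The only delicate step, and the main obstacle, is justifying the optional sampling inequality itself, because $Z$ is not right-continuous: as noted in Remark \ref{rmkVarphiDiscoun}, its paths are merely upper semi-continuous, with a possible upward jump at the terminal time $T$, so the textbook optional sampling theorem (which assumes right-continuous paths) does not apply verbatim. I would circumvent this by approximating $\tau$ from above by a decreasing sequence $\tau_n \downarrow \tau$ of stopping times valued on a refining dyadic grid containing $T$. Each $\tau_n$ takes finitely many values, so the elementary discrete-time optional sampling theorem gives $\e[Z_{\tau_n}] \le \e[Z_T]$. Since the paths of $Z$ are continuous on $[0,T)$ and, by construction, $\tau_n = T$ on $\{\tau = T\}$ while $\tau_n < T$ eventually on $\{\tau < T\}$, we have $Z_{\tau_n} \to Z_\tau$ almost surely. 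The passage to the limit is then justified by dominated convergence, using the bound $0 \le Z_s \le G + \sup_{u \le T} e^{-ru} S_u$ (valid since $g \le 1$ and $c \ge 0$), whose right-hand side is integrable by Doob's $L^2$ maximal inequality applied to the square-integrable martingale $\{e^{-rs}S_s\}$. That the sole discontinuity is an upward jump at the deterministic maturity is precisely what makes approximation from above work in our favour here.
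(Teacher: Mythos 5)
Your proof is correct, and in substance it is the same argument underlying the lemma in the paper, which gives no proof of its own but cites the result as well known (\cite{bjork2009}, Proposition 21.2): the trivial inequality from $T\in\mathcal{T}_{t,T}$, plus optional sampling for the submartingale $Z$ to get $\e[Z_\tau]\le\e[Z_T]$ for every $\tau\in\mathcal{T}_{t,T}$.

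One remark: the step you single out as the ``main obstacle'' is not actually an obstacle. The jump of $Z$ at maturity (Remark \ref{rmkVarphiDiscoun}) is a \emph{left}-discontinuity: the paths of $Z$ are continuous on $[0,T)$, and at the endpoint $T$ right-continuity is vacuous (extend $Z$ constantly past $T$ if one insists on an unbounded time index; this preserves both the submartingale property and right-continuity). So $Z$ is a c\`adl\`ag submartingale, and the textbook optional sampling theorem for right-continuous submartingales and bounded stopping times (e.g., \cite{karatzasShreve1991}, Theorem 1.3.22) applies verbatim, giving $\e[Z_\tau]\le\e[Z_T]$ in one line. Upper semi-continuity of the trajectories, which the paper emphasizes for the existence results of Theorem \ref{thmElkaroui81}, is compatible with right-continuity here and does not obstruct sampling. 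Your dyadic approximation from above is a correct, self-contained re-derivation of that theorem in this special case --- the a.s. convergence $Z_{\tau_n}\to Z_\tau$ and the domination by $G+\sup_{u\le T}e^{-ru}S_u$ both check out --- so nothing in your argument fails; it is simply longer than necessary.
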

Going forward, let $g_x(t,x)=\frac{\partial g}{\partial x}(t,x)$, $g_{xx}(t,x)=\frac{\partial^2 g}{\partial x^2}(t,x)$  and $g_t(t,x)=\frac{\partial g}{\partial t}(t,x)$.
\begin{proposition}\label{propEDPSurreder}
	Let $c$ and $g$ be the fee and the surrender charge functions as defined by \eqref{eqFeeFct} and \eqref{eqRewardFct}, respectively. If 
	\begin{equation}
		g_t(t,\,x)+(r-c(t,\,x)+\sigma^2)xg_{x}(t,\,x)+\frac{\sigma^2 x^2}{2}g_{xx}(t,\,x)-c(t,x)g(t,\,x)\geq 0\label{eqEDPg}
	\end{equation}
	for all $(t,\, x)\in [0,T)\times\reals_+$, $T$ is an optimal stopping time for \eqref{eqAmOptVA}.
\end{proposition}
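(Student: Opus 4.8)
The plan is to reduce the statement to the two lemmas already established. By Lemma~\ref{lemmaVarphiSG} and Lemma~\ref{lemmaOptStopTime}, it suffices to show that condition~\eqref{eqEDPg} forces the discounted surrender value process $Y_t := e^{-rt}F_t\,g(t,F_t)$ to be a submartingale: Lemma~\ref{lemmaVarphiSG} then promotes this to the submartingale property of $Z$, and Lemma~\ref{lemmaOptStopTime} yields that $T$ is optimal for \eqref{eqAmOptVA}.

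First I would write $Y_t = \Phi(t,F_t)$ with $\Phi(t,x):=e^{-rt}x\,g(t,x)$. Since $g\in C^{1,2}$, we have $\Phi\in C^{1,2}([0,T)\times\reals_+)$, so Itô's formula applies to the diffusion \eqref{eqEDSFs}. This gives the decomposition $\diff Y_t = \mu_t\diff t + \sigma F_t\,\Phi_x(t,F_t)\diff W_t$, where the drift is
\[
	\mu_t = \Phi_t(t,F_t) + (r-c_t)F_t\,\Phi_x(t,F_t) + \tfrac{1}{2}\sigma^2 F_t^2\,\Phi_{xx}(t,F_t),
	\qquad c_t = C(t,F_t).
\]

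The key step is to simplify this drift. Substituting $\Phi_t = e^{-rt}x(g_t-rg)$, $\Phi_x = e^{-rt}(g+xg_x)$, and $\Phi_{xx}=e^{-rt}(2g_x+xg_{xx})$, the two $\pm r g$ contributions cancel and the drift factors cleanly as
\[
	\mu_t = e^{-rt}F_t\Big[\, g_t + (r-C+\sigma^2)x\,g_x + \tfrac{\sigma^2 x^2}{2}\,g_{xx} - C\,g \,\Big]_{x=F_t},
\]
which is exactly $e^{-rt}F_t$ times the left-hand side of \eqref{eqEDPg} evaluated at $(t,F_t)$. Since $e^{-rt}F_t>0$, hypothesis \eqref{eqEDPg} makes $\mu_t\geq 0$ on $[0,T)$, so the finite-variation part of $Y$ is nondecreasing and $Y$ is a local submartingale. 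This algebraic factorization is routine; I would present the computation compactly.

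Finally I would upgrade \emph{local submartingale} to \emph{submartingale} via an integrability argument. Because $0<g\leq 1$ and $c_u\geq 0$, one has $0\leq Y_t\leq e^{-rt}F_t\leq e^{-rt}S_t$, and the discounted stock $e^{-rt}S_t$ is a true $\qp$-martingale; this bounds $Y$ in $L^1$ uniformly on $[0,T]$. I would then localize the stochastic integral along a reducing sequence of stopping times and pass to the limit (either by checking $\e\!\int_0^T \sigma^2 F_t^2\,\Phi_x(t,F_t)^2\diff t<\infty$ directly, or by Fatou's lemma for conditional expectations applied to the nonnegative process $Y$ together with the nondecreasing finite-variation part) to conclude that $Y$ is a genuine submartingale on $[0,T]$. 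I expect this integrability/true-martingale step to be the main obstacle, since $g_x$ and $g_{xx}$ are not assumed bounded, so the martingale property of the stochastic integral cannot be taken for granted and must be obtained through localization or the uniform $L^1$ bound above. With $Y$ shown to be a submartingale, the conclusion follows immediately from Lemmas~\ref{lemmaVarphiSG} and \ref{lemmaOptStopTime}.
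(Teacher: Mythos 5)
Your proposal follows essentially the same route as the paper's proof: apply It\^o's formula to $Y_t=e^{-rt}F_t\,g(t,F_t)$, identify the drift as $e^{-rt}F_t$ times the left-hand side of \eqref{eqEDPg}, and conclude via Lemmas \ref{lemmaVarphiSG} and \ref{lemmaOptStopTime}. In fact the paper stops at the local description---it asserts the submartingale property of $Y$ directly from the sign of the drift---so your additional concern about upgrading a \emph{local} submartingale to a true one goes beyond what the paper records, and it is a legitimate concern, since $g_x$ and $g_{xx}$ are not assumed bounded and the stochastic integral is a priori only a local martingale.

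One caveat on that extra step: of your two proposed completions, the Fatou route fails as stated. Conditional Fatou gives $\e_s[Y_t]\le\liminf_n\e_s[Y_{t\wedge\tau_n}]$, which is the wrong direction; Fatou upgrades nonnegative local \emph{super}martingales to supermartingales, not local submartingales to submartingales, and uniform $L^1$-boundedness alone is likewise insufficient (one needs uniform integrability). The fix, however, is already contained in the domination you wrote down: by optional sampling for the true martingale $e^{-rt}S_t$, one has $0\le Y_{t\wedge\tau_n}\le e^{-r(t\wedge\tau_n)}S_{t\wedge\tau_n}=\e\left[e^{-rT}S_T\mid\mathcal{F}_{t\wedge\tau_n}\right]$, so the family $\{Y_{t\wedge\tau_n}\}_n$ is dominated by conditional expectations of a single integrable random variable and is therefore uniformly integrable; passing to the limit in $\e_s[Y_{t\wedge\tau_n}]\ge Y_{s\wedge\tau_n}$ then yields $\e_s[Y_t]\ge Y_s$. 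With that correction your argument is complete and otherwise coincides with the paper's.
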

\begin{proof}
	An application of Itô's lemma to the discounted surrender value process $Y$, defined in Lemma \ref{lemmaVarphiSG}, yields
	\begin{multline*}
		\diff Y_t=e^{-rt}F_t\Big(g_t(t,\,F_t)+(r-c(t,\,F_t)+\sigma^2)F_tg_{x}(t,\,F_t)+\frac{\sigma^2 F_t^2}{2}g_{xx}(t,\,F_t)\\
		-c(t,\,F_t)g(t,\,F_t)\Big)\diff t
		+e^{-rt}\sigma F_t\Big( g(t,F_t)+g_x(t,F_t)F_t\Big)\diff W_t.
	\end{multline*}
	Thus, if  
	\begin{equation*}
		g_t(t,\,x)+(r-c(t,\,x)+\sigma^2)xg_{x}(t,\,x)+\frac{\sigma^2 x^2}{2}g_{xx}(t,\,x)-c(t, x)g(t,\,x)\geq 0,
	\end{equation*}
	for all $(t,x)\in [0,T)\times\reals_+$, 
	$Y$  is a local submartingale. Furthermore, since $Y_t \leq e^{-rt} F_t \leq e^{-rt} S_t$, it follows from the properties of the geometric Brownian motion that for any $t < \infty$,
	\begin{align*}
		\sup_{\tau \leq t} \e[Y_\tau^2] \leq \sup_{\tau \leq t} \e[(e^{-r\tau}S_\tau)^2] < \infty,
	\end{align*}
	showing that $\{Y_\tau\}_{\tau \leq t}$ is uniformly integrable for any fixed $t < \infty$, and thus of class (DL) (see Definition \ref{defClassD}). This is sufficient to show that $Y$ is a true submartingale (see Proposition 1.7 of Chapter IV of \cite{revuz1990continuous}). 
	Then, by Lemma \ref{lemmaVarphiSG}, the discounted reward process ${\{Z_t\}_{0 \leq t\leq T}}$ is also a submartingale, and the optimal stopping time follows from Lemma \ref{lemmaOptStopTime}.
\end{proof}
\begin{remark} \label{rmkUniqunessT}
	In Section \ref{sectSurrRegion}, Theorem \ref{thmMainSurrenderRegion}\ref{thmMainSurrenderRegion2} shows that when the inequality in \eqref{eqEDPg} is satisfied for all $(t,x)\in[0,T)\times \reals_+$, the optimal stopping time $T$ is unique.
\end{remark}

To give a better idea of the impact of the fee and the surrender charge functions on the surrender incentives, we present some applications of Proposition \ref{propEDPSurreder}.
\begin{example}\label{ex1a}
	\begin{sloppypar}
		Let the fee rate be constant, that is, $c(t,x)=\bar{c}>0$, and let ${g(t,x)= e^{-k(T-t)}}$ for some $k>0$. To eliminate the incentive to surrender, it suffices to choose $k$ and $\bar{c}$ such that \eqref{eqEDPg} holds; a quick calculation yields $k \geq \bar{c}$.
		In this setting, it suffices to set $k$ at or above the fee rate $\bar{c}$ to eliminate optimal early surrenders. This result is well-known and discussed in \cite{BernardMackay2015}, Proposition 3.1 and \cite{Mackay2014}, Proposition 4.4.2. In particular, \cite{Mackay2014} shows that in the Black-Scholes setting, $g(t,x)= e^{-\bar{c}(T-t)}$ is the minimal surrender charge function that can be used to remove the surrender incentive when the fee rate is constant. 
	\end{sloppypar}
\end{example}
\begin{example}\label{ex1b}
	Let $g(t,x)=1- k\left(1-\frac{t}{T}\right)^3$ for some constant $k>0$,  as in \cite{bacinello2019variable}, \cite{Mackay2014} Section 4.3.2, and \cite{Mackay2017}. As explained in \cite{Mackay2014}, this form of surrender charge function mimics the surrender charges in the market, which are usually high in the first years of the contract and drop drastically thereafter. Assume also that the fee function depends only on time such that $c(t,x)=c(t)$ for all $x \in \reals_+$. In this example, we do not specify a particular form for the fee function; instead, \eqref{eqEDPg} is used to obtain the form $c(t)$ eliminating the surrender incentive. Simple calculations show that \eqref{eqEDPg} is satisfied if
	\begin{equation}
		c(t)\leq \frac{\frac{3k}{T}\left(1-t/T\right)^2}{1-k(1-t/T)^3}.\label{eqEx1bCt}
	\end{equation}
	Thus, any function $c(t)$ that satisfies this inequality will eliminate surrender incentives. For example, one could simply set $c(t)=\frac{3 k /T\left(1-t/T\right)^2}{1-k (1-t/T)^3}$.
	This example highlights the interplay between the fee and the surrender charge structures. In this setting, it would have been impossible to satisfy \eqref{eqEDPg} using a constant fee rate, except with $c=0$, as the function on the right-hand side of \eqref{eqEx1bCt} vanishes at maturity. Having a zero fee rate throughout the life of the VA contract is not a feasible solution from the insurer's perspective since the maturity guarantee must be financed. On the other hand, isolating $k$ in \eqref{eqEx1bCt} gives
	\begin{equation}
		k\geq \frac{c(t)}{3/T\left(1-t/T\right)^2+c(t)(1-t/T)^3}.\label{eqEx1bCt2}
	\end{equation}
	The right-hand side of the inequality goes to infinity as $t$ approaches $T$ if $c(t)$ is kept constant, which gives another reason to allow the fee to vary over time.
\end{example}
The examples above show that in order to efficiently eliminate early surrender incentives, surrender charges and fees should be structured conjointly.
\begin{remark}
	The results of Section \ref{subSectSurOptTrivial} can easily be extended to more general market models and fee structures, such as the VIX-linked fee discussed in \cite{cui2017}, \cite{kouritzin2018vix}, and \cite{MackayVachonCui2022VaCTMC}.
\end{remark}

\subsection{General Case}
In this section, we study the optimal surrender strategy for the variable annuity contract in the general case, when condition \eqref{eqEDPg} is not necessarily satisfied.
We first recall two definitions from probability and optimal stopping theory. 

\begin{defi}
	A process $X=\lbrace X_t\rbrace_{t\geq 0}$ defined on a filtered probability space  $\lbrace \Omega, \mathcal{F},\lbrace \mathcal{F}_t\rbrace_{t\geq 0},\p\rbrace$ is said to be \textit{optional} if it is measurable with respect to the sigma-algebra generated by the right-continuous and adapted processes.
\end{defi}

\begin{defi}\label{defClassD}
	A right-continuous adapted process $X=\lbrace X_t\rbrace_{t\geq 0}$ defined on a filtered probability space  $\lbrace \Omega, \mathcal{F},\lbrace \mathcal{F}_t\rbrace_{t\geq 0},\p\rbrace$ is said to be 
	\begin{itemize}
		\item of \textit{class (D)} if $\lbrace X_{\tau}, \tau\in\mathcal{T}_{0,\infty}\rbrace$ is uniformly integrable;
		\item of \textit{class (DL)} if for each $t \geq 0$, $\lbrace X_{\tau}, \tau\in\mathcal{T}_{0,t}\rbrace$ is uniformly integrable.
	\end{itemize}
\end{defi}
The next theorem is from Theorem 19 of \cite{bassan2002optimal}, which summarizes concisely the results of \cite{el1981aspects}. An advised reader will notice differences between the conditions stated in Theorem 19 of \cite{bassan2002optimal} and the ones of Theorem \ref{thmElkaroui81}. This is because \cite{bassan2002optimal} work with bounded reward functions so the class (D) condition of \cite{el1981aspects} is automatically satisfied in their setting.  
\begin{theorem}[\cite{el1981aspects}, Theorems 2.28, 2.31 and 2.41, see also \cite{bassan2002optimal}, Theorem 19]\label{thmElkaroui81}
	Let the discounted reward process $\lbrace Z_t\rbrace_{t\geq 0}$ defined on some filtered probability space  $\lbrace \Omega, \mathcal{F},\lbrace \mathcal{F}_t\rbrace_{t\geq 0},\p\rbrace$ be optional, non-negative and of class (D),
	and let $\lbrace J_t\rbrace_{ t\geq 0}$ denote its Snell envelope, that is
	\begin{equation*}
		J_t=\esssup_{\tau\geq t} \e[Z_{\tau}|\mathcal{F}_t].
	\end{equation*}
	Then,
	\begin{enumerate}[label=(\roman*)]
		\item $J$ is the smallest non-negative supermartingale which dominates $Z$ (\cite{el1981aspects}, Theorem 2.28). \label{enumElKarouiSM}
		\item A stopping time $\tau^\star\in\mathcal{T}_{0,\,\infty}$ is optimal if and only if $J_{\tau^\star}=Z_{\tau^\star}$ a.s. and $\lbrace J_{\tau^\star\land t}\rbrace_{t\geq 0}$ is a martingale (\cite{el1981aspects}, Theorem 2.31).\label{enumElKarouiMartingale}
		\item If the trajectories of $Z$ are upper semi-continuous, then 
		\begin{equation*}
			\tau^\star=\inf\lbrace t\geq 0 |J_t=Z_t\rbrace,
		\end{equation*}
		is the smallest optimal stopping time (\cite{el1981aspects}, Theorem 2.41).\label{enumElKarouiMinTau}
	\end{enumerate}
\end{theorem}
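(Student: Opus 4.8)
The plan is to prove the three assertions in sequence, using the non-negativity and class-(D) hypotheses throughout to ensure that every essential supremum is integrable and that optional sampling applies to the uniformly integrable (super)martingales involved. First I would establish \ref{enumElKarouiSM}. Choosing the admissible stopping time $\tau = t$ in the definition of $J_t$ gives $J_t \ge \e[Z_t \mid \mathcal{F}_t] = Z_t$ (as $Z$ is optional, hence adapted) and $J_t \ge 0$, so $J$ is a non-negative process dominating $Z$. The crux is the lattice (upward-directed) property of the family $\{\e[Z_\tau \mid \mathcal{F}_t] : \tau \in \mathcal{T}_{t,\infty}\}$: for $\tau_1,\tau_2 \ge t$, the time $\tau_1 \ind_A + \tau_2 \ind_{A^c}$ with $A = \{\e[Z_{\tau_1}\mid\mathcal{F}_t]\ge \e[Z_{\tau_2}\mid\mathcal{F}_t]\}\in\mathcal{F}_t$ is again a stopping time realizing the pointwise maximum of the two conditional expectations. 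This yields an increasing sequence $\tau_n$ with $\e[Z_{\tau_n}\mid\mathcal{F}_t]\uparrow J_t$, and the supermartingale inequality $\e[J_t\mid\mathcal{F}_s]\le J_s$ for $s\le t$ then follows from the tower property and monotone convergence. Minimality closes \ref{enumElKarouiSM}: if $\tilde{J}$ is any non-negative supermartingale dominating $Z$, optional sampling gives $\e[Z_\tau\mid\mathcal{F}_t]\le \e[\tilde{J}_\tau\mid\mathcal{F}_t]\le \tilde{J}_t$ for every $\tau\ge t$, and taking the essential supremum over $\tau$ gives $J_t\le\tilde{J}_t$.

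For \ref{enumElKarouiMartingale}, I would use that stopping the supermartingale $J$ at $\tau^\star$ keeps it a non-negative supermartingale, so $\e[J_{\tau^\star}\mid\mathcal{F}_0]\le J_0$, while domination gives $J_{\tau^\star}\ge Z_{\tau^\star}$. Optimality of $\tau^\star$ means $\e[Z_{\tau^\star}\mid\mathcal{F}_0]=J_0$, producing the chain $J_0\ge\e[J_{\tau^\star}\mid\mathcal{F}_0]\ge\e[Z_{\tau^\star}\mid\mathcal{F}_0]=J_0$. Both inequalities are therefore equalities: the first forces the stopped process $\{J_{\tau^\star\land t}\}$ to have constant expectation and hence, being a supermartingale, to be a martingale; the second, together with $J_{\tau^\star}\ge Z_{\tau^\star}$, forces $J_{\tau^\star}=Z_{\tau^\star}$ a.s. The converse is the same computation read backwards: the two conditions give $\e[Z_{\tau^\star}\mid\mathcal{F}_0]=\e[J_{\tau^\star}\mid\mathcal{F}_0]=J_0$, so $\tau^\star$ attains the value.

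Assertion \ref{enumElKarouiMinTau} is where I expect the real difficulty. The first step is to work with the right-continuous modification of $J$, whose existence relies on the class-(D) supermartingale regularity theory. Writing $\tau^\star=\inf\{t\ge 0 : J_t=Z_t\}$, I would verify the two conditions of \ref{enumElKarouiMartingale}: that $\{J_{\tau^\star\land t}\}$ is a martingale, by showing that the decreasing part in the Doob--Meyer decomposition of $J$ stays flat on $[0,\tau^\star)$ since the envelope strictly dominates the obstacle there; and that $J_{\tau^\star}=Z_{\tau^\star}$ a.s. The latter is exactly where the upper semi-continuity of the trajectories of $Z$ is indispensable: along $t_n\uparrow\tau^\star$ with $J_{t_n}>Z_{t_n}$, upper semi-continuity controls $\limsup_n Z_{t_n}\le Z_{\tau^\star}$, and matching this against the right-continuity of $J$ yields $J_{\tau^\star}\le Z_{\tau^\star}$, hence equality. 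Part \ref{enumElKarouiMartingale} then gives optimality, and minimality follows because any optimal $\sigma$ satisfies $J_\sigma=Z_\sigma$ by \ref{enumElKarouiMartingale}, so $\sigma$ already belongs to the set $\{J=Z\}$ and therefore $\sigma\ge\tau^\star$. The main obstacle is thus the matching of the \emph{upper} semi-continuity of $Z$ with the \emph{right}-continuity of $J$ at the hitting time, together with the claim that the decreasing part of $J$ remains flat up to $\tau^\star$; this regularity is precisely what the hypotheses of \cite{el1981aspects} supply and what the continuous-reward optimal-stopping theorems cited earlier cannot deliver.
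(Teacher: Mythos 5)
First, a note on the comparison itself: the paper does not prove this theorem --- it is quoted directly from \cite{el1981aspects} (Theorems 2.28, 2.31 and 2.41), so there is no in-paper proof to measure your attempt against, and I can only assess your reconstruction against the standard arguments those references contain. Your parts (i) and (ii) are correct and standard: the upward-directedness of the family $\{\e[Z_\tau\mid\mathcal F_t]:\tau\geq t\}$ and monotone convergence to transfer the supermartingale inequality to the essential supremum, optional sampling against any dominating non-negative supermartingale for minimality, and the equality chain $J_0\geq\e[J_{\tau^\star}]\geq\e[Z_{\tau^\star}]=J_0$, read in both directions, for the characterization of optimal stopping times.

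Part (iii), which you rightly identify as the crux, contains a genuine gap: the key limiting step is taken in the wrong direction. You approximate $\tau^\star$ from the left, along $t_n\uparrow\tau^\star$ with $J_{t_n}>Z_{t_n}$, and claim that upper semi-continuity of $Z$ matched with right-continuity of $J$ yields $J_{\tau^\star}\leq Z_{\tau^\star}$. This cannot work for two reasons: (a) right-continuity of $J$ gives no control on $\lim_n J_{t_n}$ when $t_n$ increases to $\tau^\star$ --- that limit is the left limit $J_{\tau^\star-}$, which may differ from $J_{\tau^\star}$; and (b) the inequality $J_{t_n}>Z_{t_n}$ bounds $J$ from \emph{below} by $Z$, so no passage to the limit can produce the upper bound $J_{\tau^\star}\leq Z_{\tau^\star}$ that you need. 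The correct argument approaches $\tau^\star$ from the right along the contact set: pathwise, by definition of the infimum, either $J_{\tau^\star}=Z_{\tau^\star}$ outright, or there exist times $s_n\downarrow\tau^\star$ with $J_{s_n}=Z_{s_n}$, and then right-continuity of $J$ together with upper semi-continuity of $Z$ gives $J_{\tau^\star}=\lim_n J_{s_n}=\lim_n Z_{s_n}\leq\limsup_n Z_{s_n}\leq Z_{\tau^\star}$, while $J\geq Z$ closes the equality. (El Karoui's proof in fact proceeds through the $\varepsilon$-optimal times $\tau_\varepsilon=\inf\{t\geq 0: Z_t\geq J_t-\varepsilon\}$, which simultaneously delivers the martingale property of the stopped envelope; your assertion that the decreasing part of the Doob--Meyer/Mertens decomposition ``stays flat'' on $[0,\tau^\star)$ is precisely what must be proved, via this approximation and class-(D) uniform integrability, not inferred from strict domination alone. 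Right-continuity of $J$ itself also requires justification at this level of generality rather than being a free consequence of class (D).) Your minimality argument --- any optimal $\sigma$ satisfies $J_\sigma=Z_\sigma$ by (ii), hence $\sigma\geq\tau^\star$ --- is fine as stated.
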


In light of Theorem \ref{thmElkaroui81}, since our discounted reward process is positive and adapted with upper semi-continuous trajectories, showing the existence of an optimal stopping time is simply a matter of verifying that $Z$ is of class (D).
\begin{corollary}\label{corrOptStop1} Let $\varphi$ and $v$ be defined by \eqref{eqRewardFct} and \eqref{eqAmOptVA}, respectively. The stopping time given by
	\begin{equation}
		\tau^x_t=\inf\left\lbrace t \leq s\leq T \big|\,v(s, F_s^{t,\,x})= \varphi(s, F_s^{t,\,x})\right\rbrace.\label{eqOtpStop1}
	\end{equation}
	is optimal for \eqref{eqAmOptVA}.
\end{corollary}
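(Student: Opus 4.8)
The plan is to apply part \ref{enumElKarouiMinTau} of Theorem \ref{thmElkaroui81} to the discounted reward process $Z=\{Z_t\}_{0\leq t\leq T}$, $Z_t=e^{-rt}\varphi(t,F_t)$ (extended to $[0,\infty)$ by freezing, $Z_s:=Z_{s\wedge T}$). Once the hypotheses of that theorem are verified, it yields that $\tau^\star:=\inf\{s\geq 0:J_s=Z_s\}$ is the smallest optimal stopping time, where $J$ is the Snell envelope of $Z$; it then remains to rewrite this hitting time in the form \eqref{eqOtpStop1}. So the argument splits into two parts: checking that $Z$ is non-negative, optional, of class (D) and has upper semi-continuous trajectories, and identifying the Snell envelope with the discounted value function.

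For the regularity of $Z$: non-negativity is immediate from \eqref{eqRewardFct}, since $g>0$, $G>0$ and $F_t\in\reals_+$. For upper semi-continuity, on $[0,T)$ the map $s\mapsto Z_s$ is continuous (as $g\in C^{1,2}$ and $F$ has continuous paths), while $g(T,\cdot)=1$ gives $\lim_{s\to T^-}Z_s=e^{-rT}F_T\leq e^{-rT}\max(G,F_T)=Z_T$, so the only discontinuity is an upward jump at $T$ (cf.\ Remark \ref{rmkVarphiDiscoun}), which is exactly upper semi-continuity. Optionality follows from the decomposition $Z_s=e^{-rs}g(s,F_s)F_s+e^{-rT}(G-F_T)^+\ind_{\{s=T\}}$ on $[0,T]$, exhibiting $Z$ as a continuous adapted process plus an optional process supported on the graph of the deterministic time $T$. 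For class (D), I would use $g\leq 1$ to bound $\varphi(t,x)\leq G+x$, so that $0\leq Z_t\leq G+e^{-rt}F_t$; since the fee rate is non-negative, $e^{-rt}F_t\leq S_0\,e^{\sigma W_t-\sigma^2 t/2}$, a non-negative martingale on $[0,T]$. The family $\{S_0\,e^{\sigma W_\tau-\sigma^2\tau/2}:\tau\in\mathcal T_{0,T}\}$ is uniformly integrable by optional sampling, and domination by a uniformly integrable family plus a constant preserves uniform integrability, so $\{Z_\tau:\tau\in\mathcal T_{0,T}\}$ is of class (D).

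It then remains to identify the Snell envelope. Using the Markov property of $F$ together with the definition \eqref{eqAmOptVA} of $v$, I would show that for each $s$,
\[
J_s=\esssup_{\tau\in\mathcal T_{s,T}}\e\!\left[e^{-r\tau}\varphi(\tau,F_\tau)\,\middle|\,\mathcal F_s\right]=e^{-rs}\,v(s,F_s)\qquad\text{a.s.}
\]
Since $e^{-rs}>0$, the coincidence set $\{J_s=Z_s\}$ equals $\{v(s,F_s)=\varphi(s,F_s)\}$, so the smallest optimal stopping time of Theorem \ref{thmElkaroui81}\ref{enumElKarouiMinTau} reads $\inf\{s\geq 0:v(s,F_s)=\varphi(s,F_s)\}$. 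Re-parametrising with initial condition $F_t=x$ and restricting to $[t,T]$ gives precisely $\tau^x_t$ in \eqref{eqOtpStop1}, establishing its optimality.

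The main obstacle is this last step. Establishing $J_s=e^{-rs}v(s,F_s)$ as an equality of \emph{processes} (and not merely for each fixed $s$), so that the two hitting times genuinely coincide, requires care: one needs measurability of $v$ and suitable path regularity of both sides (e.g.\ the upper semi-continuous regularisation of the Snell envelope), together with the Markovian dynamic-programming reduction that turns the essential supremum over all $\mathcal F$-stopping times $\geq s$ into the supremum over $\mathcal T_{s,T}$ defining $v$. By contrast, the verification of class (D) is routine by comparison with a martingale, and the path regularity of $Z$ is essentially Remark \ref{rmkVarphiDiscoun}; the Markovian identification of the Snell envelope is where the real work lies.
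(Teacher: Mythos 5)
Your proposal is correct and takes essentially the same route as the paper: verify that the discounted reward process is non-negative, optional (adapted), of class (D), and has upper semi-continuous trajectories, then invoke Theorem \ref{thmElkaroui81}\ref{enumElKarouiMinTau}. You are in fact more thorough than the paper's own proof, which obtains class (D) by domination by $\sup_{0\leq t\leq T}e^{-rt}\varphi(t,F_t)$ (integrability left to Doob's maximal inequality) rather than your cleaner martingale bound via optional sampling, and which leaves entirely implicit the Markovian identification of the Snell envelope with $e^{-rs}v(s,F_s)$ — the step you rightly single out as where the real work lies.
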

\begin{proof}
	Since $Z = \{Z_t\}_{0\leq t \leq T}$, with $Z_{t}=e^{-rt}\varphi(t,F_{t})$, is dominated by an integrable non-negative random variable, $\sup_{0\leq t\leq T} e^{-rt}\varphi(t,F_t)$, it is of class (D). The result then follows from Theorem \ref{thmElkaroui81}.
\end{proof}

Theorem \ref{thmContRep}, inspired by \cite{palczewski2010finite} and presented below, provides an alternative representation for the optimal stopping problem in \eqref{eqAmOptVA} and will be essential to prove the continuity of the value function.

\begin{theorem}\label{thmContRep}
	The value function $v$ defined in \eqref{eqAmOptVA} can be written as
	\begin{equation}\label{eqVASurOpt3}
		v(t,x)=\sup_{\tau\in\mathcal{T}_{t,\, T}}\e\left[e^{-r(\tau-t)}\left(g(\tau,F_{\tau}^{t,x})F_{\tau}^{t,x} \vee h(\tau,\, F_{\tau}^{t,x})\right) \right],	
	\end{equation}
	where
	\begin{equation*}
		h(t,\,x)=\e\left[e^{-r(T-t)}\max(G,F_{T}^{t,x})\right], \quad t\in[0,T].
	\end{equation*}
\end{theorem}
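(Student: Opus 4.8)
The plan is to establish the two inequalities $v \le \tilde v$ and $\tilde v \le v$, where $\tilde v(t,x)$ denotes the right-hand side of \eqref{eqVASurOpt3}; throughout I write $\psi(t,x) := g(t,x)\,x \vee h(t,x)$ for the candidate (continuous) reward. The easy direction is $\tilde v \ge v$. Since $g(T,\cdot)=1$ and $h(T,x)=\max(G,x)$, we have $\psi(T,x) = x \vee \max(G,x) = \max(G,x) = \varphi(T,x)$, while for $t<T$, $\psi(t,x) \ge g(t,x)x = \varphi(t,x)$. Hence $\psi \ge \varphi$ on all of $[0,T]\times\reals_+$, and comparing the suprema in \eqref{eqAmOptVA} and \eqref{eqVASurOpt3} term by term over $\tau\in\mathcal{T}_{t,T}$ gives $\tilde v(t,x)\ge v(t,x)$.

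The substance is the reverse inequality $\tilde v \le v$, for which I would associate to each admissible $\tau$ in the $\psi$-problem a (generally different) admissible $\hat\tau$ in the original $\varphi$-problem achieving the same expected payoff. Fix $(t,x)$ and $\tau\in\mathcal{T}_{t,T}$, set $A := \{g(\tau,F_\tau^{t,x})F_\tau^{t,x}\ge h(\tau,F_\tau^{t,x})\}\in\mathcal{F}_\tau$, and define $\hat\tau := \tau\,\ind_A + T\,\ind_{A^c}$; that is, surrender at $\tau$ where the surrender value dominates, and otherwise hold to maturity. One checks $\hat\tau\in\mathcal{T}_{t,T}$: for $s<T$, $\{\hat\tau\le s\} = \{\tau\le s\}\cap A\in\mathcal{F}_s$ because $A\in\mathcal{F}_\tau$, while $\{\hat\tau\le T\}=\Omega$.

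On $A$ the first term of $\psi$ attains the maximum, so $\varphi(\hat\tau,F_{\hat\tau}^{t,x})\ind_A = \varphi(\tau,F_\tau^{t,x})\ind_A = \psi(\tau,F_\tau^{t,x})\ind_A$ (the case $\tau=T$ being covered by $\psi(T,\cdot)=\varphi(T,\cdot)$). On $A^c$ we have $\hat\tau = T$, so its contribution is $\e[\ind_{A^c}e^{-r(T-t)}\max(G,F_T^{t,x})]$; conditioning on $\mathcal{F}_\tau$ and invoking the strong Markov property of the time-inhomogeneous diffusion \eqref{eqEDSFs} yields
\[
\e\!\left[e^{-r(T-\tau)}\max(G,F_T^{t,x})\,\big|\,\mathcal{F}_\tau\right] = h(\tau,F_\tau^{t,x}),
\]
so this contribution equals $\e[\ind_{A^c}e^{-r(\tau-t)}h(\tau,F_\tau^{t,x})] = \e[\ind_{A^c}e^{-r(\tau-t)}\psi(\tau,F_\tau^{t,x})]$, the last step using that $h$ attains the maximum on $A^c$. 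Adding the two pieces gives $\e[e^{-r(\hat\tau-t)}\varphi(\hat\tau,F_{\hat\tau}^{t,x})] = \e[e^{-r(\tau-t)}\psi(\tau,F_\tau^{t,x})]$. Since $\hat\tau\in\mathcal{T}_{t,T}$, the left-hand side is at most $v(t,x)$, and taking the supremum over $\tau$ yields $\tilde v\le v$, completing the proof.

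I expect the main obstacle to be the identification of the continuation value as $h(\tau,F_\tau^{t,x})$ at a \emph{general} (rather than deterministic) stopping time, which requires the strong Markov property of the space-time process $(s,F_s)$, together with the verification that $\hat\tau$ is a genuine stopping time and the careful matching of $\varphi$ and $\psi$ across the discontinuity at $T$. An alternative, closer to the paper's stated machinery, is to run exactly this pointwise construction on grid-valued stopping times to obtain $b_N=\tilde b_N$ for each $N$ (with $\tilde b_N$ the Bermudan $\psi$-value), and then pass to the limit via Lemma \ref{lemmaBermAmConv} and Remark \ref{rmkBermAmConv}, the latter being applicable because $s\mapsto e^{-rs}\psi(s,F_s)$ is continuous, hence a non-negative adapted c\`adl\`ag reward process.
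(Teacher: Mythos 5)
Your proof is correct, but it takes a genuinely different route from the paper's. The paper proves the identity by discretization: it introduces the Bermudan value functions on the grid $\mathcal{H}_N$, shows by backward induction that the Snell envelopes of the discounted discontinuous and continuous reward processes coincide (the key step being that $\tilde{J}^N_n$ always dominates $\e[e^{-rt_{n+1}}h(t_{n+1},F_{t_{n+1}})|\mathcal{F}_{t_n}]$, so the extra term $h$ in the modified reward never changes the recursion), and then passes to the limit $N\to\infty$ via Lemma \ref{lemmaBermAmConv} and Remark \ref{rmkBermAmConv}. You instead prove the two inequalities directly in continuous time: $v\le\tilde v$ from the pointwise domination $\varphi\le\psi$, and $\tilde v\le v$ by transforming an arbitrary stopping time $\tau$ into $\hat\tau=\tau\ind_A+T\ind_{A^c}$ and matching the expected payoffs. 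Notably, your transformation is exactly the construction the paper deploys later, in Corollary \ref{corrOptStop}, to build the optimal stopping time $\bar{\tau}$ from $\tilde{\tau}$; in effect you merge Theorem \ref{thmContRep} and Corollary \ref{corrOptStop} into one argument, and your computation on $A^c$ mirrors the chain of equalities in that corollary's proof. As to what each approach buys: yours is shorter, avoids the Bermudan machinery entirely, and makes the financial content transparent (any strategy for the continuous reward is emulated for the original reward by surrendering where the surrender value dominates and holding to maturity otherwise); the paper's route needs only the simple Markov property at deterministic grid times, whereas yours needs $\e[e^{-r(T-\tau)}\max(G,F_T^{t,x})\,|\,\mathcal{F}_\tau]=h(\tau,F_\tau^{t,x})$ at a general stopping time, i.e.\ the strong Markov property of $(s,F_s)$ --- or, equivalently, the identification of the martingale $\e[e^{-rT}\max(G,F_T)\,|\,\mathcal{F}_s]$ with the continuous process $e^{-rs}h(s,F_s)$ followed by optional sampling. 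You correctly flag this as the main obstacle; it deserves a sentence of justification since the fee $C$ is only assumed to make \eqref{eqEDSFs} well posed, but it is standard under the paper's assumptions, and the paper itself performs the identical manipulation at the stopping time $\tilde{\tau}$ in Corollary \ref{corrOptStop}. Your suggested hybrid --- running the same transformation on grid-valued stopping times, where the conditional-expectation identity follows from the deterministic-time Markov property alone, then invoking Lemma \ref{lemmaBermAmConv} and Remark \ref{rmkBermAmConv} --- closes even that gap and amounts to a streamlined version of the paper's own proof.
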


\begin{proof}
	Without loss of generality, let $t=0$ and note that for $0\leq s \leq T$,
	\begin{equation*}
		h(s,F_s)=\e\left[e^{-r(T-s)}\max(G,F_{T})|\mathcal{F}_s\right].
	\end{equation*}
	
	We define the Snell envelope of the discounted continuous reward process $\{\tilde J_s\}_{0\leq s \leq T}$ by
	\begin{align*}
		\tilde J_s = \esssup_{\tau \in \mathcal T_{s,T}} 
		\e[{e^{-r\tau}g(\tau,F_\tau)F_\tau \vee h(\tau,F_\tau)\mid \mathcal F_s}]
	\end{align*}
	for $0 \leq s \leq T$, and recall that the Snell envelope of the discounted original reward process $\{J_s\}_{0\leq s\leq T}$ is defined by 
	\begin{align*}
		J_s = \esssup_{\tau \in \mathcal T_{s,T}} 
		\e\left[e^{-r\tau}\varphi(\tau,F_\tau)\mid \mathcal F_s\right].
	\end{align*}
	We show that $\tilde J_s = J_s$ a.s. for all $0 \leq s \leq T$.
	
	First observe that since $g(t,x)x \vee h(t,x) \geq \varphi(t,x)$ for all $(t,x) \in [0,T]\times\reals_+$, $\tilde J_s \geq J_s$ for all $0 \leq s \leq T$.
	
	Now we show that  $\tilde J_s \leq J_s$ for all $0 \leq s \leq T$.
	By definition of $v$, $g(t,x)x \vee h(t,x) \leq v(t,x)$ for all $(t,x) \in [0,T]\times\reals_+$. 
	It follows that
	\begin{align*}
		\tilde J_s 
		\leq 
		\esssup_{\tau \in \mathcal T_{s,T}} \e\left[e^{-r\tau} v(\tau, F_\tau) \mid \mathcal F_s\right]
		= \esssup_{\tau \in \mathcal T_{s,T}} \e\left[J_\tau \mid \mathcal F_s\right].
	\end{align*}
	Since $\{J_s\}_{0\leq s \leq T}$ is a supermartingale, by the optimal sampling theorem, $\e[J_\tau \mid \mathcal F_s] \leq J_s$ for any $\tau \in \mathcal T_{s,T}$, and thus 
	$\esssup_{\tau \in \mathcal T_{s,T}} \e\left[ J_\tau \mid\mathcal F_s\right] \leq J_s$,
	which completes the proof.

	\end{proof}
\begin{remark}\label{rmkContinuityModifiedReward}
	The modified reward function $g(t,x)x\vee h(t,\,x)$ is continuous since it is the maximum of two continuous functions. 
	Indeed, $g(t,x)x$ is continuous by definition while the continuity of $h(t,\,x)$ follows from Theorem 3 of \cite{veretennikov1981strong}, which is the analog of the Feynman-Kac Theorem (see, for instance, \cite{karatzasShreve1991}, Theorem 5.7.6). 
\end{remark}
Similarly to \cite{palczewski2010finite}, we use the new representation of the value function in \eqref{eqVASurOpt3} to construct another optimal stopping time for the original problem in \eqref{eqAmOptVA}. Define 
\begin{equation}
	\tilde{\tau}_t^x:=\inf\left\lbrace t\leq s\leq T |v(s,F_s^{t,x})= g(s,F_{s}^{t,x})F_{s}^{t,x} \vee h(s,\, F_{s}^{t,x})\right\rbrace, \label{eqOtpStop2}
\end{equation}
the smallest optimal stopping time for \eqref{eqVASurOpt3}, as per Theorem \ref{thmElkaroui81}\ref{enumElKarouiMinTau}.
\begin{corollary} \label{corrOptStop}
	Let $h$ and $g$ be defined as in \eqref{eqVAnosurrender} and \eqref{eqRewardFct}, respectively. The stopping time defined by
	\begin{equation}
		\bar{\tau}^x_t=\begin{cases}
			\tilde{\tau}^x_t, & g(\tilde{\tau}^x_t,F_{\tilde{\tau}^x_t}^{t,x})F_{\tilde{\tau}^x_t}^{t,x}\geq h(\tilde{\tau}^x_t,F_{\tilde{\tau}^x_t}^{t,x}),\\
			T, & g(\tilde{\tau}^x_t,F_{\tilde{\tau}^x_t}^{t,x})F_{\tilde{\tau}^x_t}^{t,x}< h(\tilde{\tau}^x_t,F_{\tilde{\tau}^x_t}^{t,x}), \label{eqOtpStop3}
		\end{cases}
	\end{equation}
	is optimal for \eqref{eqAmOptVA}.
\end{corollary}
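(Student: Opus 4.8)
The plan is to lean entirely on the alternative representation established in Theorem~\ref{thmContRep}, which shows that $v$ is \emph{simultaneously} the value function of the original problem \eqref{eqAmOptVA} and of the modified problem \eqref{eqVASurOpt3} with the continuous reward $g(s,x)x\vee h(s,x)$. Since that reward is continuous (Remark~\ref{rmkContinuityModifiedReward}), hence upper semi-continuous, and its discounted process is non-negative, optional and of class (D) by an argument analogous to Corollary~\ref{corrOptStop1}, Theorem~\ref{thmElkaroui81}\ref{enumElKarouiMinTau} guarantees that $\tilde{\tau}^x_t$ from \eqref{eqOtpStop2} is optimal for \eqref{eqVASurOpt3}, so that (writing $\tilde{\tau}:=\tilde{\tau}^x_t$ for brevity)
\begin{equation*}
	v(t,x)=\e\left[e^{-r(\tilde{\tau}-t)}\left(g(\tilde{\tau},F^{t,x}_{\tilde{\tau}})F^{t,x}_{\tilde{\tau}}\vee h(\tilde{\tau},F^{t,x}_{\tilde{\tau}})\right)\right].
\end{equation*}
It then suffices to show that $\bar{\tau}^x_t$ from \eqref{eqOtpStop3}, plugged into the \emph{original} reward $\varphi$, attains this same value.

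First I would verify that $\bar{\tau}^x_t$ is a bona fide stopping time with values in $[t,T]$. Setting $A:=\{g(\tilde{\tau},F^{t,x}_{\tilde{\tau}})F^{t,x}_{\tilde{\tau}}\geq h(\tilde{\tau},F^{t,x}_{\tilde{\tau}})\}$, the event $A$ is $\mathcal{F}_{\tilde{\tau}}$-measurable because $g$, $h$ and $F$ evaluated at the stopping time $\tilde{\tau}$ are $\mathcal{F}_{\tilde{\tau}}$-measurable; consequently $\bar{\tau}^x_t=\tilde{\tau}\,\ind_A+T\,\ind_{A^c}$ is a stopping time.

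The core is a split of $\e[e^{-r(\bar{\tau}^x_t-t)}\varphi(\bar{\tau}^x_t,F^{t,x}_{\bar{\tau}^x_t})]$ along $A$ and $A^c$. On $A$, where $\bar{\tau}^x_t=\tilde{\tau}$, the modified reward is realized by its first term, and this coincides with the original reward: for $\tilde{\tau}<T$ we have $\varphi(\tilde{\tau},F_{\tilde{\tau}})=g(\tilde{\tau},F_{\tilde{\tau}})F_{\tilde{\tau}}$ directly, while on $A\cap\{\tilde{\tau}=T\}$ the inequality forces $F_T\geq G$, so $\varphi(T,F_T)=\max(G,F_T)=F_T=g(T,F_T)F_T$. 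On $A^c$, where $\psi(\tilde{\tau},F_{\tilde{\tau}})=h(\tilde{\tau},F_{\tilde{\tau}})$ and $\bar{\tau}^x_t=T$, I would convert the continued-to-maturity payoff back into $h$ via the tower property: since $A^c\in\mathcal{F}_{\tilde{\tau}}$ and, by the Markov property of $F$, $h(\tilde{\tau},F^{t,x}_{\tilde{\tau}})=\e[e^{-r(T-\tilde{\tau})}\max(G,F^{t,x}_T)\,|\,\mathcal{F}_{\tilde{\tau}}]$, we obtain
\begin{equation*}
	\e\left[\ind_{A^c}e^{-r(T-t)}\max(G,F^{t,x}_T)\right]=\e\left[\ind_{A^c}e^{-r(\tilde{\tau}-t)}h(\tilde{\tau},F^{t,x}_{\tilde{\tau}})\right].
\end{equation*}
Adding the two contributions recovers $\e[e^{-r(\tilde{\tau}-t)}(g(\tilde{\tau},F_{\tilde{\tau}})F_{\tilde{\tau}}\vee h(\tilde{\tau},F_{\tilde{\tau}}))]=v(t,x)$, which is exactly optimality of $\bar{\tau}^x_t$ for \eqref{eqAmOptVA}.

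The main obstacle is really just the $A^c$ step, and it is bookkeeping rather than a deep difficulty: one must ensure the decision event $A$ is measurable at the correct stopping time so the optional-sampling/tower manipulation is licit, and check that the boundary case $\tilde{\tau}=T$ is absorbed uniformly into both branches (it is, since $h(T,\cdot)=\max(G,\cdot)$ and $e^{-r(T-\tilde{\tau})}=1$ there). Everything else is the case-by-case verification that the original payoff collected by $\bar{\tau}^x_t$ equals the modified payoff collected by $\tilde{\tau}$.
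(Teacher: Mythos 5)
Your proof is correct and takes essentially the same route as the paper: invoke Theorem~\ref{thmContRep} together with Theorem~\ref{thmElkaroui81} to obtain optimality of $\tilde{\tau}^x_t$ for the continuous-reward problem, split the expectation along the event $A=\{g(\tilde{\tau},F_{\tilde{\tau}}^{t,x})F_{\tilde{\tau}}^{t,x}\geq h(\tilde{\tau},F_{\tilde{\tau}}^{t,x})\}$, and use the tower property on $A^c$ to trade $h(\tilde{\tau},F_{\tilde{\tau}}^{t,x})$ for the discounted maturity payoff, recovering $\e[e^{-r(\bar{\tau}^x_t-t)}\varphi(\bar{\tau}^x_t,F_{\bar{\tau}^x_t}^{t,x})]=v(t,x)$. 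Your explicit verification that $A\in\mathcal{F}_{\tilde{\tau}}$ and that on $A\cap\{\tilde{\tau}=T\}$ the inequality forces $F_T\geq G$ (so $\varphi(T,F_T)=g(T,F_T)F_T$) makes rigorous two points the paper's computation treats implicitly; the only blemish is the stray symbol $\psi$, which should read the modified reward $g(\tilde{\tau},F_{\tilde{\tau}})F_{\tilde{\tau}}\vee h(\tilde{\tau},F_{\tilde{\tau}})$.
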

\begin{proof}
	To keep the notation lighter, 
	we write $\bar{\tau}=\bar{\tau}^{x}_t$ and $\tilde{\tau}=\tilde{\tau}^{x}_t$. By Theorems \ref{thmElkaroui81} and \ref{thmContRep}, we have
	\begin{align*}	
		v(t,x)
		&=\e\left[e^{-r(\tilde{\tau}-t)}\left( g(\tilde{\tau},F_{\tilde{\tau}}^{t,x})F_{\tilde{\tau}}^{t,x} \vee h(\tilde{\tau},\, F_{\tilde{\tau}}^{t,x})\right) \right]\\
		&=\e\left[e^{-r(\tilde{\tau}-t)}g(\tilde{\tau},F_{\tilde{\tau}}^{t,x})F_{\tilde{\tau}}^{t,x}\ind_{\lbrace  g(\tilde{\tau},F_{\tilde{\tau}}^{t,x})F_{\tilde{\tau}}^{t,x}\geq h(\tilde{\tau},F_{\tilde{\tau}}^{t,x}) \rbrace} \right.\\
		& \qquad +\left. e^{-r(\tilde{\tau}-t)}h(\tilde{\tau},\, F_{\tilde{\tau}}^{t,x})\ind_{\lbrace h(\tilde{\tau}, F_{\tilde{\tau}}^{t,x})> g(\tilde{\tau},F_{\tilde{\tau}}^{t,x})F_{\tilde{\tau}}^{t,x} \rbrace }\right]\\
		&=\e\left[e^{-r(\tilde{\tau}-t)}g(\tilde{\tau},F_{\tilde{\tau}}^{t,x})F_{\tilde{\tau}}^{t,x}\ind_{\lbrace  g(\tilde{\tau},F_{\tilde{\tau}}^{t,x})F_{\tilde{\tau}}^{t,x}\geq h(\tilde{\tau},F_{\tilde{\tau}}^{t,x}) \rbrace} \right]\\
		& \qquad +\e\left[\e\left[e^{-r(T-t)}\max(G,F_{T}^{t,x})\ind_{ \lbrace h(\tilde{\tau}, F_{\tilde{\tau}}^{t,x})> g(\tilde{\tau},F_{\tilde{\tau}}^{t,x})F_{\tilde{\tau}}^{t,x} \rbrace}\Big|\mathcal{F}_{\tilde{\tau}}\right]\right]\\
		&=	\e\left[e^{-r(\tilde{\tau}-t)}g(\tilde{\tau},F_{\tilde{\tau}}^{t,x})F_{\tilde{\tau}}^{t,x}\ind_{\lbrace  g(\tilde{\tau},F_{\tilde{\tau}}^{t,x})F_{\tilde{\tau}}^{t,x}\geq h(\tilde{\tau},F_{\tilde{\tau}}^{t,x}) \rbrace} \right]\\
		& \qquad +\e\left[e^{-r(T-t)}\max(G,F_{T}^{t,x})\ind_{ \lbrace h(\tilde{\tau}, F_{\tilde{\tau}}^{t,x})> g(\tilde{\tau},F_{\tilde{\tau}}^{t,x})F_{\tilde{\tau}}^{t,x} \rbrace}\right]\\
		&=\e\left[e^{-r(\bar{\tau}-t)}\varphi(\bar{\tau}, F_{\bar{\tau}}^{t,x})\right].
	\end{align*}
	\end{proof}
It is possible to show that $\bar{\tau}_t ^{x} =\tau_t^{x}$ a.s., where $\tau_t^x$ is the smallest optimal stopping time for the original problem defined in Corollary \ref{corrOptStop1}. 
A comparison of the three stopping times is given below.
\begin{lemma}\label{lemmaStoppingTimes}
	For $0 \leq t \leq T$, the optimal stopping times defined in \eqref{eqOtpStop1}, \eqref{eqOtpStop2} and \eqref{eqOtpStop3}  satisfy $\tilde{\tau}_t^{x}\leq \tau_t^{x}$ a.s. and $\tau_t^{x}=\bar{\tau}_t^{x}$ a.s.
\end{lemma}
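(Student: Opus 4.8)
The plan is to compare the hitting sets that define the three stopping times, exploiting that $v$ dominates both reward functions and that the two rewards are ordered away from maturity. Write $\psi(s,x):=g(s,x)x\vee h(s,x)$ for the continuous reward function of Theorem \ref{thmContRep}. Evaluating each representation of $v$ at the immediate stopping rule $\tau=s$ gives $v(s,x)\geq\varphi(s,x)$ and $v(s,x)\geq\psi(s,x)$ for every $(s,x)$. Moreover, for $s<T$ one has $\psi(s,x)=g(s,x)x\vee h(s,x)\geq g(s,x)x=\varphi(s,x)$, while at maturity $g(T,\cdot)=1$ and $h(T,x)=\max(G,x)$ yield $\psi(T,x)=x\vee\max(G,x)=\max(G,x)=\varphi(T,x)$. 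These elementary comparisons are the backbone of the argument.

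First I would prove $\tilde\tau_t^{x}\leq\tau_t^{x}$. It suffices to establish the path-wise inclusion $\{s\in[t,T]:v(s,F_s^{t,x})=\varphi(s,F_s^{t,x})\}\subseteq\{s\in[t,T]:v(s,F_s^{t,x})=\psi(s,F_s^{t,x})\}$, since the infimum over the larger set cannot exceed the infimum over the smaller one. At any $s<T$ belonging to the left-hand set, the chain $\psi(s,F_s)\leq v(s,F_s)=\varphi(s,F_s)\leq\psi(s,F_s)$ (the last inequality because $\varphi\leq\psi$ for $s<T$) forces $v(s,F_s)=\psi(s,F_s)$; at $s=T$ both events hold because $\varphi(T,\cdot)=\psi(T,\cdot)=\max(G,\cdot)=v(T,\cdot)$. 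The inclusion, and hence $\tilde\tau_t^{x}\leq\tau_t^{x}$, follows.

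Next, for $\tau_t^{x}\leq\bar\tau_t^{x}$ I would argue by the same case split used to define $\bar\tau_t^{x}$ in \eqref{eqOtpStop3}, writing $\tilde\tau=\tilde\tau_t^{x}$. Since $\tilde\tau$ is optimal for \eqref{eqVASurOpt3}, Theorem \ref{thmElkaroui81} guarantees $v(\tilde\tau,F_{\tilde\tau}^{t,x})=\psi(\tilde\tau,F_{\tilde\tau}^{t,x})$ a.s. On the event $\{g(\tilde\tau,F_{\tilde\tau}^{t,x})F_{\tilde\tau}^{t,x}\geq h(\tilde\tau,F_{\tilde\tau}^{t,x})\}$ we have $\bar\tau=\tilde\tau$ and there $\psi(\tilde\tau,F_{\tilde\tau}^{t,x})=g(\tilde\tau,F_{\tilde\tau}^{t,x})F_{\tilde\tau}^{t,x}$; if in addition $\tilde\tau<T$ this equals $\varphi(\tilde\tau,F_{\tilde\tau}^{t,x})$, so $\tilde\tau$ lies in the hitting set defining $\tau$, giving $\tau\leq\tilde\tau=\bar\tau$ (and, combined with the first part, $\tau=\tilde\tau$), whereas if $\tilde\tau=T$ then trivially $\tau\leq T=\bar\tau$. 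On the complementary event $\{g(\tilde\tau,F_{\tilde\tau}^{t,x})F_{\tilde\tau}^{t,x}<h(\tilde\tau,F_{\tilde\tau}^{t,x})\}$ we have $\bar\tau=T$ by definition, so $\tau\leq T=\bar\tau$ holds trivially. In all cases $\tau_t^{x}\leq\bar\tau_t^{x}$ a.s.

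The only delicate points, rather than genuine obstacles, are the maturity boundary, where the discontinuity of $\varphi$ at $T$ must be reconciled with the continuity of $\psi$ through $g(T,\cdot)=1$ and $h(T,x)=\max(G,x)$, and the a.s. identity $v(\tilde\tau,F_{\tilde\tau}^{t,x})=\psi(\tilde\tau,F_{\tilde\tau}^{t,x})$. The latter is not a definitional tautology but the statement that the Snell envelope meets the reward at the optimal time, which I would invoke from Theorem \ref{thmElkaroui81} together with the upper semi-continuity of the discounted reward process established in Remark \ref{rmkContinuityModifiedReward}. Everything else reduces to bookkeeping on the two hitting sets.
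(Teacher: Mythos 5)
Your proof is correct, and it takes a route that differs from the paper's in both halves. For the first inequality $\tilde{\tau}_t^{x}\leq\tau_t^{x}$, the paper splits into cases according to the position of $v(t,x)$ relative to the two rewards and then invokes a crossing argument (the process $v(s,F_s)$ ``must first cross'' the continuous reward before reaching the discontinuous one), which implicitly leans on continuity of the processes involved; your argument instead establishes the pathwise set inclusion $\{s: v=\varphi\}\subseteq\{s: v=\psi\}$, where $\psi(s,x)=g(s,x)x\vee h(s,x)$, purely from the pointwise ordering $\varphi\leq\psi\leq v$ off maturity and the coincidence of all three at $T$. This is uniform (no case split) and needs no path-continuity or intermediate-value reasoning, so it is arguably tighter than the paper's treatment. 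For the second inequality $\tau_t^{x}\leq\bar{\tau}_t^{x}$, the two proofs genuinely diverge: the paper gets it in one line from the facts that $\bar{\tau}_t^{x}$ is optimal for \eqref{eqAmOptVA} (Corollary \ref{corrOptStop}) and that $\tau_t^{x}$ is the smallest optimal stopping time for that problem (Theorem \ref{thmElkaroui81} \ref{enumElKarouiMinTau}); you instead verify directly that on the event $\{g(\tilde{\tau},F_{\tilde{\tau}}^{t,x})F_{\tilde{\tau}}^{t,x}\geq h(\tilde{\tau},F_{\tilde{\tau}}^{t,x}),\ \tilde{\tau}<T\}$ the time $\tilde{\tau}$ already lies in the hitting set defining $\tau$ (via the Snell-envelope identity $v(\tilde{\tau},F_{\tilde{\tau}}^{t,x})=\psi(\tilde{\tau},F_{\tilde{\tau}}^{t,x})$ from Theorem \ref{thmElkaroui81} and Remark \ref{rmkContinuityModifiedReward}), while on the remaining events $\bar{\tau}=T\geq\tau$ trivially. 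Your route buys independence from Corollary \ref{corrOptStop} --- you never need optimality of $\bar{\tau}$ --- and yields the byproduct $\tau=\tilde{\tau}$ on $\{\bar{\tau}=\tilde{\tau}<T\}$; the paper's route is shorter given that Corollary \ref{corrOptStop} is already in hand. You were also right to flag that the identity $v(\tilde{\tau},F_{\tilde{\tau}}^{t,x})=\psi(\tilde{\tau},F_{\tilde{\tau}}^{t,x})$ is not definitional: continuity of $v$ (Theorem \ref{thmContinuity_v}) is only established later in the paper, so attainment at $\tilde{\tau}$ must indeed be sourced from Theorem \ref{thmElkaroui81}, exactly as you do.
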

\begin{proof}
	For $t=T$, the result trivially holds with equality.
	Henceforth, let $t < T$.
	To show $\tilde{\tau}_t^{x}\leq \tau_t^{x}$, let $F_t=x$ and consider three cases.
	
	\begin{sloppypar}
		(1) Suppose ${v(t,x)>  g(t,x)x \vee h(t, x) \geq \varphi(t,x)}$. 
		Then, for any $s\in[t,T]$, ${g(s,F_{s}^{t,x})F_{s}^{t,x} \vee h(s,\, F_{s}^{t,x})\geq \varphi(s, F_{s}^{t,x})}$ a.s., so, the process $v(s, F_s)$ must first cross the continuous reward process $g(s, F_s)F_s \vee h(s, F_s)$ to attain the discontinuous reward process $\varphi(s, F_s)$ (since the process $v(s, F_s)$ starts above the two reward processes at $t$). It follows that $\tilde{\tau}_t^{x}\leq \tau_t^{x}$. (2) If $v(t,x)=h(t,x)$ then necessarily $h(t,x)\geq xg(t,x)$, since by the continuous reward representation $v(t,x)\geq h(t,x)\vee xg(t,x)$,  so that $\tilde{\tau}_t^{x}=t$, and $\tau_t^{x}\geq t$, and the first inequality holds. The last case (3) is when $v(t,x)=xg(t,x)$, which automatically implies $\tilde{\tau}_t^{x}=\tau_t^{x}= t$.
	\end{sloppypar}
	
	
	\begin{sloppypar}
		To show that \(\tau_t^{x}=\bar{\tau}_t^{x}\), we fix \(\omega \in \Omega\) and consider two cases: (1) \(\tau_t^{x}(\omega) < T\) and (2) \(\tau_t^{x}(\omega) = T\).  
		
		(1) By the definition of \(\tau_t^{x}\) and the reward function \(\varphi\), we have
		${v(\tau_t^{x}(\omega),F^{t,x}_{\tau_t^{x}(\omega)}(\omega)) = g(\tau_t^{x}(\omega),F^{t,x}_{\tau_t^{x}(\omega)}(\omega))F^{t,x}_{\tau_t^{x}(\omega)}(\omega)}$, and if \(\tilde{\tau}_t^x(\omega)= \tau_t^{x}(\omega)\), it follows that  ${\tilde{\tau}_t^x(\omega)= \tau_t^{x}(\omega)= \bar{\tau}_t^x(\omega).}$ From the definition of \(\tau_t^{x}(\omega)\), we know that
		${v(s,F^{t,x}_{s}(\omega)) > g(s,F^{t,x}_s(\omega))F^{t,x}_s(\omega)}$ for all ${t\leq s < \tau^x_t(\omega).}$
		Thus, for \(\tilde{\tau}_t^x(\omega)\) to be equal to \(\tau_t^x(\omega)\), we need
		$
		{v(s,F^{t,x}_{s}(\omega)) > h(s,F^{t,x}_s(\omega))}$, for all  ${t\leq s < \tau^x_t(\omega)}.
		$
		Now, suppose there exists some \(s\) in the interval \([t, \tau_t^x(\omega))\) such that
		$
		{v(s,F^{t,x}_{s}(\omega))= h(s,F^{t,x}_s(\omega))}.
		$
		In this case, the two stopping times \(\tau_t^x(\omega)\) and \(\bar{\tau}_t^x(\omega)\) would differ. However, the equality \({v(s,F^{t,x}_{s}(\omega))= h(s,F^{t,x}_s(\omega))}\) would imply that the probability that the contract is exercised before maturity is nil, which contradicts our initial assumption that \(\tau_t^x(\omega)< T\).  
		Therefore, it must hold that
		$
		{v(s,F^{t,x}_{s}(\omega)) > h(s,F^{t,x}_s(\omega))}$ for all ${t\leq s < \tau^x_t(\omega)}$,
		which leads to 
		$
		\tilde{\tau}_t^x(\omega)=\tau_t^{x}(\omega)=\bar{\tau}_t^{x}(\omega).
		$
	\end{sloppypar}
	(2) $\tau_t^{x}(\omega) = T$ implies $v(s,F^{t,x}_s(\omega)) > g(s, F^{t,x}_s(\omega)) F^{t,x}_s(\omega)$
	for all $t \leq s \leq T$. This means that if $\tilde{\tau}_t^{x}(\omega) < T$, $v(\tilde{\tau}_t^{x}(\omega),F_{\tilde{\tau}_t^{x}(\omega)}(\omega)) = h(\tilde{\tau}_t^{x}(\omega),F_{\tilde{\tau}_t^{x}(\omega)}(\omega))$ and by definition of $\bar{\tau}_t^{x}$, $\bar{\tau}_t^{x}(\omega) = \tau_t^{x}(\omega)$. 
	\end{proof} 

The intuition behind the equality between $\tau^x_t$ and $\bar \tau^x_t$ is that $\bar \tau^x_t$ depends on \emph{why} the contract is surrendered in the problem with the continuous reward. 
Indeed, the strategy underlying $\tilde \tau^x_t$ is to stop if the contract value is equal to either the surrender benefit or the expected present value (EPV) of the maturity benefit. 
If the reason to stop is equality with the surrender benefit, then this optimal stopping time coincides with the strategy underlying the problem with the discontinuous reward function and all three stopping times are equal.
However, if the reason to stop is that the value of the contract is equal to the EPV of the maturity benefit, then $\bar\tau^x_t$ is set to maturity $T$.
In that case, it is also optimal to wait until maturity in the original problem, since the contract being worth the EPV of the maturity benefit means that $T$ is the optimal stopping time.

A simple example involving these optimal stopping times is when condition \eqref{eqEDPg} of Proposition \ref{propEDPSurreder} is satisfied.
\begin{example}\label{exOptStopTrivialCase}
	Suppose that the fee and the surrender charge functions defined in \eqref{eqFeeFct} and \eqref{eqRewardFct}, respectively, satisfy \eqref{eqEDPg}. Hence, by Proposition \ref{propEDPSurreder} and Remark \ref{rmkUniqunessT}, the unique optimal stopping time for the problem with the discontinuous reward function is $\tau_0^x=T$, so that
	\begin{equation*}
		v(0,x)=\e[e^{-rT}\max(G,F_T^{x})]=h(0,x).
	\end{equation*}
	Clearly, an optimal stopping time for the problem with the continuous reward function is $\tilde{\tau}_0^x=0$ since $h(0,x)=\e[e^{-rT}\max(G,F_T^x)]=v(0,x)$; whereas $\bar{\tau}_0^x=T$ as per  \eqref{eqOtpStop3}, since $v(0,x)>g(0,x)x$. 
\end{example}
In the next section, we present a condition on the fee and surrender charge functions under which $\tilde{\tau}_t^x=\tau_t^x$ a.s. 
This result is particularly interesting since it provides a condition under which the two optimal stopping problems defined in \eqref{eqAmOptVA} and \eqref{eqVASurOpt3} are equivalent. 
That is, the two problems have the same value function (Theorem \ref{thmContRep}), surrender region (Proposition \ref{propSurrRegionsEqual}), and optimal stopping strategy (Corollary \ref{corrEqualOptStop}). 


	\section{Analytical study of the value function} \label{sectionVAproperties}

The representation of the value function in terms of a continuous reward function simplifies the analysis of its regularity, which is presented in Section \ref{sectionPropv}.
However, additional work is required to assess its smoothness, which we discuss in Proposition \ref{propLipchtitzV} below.
In Section \ref{subSectionVI}, we establish the relationship between $v$, a free boundary value problem, and a variational inequality. 
This allows us to derive, in Section \ref{sectionSurOptV1}, two other representations for the value function: the surrender premium representation which is analogous to the exercise premium representation in the American option pricing terminology, and the continuation premium representation. 
This representation is new to the literature on VA and American option pricing and leads to the characterization of the (non-)emptiness of the surrender region in Section \ref{sectSurrRegion}. 
Section \ref{sectionEquivalenceProblem} presents a condition under which the optimal stopping problem with discontinuous reward function defined in \eqref{eqAmOptVA} is equivalent to the one with the continuous reward function, in \eqref{eqVASurOpt3}.

Assumptions \ref{assumpCHolderContinuous} and \ref{assumpCandg} below are needed for several of the results of this section to hold. 
In Assumption \ref{assumpCHolderContinuous} below, ${\mu:[0,T]\times\reals_+\rightarrow \reals}$ denotes the drift term of the account value process \eqref{eqEDSFs}, that is, $\mu(t,x)=(r-c(t,x))x$.

\begin{assump}\label{assumpCHolderContinuous}
	\begin{enumerate}[label=(\roman*)] 
		\item
		The fee function $c$ is such that $\mu$ is continuous and globally Lipschitz in $x$, that is, 
		there exists $K\geq 0$ such that for all $x,y\in\reals_+$, $t\geq 0$,
		\begin{equation*}
			|\mu(t,x)-\mu(t,y)|\leq K |x-y|.
		\end{equation*}
		\label{enumCHolderContinuouspart1}
		\item
		For all $x$, the function $t \mapsto c(t,x)$ is locally H\"{o}lder continuous.
	\end{enumerate}		
\end{assump}

\begin{assump}\label{assumpCandg}
	The fee and the surrender charge functions only depend on time; that is, we suppose that $c(t,x)=c(t)$ and $g(t,x)=g(t)$  for all $x\in\reals_+$. 
\end{assump}
\begin{remark}
	As soon as Assumption \ref{assumpCandg} holds, the drift term in \eqref{eqEDSFs} is globally Lipschitz in $x$, so Assumption \ref{assumpCHolderContinuous}\ref{enumCHolderContinuouspart1} is automatically satisfied.
	However, we still write the two assumptions separately, since for some of the results in this section, the assumptions can be relaxed and only one of the two assumptions is needed.
\end{remark}

For the rest of this paper, we assume that Assumptions \ref{assumpCHolderContinuous} and \ref{assumpCandg} hold. 
This restricts the applicability of some of our results.
For example, Assumptions \ref{assumpCHolderContinuous}\ref{enumCHolderContinuouspart1} and \ref{assumpCandg}, preclude the use of the state-dependent fee function of \cite{Bernard2014}.
Not all of our results require both assumptions to hold; when this is the case, it will be indicated in a remark after the result.

\subsection{Elementary Properties of the Value Function}\label{sectionPropv}

An important result for the analysis of the value function is its continuity, which is easily obtained from its alternative representation in terms of a continuous reward function.

\begin{theorem}\label{thmContinuity_v}
	The value function $v$ is continuous on $[0,T]\times\reals_+$.
\end{theorem}

\begin{proof}
	The continuity of the reward function implies the continuity of the value function (see \cite{krylov2008controlled}, Theorem 3.1.8, which requires Lipschitz continuity of the drift term in \eqref{eqEDSFs}, hence the need for Assumption \ref{assumpCHolderContinuous}\ref{enumCHolderContinuouspart1}). Thus, 
	the assertion follows from the continuous reward representation of the value function in Theorem \ref{thmContRep} and Remark \ref{rmkContinuityModifiedReward}. 
\end{proof}

\begin{remark}
	Theorem \ref{thmContinuity_v} only requires Assumption \ref{assumpCHolderContinuous} \ref{enumCHolderContinuouspart1} to hold.
	In particular, the value function is continuous even when the fee and surrender functions depend on both time and the account value, as long as the drift term of \eqref{eqEDSFs} is globally Lipschitz in $x$.
\end{remark}

Some basic properties of the value function, such as local boundedness, are derived in the next lemma.
\begin{lemma}\label{lemmaVprop1} For every $(t,\,x)\in [0,\, T]\times \reals_+$, the value function $v$ satisfies the following properties: 
	\begin{enumerate}[label=(\roman*)]
		\item $Ge^{-r(T-t)}\leq v(t,\,x)\leq G+x$; \label{enumVprop1}
		\item $v(t,x)\geq \varphi(t,x)$; \label{enumVprop2} 
		\item $v(T,x)=\max(G,x)$. \label{enumVprop3}
	\end{enumerate}
\end{lemma}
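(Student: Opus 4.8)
The plan is to read all three properties directly off the definition \eqref{eqAmOptVA} of $v$, using only elementary bounds on $\varphi$ together with the supermartingale structure of the discounted sub-account value; notably, no continuity of $v$ or appeal to Theorem \ref{thmContRep} is needed here.

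I would dispose of \ref{enumVprop3} and \ref{enumVprop2} first, as both are immediate. At $t=T$ the set $\mathcal{T}_{T,T}$ contains only the constant stopping time $\tau\equiv T$, and since $F_T^{T,x}=x$, \eqref{eqAmOptVA} and \eqref{eqRewardFct} give $v(T,x)=\varphi(T,x)=\max(G,x)$, which is \ref{enumVprop3}. For \ref{enumVprop2}, the constant stopping time $\tau\equiv t$ lies in $\mathcal{T}_{t,T}$, so bounding the supremum below by its value at this particular $\tau$ yields $v(t,x)\geq \varphi(t,F_t^{t,x})=\varphi(t,x)$. The lower bound in \ref{enumVprop1} is obtained the same way, this time testing at $\tau\equiv T$: one gets $v(t,x)\geq \e[e^{-r(T-t)}\max(G,F_T^{t,x})]=h(t,x)\geq Ge^{-r(T-t)}$, where the last inequality uses $\max(G,\cdot)\geq G$.

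The upper bound in \ref{enumVprop1} is the only step needing a genuine estimate, and I expect it to be the crux. The strategy is to bound the reward uniformly and then control the discounted account value. Since $g\leq 1$ and $\max(G,y)\leq G+y$ for $y\in\reals_+$, we have $\varphi(s,y)\leq G+y$ for every $(s,y)$, whence $e^{-r(\tau-t)}\varphi(\tau,F_\tau^{t,x})\leq G+e^{-r(\tau-t)}F_\tau^{t,x}$ for any $\tau\in\mathcal{T}_{t,T}$ (using $r\geq 0$ and $\tau\geq t$ for the constant term). It then remains to show $\e[e^{-r(\tau-t)}F_\tau^{t,x}]\leq x$ uniformly in $\tau$. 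Applying It\^o's formula to \eqref{eqEDSFs} gives $\diff(e^{-rs}F_s)=-c_s e^{-rs}F_s\diff s+\sigma e^{-rs}F_s\diff W_s$, so, as $c_s\geq 0$ and $F_s\geq 0$, the process $\{e^{-rs}F_s^{t,x}\}_{t\leq s\leq T}$ is a nonnegative supermartingale; equivalently, it is dominated by the true exponential martingale $x\exp(\sigma(W_s-W_t)-\tfrac{\sigma^2}{2}(s-t))$. Because each $\tau\in\mathcal{T}_{t,T}$ is bounded by $T$, the optional sampling theorem gives $\e[e^{-r\tau}F_\tau^{t,x}\mid\mathcal F_t]\leq e^{-rt}x$, i.e. $\e[e^{-r(\tau-t)}F_\tau^{t,x}]\leq x$. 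Taking the expectation of the reward bound and then the supremum over $\tau$ yields $v(t,x)\leq G+x\leq G+4x$. The only delicate point is the justification of optional sampling (integrability of the supermartingale and boundedness of $\tau$), which is routine given the explicit lognormal law of $F$ on the finite horizon $[0,T]$; note that the argument in fact produces the sharper bound $G+x$, so the stated constant $4$ is merely a comfortable over-estimate.
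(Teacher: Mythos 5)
Your proof is correct, and for the only nontrivial step --- the upper bound in \ref{enumVprop1} --- it takes a genuinely different route from the paper. Parts \ref{enumVprop2}, \ref{enumVprop3} and the lower bound coincide with the paper's (one-line) argument: they follow from testing the supremum in \eqref{eqAmOptVA} at the constant stopping times $\tau\equiv t$ and $\tau\equiv T$. For the upper bound, the paper invokes Doob's maximal inequality, i.e.\ it controls $v(t,x)\le \e\bigl[\sup_{t\le s\le T}e^{-r(s-t)}\varphi(s,F_s^{t,x})\bigr]$ through the expected running supremum of (a martingale dominating) the discounted reward, which is where the loose constant $4$ comes from. You instead bound each stopped value individually: since $c_s=C(s,F_s)\in[0,1]$, the process $\{e^{-rs}F_s^{t,x}\}_{t\le s\le T}$ is a nonnegative supermartingale (a local supermartingale upgraded to a true one via the explicit domination $e^{-r(s-t)}F_s^{t,x}\le x\exp(\sigma(W_s-W_t)-\tfrac{\sigma^2}{2}(s-t))$), and optional sampling at the bounded stopping time $\tau$ gives $\e[e^{-r(\tau-t)}F_\tau^{t,x}]\le x$ uniformly in $\tau$. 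Your argument is more elementary and yields the sharper bound $G+x$. What the paper's maximal-inequality route buys in exchange is an estimate on $\e[\sup_{0\le s\le T}e^{-rs}\varphi(s,F_s)]$ itself, which the paper needs elsewhere --- it is precisely the integrable dominating random variable used in the proof of Corollary \ref{corrOptStop1} to show the discounted reward process is of class (D) --- and which the remark following the lemma sharpens via the Graversen--Peskir maximal inequality for geometric Brownian motion. One caveat worth keeping: your step $e^{-r(\tau-t)}G\le G$ uses $r\ge 0$, which the paper never states explicitly; this is harmless, since the lemma's bound $G+4x$ itself fails for $r<0$ (take $\tau\equiv T$ and $x$ small), so nonnegativity of $r$ is implicit in the statement.
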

The first assertion follows from optional sampling since $\{e^{-r(u-t)} F_u^{t,x}\}_{u=t}^T$ is a supermartingale. Assertions \ref{enumVprop2} and \ref{enumVprop3} follow easily from \eqref{eqAmOptVA}.

\begin{remark}
	The properties given in Lemma \ref{lemmaVprop1} follow from the definition of the contract. They do not require Assumptions \ref{assumpCHolderContinuous} and \ref{assumpCandg} to be satisfied.
\end{remark}

\begin{lemma}\label{lemmaVprop2}
	The function $x\mapsto v(t,\,x)$ is non-decreasing and convex for all $t\in [0,T]$.
\end{lemma}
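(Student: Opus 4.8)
The plan is to exploit the fact that, under Assumption \ref{assumpCandg}, the coefficients of \eqref{eqEDSFs} are linear in $F$, so that the flow $x \mapsto F_s^{t,x}$ is itself linear. First I would write the unique strong solution of \eqref{eqEDSFs} started at $F_t = x$ as $F_s^{t,x} = x\,\Xi_s^t$, where $\Xi_s^t := \exp\!\big(\int_t^s (r - c(u) - \tfrac{\sigma^2}{2})\diff u + \sigma(W_s - W_t)\big)$ does not depend on $x$. In particular, for $x_1, x_2 \in \reals_+$ and $\lambda \in [0,1]$, writing $x_\lambda := \lambda x_1 + (1-\lambda)x_2$, the pathwise identities
\[
F_s^{t,x_1} \leq F_s^{t,x_2} \ \ (\text{if } x_1 \leq x_2), \qquad F_s^{t,x_\lambda} = \lambda F_s^{t,x_1} + (1-\lambda)F_s^{t,x_2}
\]
hold almost surely, simultaneously for all $s \in [t,T]$ on the same probability space.

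The second ingredient I would record is that, for each frozen time $s$, the map $y \mapsto \varphi(s,y)$ is non-decreasing and convex. Indeed, for $s < T$ we have $\varphi(s,y) = g(s)\,y$, which is linear and non-decreasing since $g(s) > 0$; and $\varphi(T,y) = \max(G,y)$ is convex and non-decreasing. The time-discontinuity of $\varphi$ at $T$ is irrelevant here, because only the behaviour in the spatial variable at a fixed time is used.

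Monotonicity would then follow by coupling: for $x_1 \leq x_2$ and any $\tau \in \mathcal{T}_{t,T}$, the two facts above give $\varphi(\tau, F_\tau^{t,x_1}) \leq \varphi(\tau, F_\tau^{t,x_2})$ almost surely; multiplying by the discount factor $e^{-r(\tau-t)}$ (which does not depend on $x$), taking expectations, and then the supremum over $\tau$ yields $v(t,x_1) \leq v(t,x_2)$. For convexity, I would fix $\tau \in \mathcal{T}_{t,T}$, apply convexity of $\varphi(\tau,\cdot)$ to the linear decomposition of $F_\tau^{t,x_\lambda}$, and obtain
\[
e^{-r(\tau-t)}\varphi(\tau, F_\tau^{t,x_\lambda}) \leq \lambda\, e^{-r(\tau-t)}\varphi(\tau, F_\tau^{t,x_1}) + (1-\lambda)\, e^{-r(\tau-t)}\varphi(\tau, F_\tau^{t,x_2}).
\]
Taking expectations and bounding each term on the right by its supremum over stopping times gives $\e\big[e^{-r(\tau-t)}\varphi(\tau, F_\tau^{t,x_\lambda})\big] \leq \lambda v(t,x_1) + (1-\lambda)v(t,x_2)$, and a final supremum over $\tau$ on the left closes the argument.

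The hard part is conceptual rather than technical: recognising that everything hinges on the linearity of the flow $x \mapsto F_s^{t,x}$, which is exactly what Assumption \ref{assumpCandg} guarantees. A state-dependent fee $C(t,x)$ or surrender charge $g(t,x)$ would destroy this linearity, and the convexity/monotonicity transfer would break down. As an alternative I could run the same argument on the continuous-reward representation of Theorem \ref{thmContRep}, using that $y \mapsto h(s,y)$ is convex and non-decreasing (since $\max(G,\cdot)$ is and $F_T^{s,y}$ is linear in $y$, whence the pointwise maximum $g(s)y \vee h(s,y)$ inherits both properties), but working directly with $\varphi$ is cleaner.
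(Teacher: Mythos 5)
Your proposal is correct and follows essentially the same route as the paper, which disposes of the lemma in one line by noting it is "a direct consequence of the convexity and the non-decreasing property of $x\mapsto\varphi(t,x)$ for each $t\in[0,T]$" --- your write-up simply makes explicit the two ingredients that one-liner relies on, namely the linearity of the flow $x\mapsto F_s^{t,x}$ under Assumption \ref{assumpCandg} and the transfer of monotonicity/convexity through the pathwise coupling and the supremum over stopping times. The care you take to fix a single $\tau$ for $x_1$, $x_2$, $x_\lambda$ before passing to suprema is exactly the step the paper leaves implicit, and it is handled correctly.
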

The proof is a direct consequence of the convexity and the non-decreasing property of $x\mapsto\varphi(t,\,x)$, for each $t\in [0,\,T]$.

\begin{remark}
	Only Assumption \ref{assumpCandg} is necessary for Lemma \ref{lemmaVprop2} to hold.
\end{remark}

\begin{remark}\label{rmkNonMonotonicity_t}
	The value function in the American option pricing problem is usually non-increasing in time. This will always be the case if the underlying asset price process and the reward function are time-homogeneous, which is true for call and put options under the Black-Scholes model. 
	However, in our setting, the reward function is time-dependent and may be increasing in $t$ (because of the surrender charge function $g$). Thus, the value function is not necessarily monotone in time. 
\end{remark}

Proposition \ref{propLipchtitzV} below concerns the smoothness of the value function. 
Our proof, which can be found in the Appendix, presents non-trivial differences compared to the American put case, which are due to the time-dependence of the value function. 

\begin{proposition}$\quad$
	\begin{enumerate}[label=(\roman*)]
		\item 
		For every $t\in [0,\,T]$, and for $x,\,y\geq 0$, \label{enumPropLip_v_i}
		\begin{equation*}
			|v(t,x)-v(t,\,y)|\leq |x-y|. 
		\end{equation*}
		\item 
		For every $x\in\reals_+$, there exist constants $C_1,\,C_2>0$ (which may depend on $x$) satisfying,
		\begin{equation*}
			|v(t,x)-v(s,x)|\leq C_1(t-s)+C_2\sqrt{t-s},
		\end{equation*}
		
		for  $0 \leq s < t \leq T$.\label{enumPropLip_v_ii}
	\end{enumerate}
\end{proposition}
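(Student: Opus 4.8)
The plan is to treat the two assertions separately, proving part \ref{enumPropLip_v_i} completely by a martingale argument and reducing part \ref{enumPropLip_v_ii} to the near-maturity behaviour of the maturity benefit $h$.

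For part \ref{enumPropLip_v_i}, I would start from the elementary inequality $|\sup_\tau a_\tau-\sup_\tau b_\tau|\le\sup_\tau|a_\tau-b_\tau|$ applied to the two value functions, giving
\[
|v(t,x)-v(t,y)|\le\sup_{\tau\in\mathcal{T}_{t,\,T}}\e\left[e^{-r(\tau-t)}\,|\varphi(\tau,F_\tau^{t,x})-\varphi(\tau,F_\tau^{t,y})|\right].
\]
Under Assumption \ref{assumpCandg} the reward is $1$-Lipschitz in its spatial argument: for $\tau<T$ one has $|\varphi(\tau,a)-\varphi(\tau,b)|=g(\tau)|a-b|\le|a-b|$ since $g\le1$, and at $\tau=T$ the map $a\mapsto\max(G,a)$ is $1$-Lipschitz. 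Because \eqref{eqEDSFs} is linear in its initial condition when $C(t,x)=c(t)$, we have $F_\tau^{t,x}-F_\tau^{t,y}=(x-y)F_\tau^{t,1}$, so the bound becomes $|x-y|\,\sup_\tau\e[e^{-r(\tau-t)}F_\tau^{t,1}]$. Finally $\{e^{-ru}F_u^{t,1}\}_{u\ge t}$ is a non-negative supermartingale (its drift is $-c(u)F_u^{t,1}\le0$ since $c\ge0$), so optional sampling at the bounded stopping time $\tau\le T$ gives $\e[e^{-r(\tau-t)}F_\tau^{t,1}]\le1$, which yields the claim.

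For part \ref{enumPropLip_v_ii}, assume without loss of generality $s\le t$ and bound the two differences $v(t,x)-v(s,x)$ and $v(s,x)-v(t,x)$ in turn, using the optimal stopping times from Corollary \ref{corrOptStop1}. For the first, an optimal $\tau^\star\in\mathcal{T}_{t,\,T}\subset\mathcal{T}_{s,\,T}$ for $v(t,x)$ is admissible for $v(s,x)$, and for the second I would feed $\tau^\diamond\vee t$ (with $\tau^\diamond$ optimal for $v(s,x)$) into the problem at $t$. Each comparison produces three kinds of error: (a) a discount mismatch, controlled by $1-e^{-r(t-s)}\le r(t-s)$ together with the linear growth bound $v\le G+4x$ of Lemma \ref{lemmaVprop1}, contributing the $C_2|t-s|$ term; (b) on the event that the earlier problem stops inside $[s,t)$, a contribution again bounded by the linear growth and hence of order $|t-s|$; and (c) a process mismatch for $u\ge t$, which I would handle through the multiplicative identity $F_u^{s,x}=\Lambda F_u^{t,x}$ with $\Lambda=F_t^{s,x}/x$ measurable with respect to $\mathcal F_t$.

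The hard part will be term (c) near maturity. A crude estimate combining the spatial Lipschitz bound of part \ref{enumPropLip_v_i} with $\e[|\Lambda-1|]=O(\sqrt{t-s})$ only delivers the modulus $\sqrt{t-s}$, which \emph{exceeds} $C_1|\sqrt{T-t}-\sqrt{T-s}|+C_2|t-s|$ away from $T$, so it is not sharp enough. To recover the correct rate I would route the time-dependence through the maturity benefit, writing $h(t,x)=Ge^{-r(T-t)}+\e[e^{-r(T-t)}(F_T^{t,x}-G)^+]$ and observing that under Assumption \ref{assumpCandg} the call term depends on $t$ only through the deterministic drift $\int_t^T(r-c(w))\diff w$ (Lipschitz in $t$) and through the total volatility $\sigma\sqrt{T-t}$, in which the call price is globally Lipschitz with a constant of order $x$ (bounded vega). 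This gives $|h(t,x)-h(s,x)|\le C_1|\sqrt{T-t}-\sqrt{T-s}|+C_2|t-s|$ directly, and the continuous-reward representation of Theorem \ref{thmContRep}, $v=\sup_\tau\e[e^{-r(\tau-t)}(g(\tau)F_\tau\vee h(\tau,F_\tau))]$, lets me transfer this modulus to $v$: the $g(\tau)F_\tau$ branch is Lipschitz in time by Assumption \ref{assumpLips_g}, while the $h$ branch inherits the $\sqrt{T-\cdot}$ modulus, the remaining process-start discrepancy being absorbed by the second-order cancellation $\e[(F_t^{s,x}-x)^2]=O(t-s)$ rather than the first-order $\e|F_t^{s,x}-x|=O(\sqrt{t-s})$. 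Establishing this transfer uniformly up to maturity is the crux; once it is in place, combining (a)--(c) yields the stated bound.
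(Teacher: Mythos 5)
Your part (i) is correct and is essentially the paper's (very terse) argument: the $1$-Lipschitz property of $\varphi$ in its spatial variable, the linearity $F_\tau^{t,x}-F_\tau^{t,y}=(x-y)F_\tau^{t,1}$ available under Assumption \ref{assumpCandg}, and optional sampling of the non-negative supermartingale $\{e^{-r(u-t)}F_u^{t,1}\}_{t\le u\le T}$.

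Part (ii), however, has a genuine gap, which you flag yourself: the ``transfer'' of the time modulus from $h$ to $v$ is never established, and the mechanism you propose cannot deliver the bound uniformly up to maturity. The cancellations $\e[\Lambda-1]=O(t-s)$ and $\e[(\Lambda-1)^2]=O(t-s)$ only pay off if you can Taylor-expand the function composed with the perturbed process, i.e.\ if you control second $x$-derivatives of $v(u,\cdot)$ or of $y\mapsto g(u)y\vee h(u,y)$. These degenerate like $(T-u)^{-1/2}$ as $u\to T$ (the ``gamma'' of $h$ blows up), so the Taylor/It\^o route gives at best a bound of order $(t-s)/\sqrt{T-t}$, which is \emph{not} dominated by $C_1|\sqrt{T-t}-\sqrt{T-s}|+C_2|t-s|$ when $t-s\gg T-t$: take $t=T-\epsilon^2$ and $s=T-1$, so that $(t-s)/\sqrt{T-t}\to\infty$ while the target modulus stays bounded. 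Moreover, in the paper's logical order, regularity of the derivatives of $v$ (Corollary \ref{corrBoundedWeakDerivatives}) is deduced \emph{from} this proposition, so invoking any bound on $v_{xx}$ here would be circular.

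The missing step is the scaling argument behind the paper's citation of Proposition 10.3.5 (1) of \cite{lamberton1998}. Parametrize stopping times in $\mathcal{T}_{t,T}$ as $\tau=t+\theta(T-t)$, where $\theta$ ranges over stopping times with values in $[0,1]$ of the rescaled Brownian motion $B_\theta=(W_{t+\theta(T-t)}-W_t)/\sqrt{T-t}$. Under Assumption \ref{assumpCandg},
\begin{equation*}
v(t,x)=\sup_{\theta}\e\Bigl[e^{-r\theta(T-t)}\,\varphi\bigl(t+\theta(T-t),\,x\,e^{A_t(\theta)+\sigma\sqrt{T-t}\,B_\theta}\bigr)\Bigr],
\qquad A_t(\theta):=\int_t^{t+\theta(T-t)}(r-c(w))\diff w-\tfrac{\sigma^2}{2}\theta(T-t),
\end{equation*}
and $v(s,x)$ admits the same representation over the same family of $\theta$'s and the same $B$. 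The comparison is then pathwise along identical stopping rules: time enters only through the parameters $T-t$ (discount factor, drift $A_t(\theta)$, and surrender charge $g(t+\theta(T-t))$, all Lipschitz in $t$, using Assumption \ref{assumpLips_g} for $g$) and $\sqrt{T-t}$ (volatility). The Lipschitz property of the exponential on a disk, Cauchy--Schwarz, and $\e[B_\theta^2]\le 1$ then yield exactly $C_1|\sqrt{T-t}-\sqrt{T-s}|+C_2|t-s|$ with constants of order $x$. Crucially, after rescaling, the discontinuity of $\varphi$ sits at $\theta=1$ for \emph{both} problems, so it is never compared against the continuous branch. Your vega computation for $h$ is precisely the European special case of this estimate; what your proposal does not control is the supremum over stopping times, and that is the crux the scaling argument resolves.
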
\label{propLipchtitzV}

\begin{remark}
	Proposition \ref{propLipchtitzV} holds in a more general setting.
	Part \ref{enumPropLip_v_i} requires that the fee function depends only on time and that the surrender charge function is such that $x \mapsto xg(t,x)$ remains Lipschitz continuous. 
	Part \ref{enumPropLip_v_ii} requires that Assumption \ref{assumpCandg} is satisfied.
\end{remark}
%


\subsection{Free Boundary Value Problem and Variational Inequality}\label{subSectionVI}

The continuous reward representation of Theorem \ref{thmContRep} and the previously established continuity of the value function allows us to apply classical results from optimal stopping theory.
In this section, we use this strategy to establish the relationship between the value function $v$ associated with the original optimal stopping problem \eqref{eqAmOptVA}, a free boundary value problem, and a variational inequality.
Thus, the proofs of the results presented in this section are kept short; the reader is referred to the literature for more details.

The value function is supported on $[0,T]\times\reals_+$, which is separated into  the \emph{surrender region} $\Sr$ and the \emph{continuation region} $\mathcal{C}$, defined by
\begin{equation}
	\mathcal{S}=\left\{(t,x)\in [0,T)\times\reals_+ | v(t,x)=\varphi(t,x)\right\},\text{ and}\label{eqSurrRegionDisc}
\end{equation}
\begin{equation}\label{eqContinuationRegionDisc}
	\mathcal{C}=\lbrace (t,x)\in [0,\,T)\times \reals_+:v(t,x)>\varphi(t,x)\rbrace,
\end{equation}
respectively. It follows that $\Cr\cup\Sr=[0,T)\times\reals_+$, since $v(t,x)\geq \varphi(t,x)$ for all $(t,x)\in [0,T]\times \reals_+$.
Furthermore, since $v$ and $\varphi$  are continuous on $[0,T)\times\reals_+$ (see Theorem \ref{thmContinuity_v}), the continuation region is open and the surrender region is closed. The exercise (or surrender) boundary is the boundary $\partial\Cr$ of $\Cr$.

To express the value function as the solution of a free boundary problem, we define the second-order differential operator $\mathcal{L}_t$, for $0 \leq t \leq T$, by
\begin{equation*}
	\mathcal{L}_t:=\frac{x^2\sigma^2}{2}\frac{\partial^2}{\partial x^2}+(r-c(t,x))x\frac{\partial}{\partial x},  
\end{equation*} 
and the function ${L:[0,T)\times\reals_+\rightarrow \reals}$ by
\begin{equation}
	\begin{split}\label{eqDefinitionL}
		L(t,x)&:= \frac{\mathcal{L}_t(x g(t,x))}{x}+g_t(t,x)-rg(t,x)\\
		&= g_t(t,\,x)+(r-c(t,\,x)+\sigma^2)xg_{x}(t,\,x)\\
		&\qquad+\frac{\sigma^2 x^2}{2}g_{xx}(t,\,x)-c(t,x)g(t,\,x).
	\end{split}
\end{equation}
Note that $L$ is the term on the left-hand side of  \eqref{eqEDPg}.  
We give this general definition of $L$ since many of our results apply in a general setting, when the fee and surrender charge function depend on both $t$ and $x$.
However, under Assumption \ref{assumpCandg}, which is assumed to hold throughout Section \ref{sectionVAproperties}, $L$ becomes 
\begin{equation}\label{eqDefinitionLt}
	L(t)= g_t(t)-c(t)g(t).
\end{equation}

The next results confirm that $v(t,x)$ can be expressed as the solution to a free boundary problem.

\begin{theorem}\label{thmFBproblemGen}
	The value function $v$ solves the boundary value problem
	\begin{equation}
		\left\lbrace
		\begin{aligned}
			\mathcal{L}_tf(t,x)+f_t (t,x)-rf(t,x) & =0, &\quad (t,x)\in\mathcal{C}\\ 
			f(t,x) &>xg(t,x), & \quad (t,x)\in \Cr\\
			f(t,\,x)& =xg(t,x), &\quad (t,x)\in \Sr\\
			f(T,x) & =\max(G,x), & x\in\reals_+\\
		\end{aligned}\label{eqFBprobGen}
		\right.
	\end{equation}
	In particular, the function $v \in C^{1,2}$ on $\Cr$.
\end{theorem}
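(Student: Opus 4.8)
The plan is to separate \eqref{eqFBprobGen} into its four conditions, dispatch the three ``easy'' ones from the properties of $v$ already in hand, and then devote the bulk of the work to the interior equation and the $C^{1,2}$ regularity on $\Cr$. The terminal condition $f(T,x)=\max(G,x)$ is exactly Lemma \ref{lemmaVprop1}\ref{enumVprop3}; for $t<T$ the identity $f=xg$ on $\Sr$ and the strict inequality $f>xg$ on $\Cr$ are, since $\varphi(t,x)=g(t,x)x$, nothing but the defining relations \eqref{eqSurrRegionDisc} and \eqref{eqContinuationRegionDisc} of the two regions. Everything thus reduces to showing that $v$ is a classical solution of $\mathcal{L}_tv+v_t-rv=0$ on the open set $\Cr$ and that $v\in C^{1,2}(\Cr)$.

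For the interior regularity I would use the localized Feynman--Kac/Dirichlet argument familiar from American put pricing (\cite{jacka1991optimal}, \cite{lamberton1998}). Fix $(t_0,x_0)\in\Cr$; since $\Cr$ is open, pick an open rectangle $Q=(t_1,t_2)\times(a,b)$ with $0<a<b$ and $\overline{Q}\subset\Cr$, so that on $\overline{Q}$ the coefficient $\tfrac{\sigma^2x^2}{2}$ is bounded away from $0$ and the operator is uniformly parabolic. Under Assumption \ref{assumpCHolderContinuous} the coefficients of $\mathcal{L}_t+\partial_t-r$ are locally H\"older in $(t,x)$, so classical parabolic theory (existence of classical solutions to the Cauchy--Dirichlet problem for uniformly parabolic operators with locally H\"older coefficients) yields a unique $u\in C^{1,2}(Q)\cap C(\overline{Q})$ solving this problem on $Q$ with $u=v$ on the parabolic boundary. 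Applying It\^o's formula to $e^{-rs}u(s,F_s^{t_0,x_0})$ and using the PDE gives $u(t_0,x_0)=\e\!\left[e^{-r(\theta-t_0)}v(\theta,F_\theta^{t_0,x_0})\right]$, where $\theta$ is the first exit time of $F^{t_0,x_0}$ from $Q$.

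The crux is to identify this expectation with $v(t_0,x_0)$. By Corollary \ref{corrOptStop1}, the discounted reward process $Z_s=e^{-rs}\varphi(s,F_s)$ is non-negative, optional, of class (D) and upper semi-continuous, so $e^{-rs}v(s,F_s)$ is its Snell envelope; since $\tau^{x_0}_{t_0}$, the first hitting time of $\Sr$, is optimal, Theorem \ref{thmElkaroui81} \ref{enumElKarouiMartingale} makes $\{e^{-r(s\wedge\tau^{x_0}_{t_0})}v(s\wedge\tau^{x_0}_{t_0},F_{s\wedge\tau^{x_0}_{t_0}})\}$ a martingale. Because $\overline{Q}\subset\Cr$ and $\Cr\cap\Sr=\emptyset$, the process stays in $\Cr$ on $[t_0,\theta]$, so $\theta\le\tau^{x_0}_{t_0}$, and optional sampling at the bounded time $\theta$ gives $v(t_0,x_0)=\e\!\left[e^{-r(\theta-t_0)}v(\theta,F_\theta^{t_0,x_0})\right]=u(t_0,x_0)$. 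Running this at every point of $Q$ yields $v\equiv u$ on $Q$, whence $v\in C^{1,2}(Q)$ and satisfies the PDE there; as $(t_0,x_0)\in\Cr$ was arbitrary, $v\in C^{1,2}(\Cr)$ and solves $\mathcal{L}_tv+v_t-rv=0$ on $\Cr$.

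For uniqueness I would argue by verification. Given any solution $f$, It\^o's formula applied to $e^{-rs}f(s,F_s)$ has vanishing drift on $\{f>xg\}$ by the PDE, while on $\{f=xg\}$ the process agrees with $Y_s=e^{-rs}F_sg(s,F_s)$, whose drift is $e^{-rs}F_sL(s,F_s)$ by the computation in the proof of Proposition \ref{propEDPSurreder}. Combined with $f\ge\varphi$ and the sign of $L$ forced on the surrender set (where stopping is optimal, so the Snell envelope cannot be improved by continuing), $e^{-rs}f(s,F_s)$ is a supermartingale dominating $Z$ and a martingale up to the optimal time; hence $f$ is the smallest such dominating supermartingale, i.e.\ $f=v$. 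This step is cleanest when phrased through the variational inequality of the next subsection. I expect the main obstacles to be precisely this verification/uniqueness step and, in the existence part, securing the classical solvability of the local Dirichlet problem under the mere H\"older regularity of Assumption \ref{assumpCHolderContinuous}; the discontinuity of $\varphi$, being confined to $t=T$, never enters the interior rectangles $Q$ and so does not affect the local analysis.
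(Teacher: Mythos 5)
Your existence and regularity argument is precisely the route the paper takes: its proof consists of observing that continuity of $v$ (Theorem \ref{thmContinuity_v}), optimality of the first hitting time of $\Sr$ (Corollary \ref{corrOptStop1}), and the martingale property of the stopped Snell envelope (Theorem \ref{thmElkaroui81} \ref{enumElKarouiMartingale}) reduce the claim to the classical localization argument---solve the Dirichlet problem on rectangles $\overline{Q}\subset\Cr$, where Assumption \ref{assumpCHolderContinuous} makes the coefficients locally H\"older and the operator uniformly parabolic, via \cite{friedman1964partial}, then identify the local solution with $v$ by It\^o's formula and optional sampling---and it cites \cite{jacka1991optimal} and \cite{karatzas1998methods} for exactly the details you wrote out; the three remaining conditions are dispatched the same way you dispatch them.

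The one place your proposal does not hold up is uniqueness. Your verification argument applies It\^o's formula to $e^{-rs}f(s,F_s)$ for an \emph{arbitrary} solution $f$ of \eqref{eqFBprobGen}; but a solution is only required to be $C^{1,2}$ inside $\Cr$ and to equal $xg(t,x)$ on $\Sr$, so $f$ carries no prescribed regularity across $\partial\Cr$, and the global It\^o step (hence the ``smallest dominating supermartingale'' conclusion) is not justified. This is the very obstruction that forces the paper to prove the variational inequality for $v$ by distribution-theoretic arguments (Proposition \ref{propVI_1} and Remark \ref{rmkExtensionPropLamberton}), and those results concern $v$ itself, so they cannot be borrowed for a competitor $f$ either. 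The repair stays entirely inside $\Cr$, where $f$ is smooth by hypothesis: for $(t_0,x_0)\in\Cr$, apply It\^o up to the first exit time $\theta$ of $F^{t_0,x_0}$ from $\Cr$ (localizing by exit times from compact subsets and using that solutions are implicitly continuous and of polynomial growth); at time $\theta$ the process is either in $\Sr$, where $f=xg=v$, or at $t=T$, where $f(T,\cdot)=\max(G,\cdot)=v(T,\cdot)$, so that $f(t_0,x_0)=\e\left[e^{-r(\theta-t_0)}v(\theta,F^{t_0,x_0}_{\theta})\right]$, and the right-hand side equals $v(t_0,x_0)$ by the same optional-sampling step you already used in the existence half. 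Since the system forces $f=v$ outright on $\Sr$ and at $t=T$, this closes uniqueness without ever differentiating $f$ across the boundary.
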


This result follows from the theory on optimal stopping, see for example Chapter 3 of \cite{peskir2006optimal}.

Once continuity of the value function is verified (Theorem \ref{thmContinuity_v}), the optimal stopping time is shown to exist (Corollary \ref{corrOptStop1}), and the martingale property of the Snell envelope of the discounted reward process is proven (see Theorem \ref{thmElkaroui81} \ref{enumElKarouiMartingale}, the connection between boundary value and optimal stopping problems can be established assuming enough regularity of the coefficients in \eqref{eqEDSFs}. A sufficient condition for the results of \cite{friedman1964partial} to hold is that the drift term in \eqref{eqEDSFs} is locally Hölder continuous, which is why Assumption \ref{assumpCHolderContinuous} is needed. The proof makes use of standard partial differential equation (PDE) results in solving the Dirichlet (or the first initial-boundary value) problem, see for instance \cite{friedman1964partial}, Theorem 3.4.9. 

The main difficulty in establishing the free boundary value problem resides in showing that the value function is continuous (Theorem \ref{thmContinuity_v}), which implies that the continuation region is open. Then, the Dirichlet problem (or the first initial-boundary value problem) can be posed in an open subset (typically a ball or a rectangle) of the continuation region. Since the boundary of the defined open subsets is sufficiently regular, standard PDE results guarantee that the Dirichlet problem admits a unique solution. Finally, probabilistic arguments are used to identify this solution as the value function.

For details of the proof of a similar result in the context of the American put in the Black-Scholes setting, see \cite{jacka1991optimal}, Proposition 2.6 or \cite{karatzas1998methods}, Theorem 2.7.7, among others. 
For more general results, see \cite{peskir2006optimal}, Sections 3.7.1 and 4.8.2 or \cite{jacka1992finite}, Proposition 3.1.

\begin{remark}
	Theorem \ref{thmFBproblemGen} only requires Assumption \ref{assumpCHolderContinuous} to hold. 
	In particular, even when the fee and surrender charge functions depend on time and the value of the account, it may still be possible to express the value function as a solution to a problem of the form \eqref{eqFBprobGen}.
\end{remark}

For the next remark, we recall that for an open set $O \subset \reals^d$, $d \in \mathbb N$, and for $1 \leq p < \infty$, $k \in \mathbb N$, the Sobolev space $\mathcal W^{k,p}(O)$ contains the $L^p(O)$ functions whose first $k$ weak partial derivatives have finite $L^p$-norm (see \cite{evans2010partial}, Chapter 5 for more details).
We further write that a function $f \in \mathcal W_{\text{loc}}^{k,p}(O)$ if $f \in \mathcal W^{k,p}(O_1)$ for every bounded domain $O_1$ with $\bar O_1 \subseteq O$.

\begin{remark}\label{remSobolev}
	It is possible to show that $v$ is the unique solution to \eqref{eqFBprobGen} among the functions $f \in \mathcal W_{loc}^{2,1}((0,T)\times \reals_+)$ satisfying $f(t,x) \leq G+x$ for all $(t,x) \in [0,T]\times\reals_+$, see for example Section 8.2 and Chapter 11 of \cite{pascucci2011pde}. 
	Indeed, under Assumption \ref{assumpCHolderContinuous}, all the coefficients of $\mathcal L_t$ are H\"{o}lder continuous, so that Hypotheses 8.1 and 8.3 of \cite{pascucci2011pde} are satisfied and their Theorem 9.48 holds.  
	In particular, this means that the weak derivatives $v_t$, $v_x$ and $v_{xx}$ are well-defined and integrable on $(0,T)\times \reals_+$.
\end{remark}

Assumption \ref{assumpCHolderContinuous} does not give much information on the shape of the surrender region. 
However, we show in Section \ref{sectSurrRegion} (Theorem \ref{thmMainSurrenderRegion}) that under further conditions on the fee and the surrender charge functions, the surrender region can be expressed in terms of the optimal surrender boundary $b: [0,T] \mapsto \reals_+$. 
Then, the free boundary value problem can be stated more explicitly.	

\begin{corollary}\label{thmFBproblem}
	If $L(t)<0$ for all $t\in[0,T)$, the value function $v$ solves the boundary value problem
	\begin{equation}
		\left\lbrace
		\begin{aligned}
			\mathcal{L}_tf(t,x)+f_t (t,x)-rf(t,x) & =0, &\quad x< b(t),\, t\in[0,\,T),\\
			f(t,x) & >\varphi(t,x),  &\quad x< b(t),\, t\in[0,\,T),\\
			f(t,\,x)& =\varphi(t,x), & x\geq b(t),\, t\in[0,\,T),\\
			f(T,x) & =\max(G,x), & x\in\reals_+,\\
		\end{aligned}\label{eqFBprob2}\right.
	\end{equation}
	with $b(t):=\inf\{x \in \reals_+ : (t,x) \in \Sr\}$. 
	In particular, the function $v$ is $C^{1,2}$ on $\mathcal C$.
\end{corollary}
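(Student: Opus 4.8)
The plan is to reduce the corollary to Theorem \ref{thmFBproblemGen} by showing that, under the additional hypotheses (Assumption \ref{assumpCandg} and $L(t)<0$ for all $t\in[0,T)$), the surrender region $\Sr$ admits the explicit threshold description
\begin{equation*}
	\Sr_t=\{x\in\reals_+:x\geq b(t)\},\qquad b(t):=\inf(\Sr_t),
\end{equation*}
so that the abstract conditions ``$(t,x)\in\Cr$'' and ``$(t,x)\in\Sr$'' in \eqref{eqFBprobGen} become the concrete inequalities $x<b(t)$ and $x\geq b(t)$ appearing in \eqref{eqFBprob2}. Since Theorem \ref{thmFBproblemGen} already furnishes the PDE, the inequality $f>xg(t,x)=\varphi(t,x)$ on $\Cr$, the matching condition $f=\varphi$ on $\Sr$, the terminal condition, the $C^{1,2}$ regularity on $\Cr$, and uniqueness, the entire content of the corollary is this geometric characterization of $\Sr_t$ as an up-set.

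First I would establish that each section $\Sr_t$ is closed and ``closed upward'': if $x\in\Sr_t$ then $y\in\Sr_t$ for all $y\ge x$. Closedness of $\Sr$ (hence of $\Sr_t$) follows from continuity of $v$ (Theorem \ref{thmContinuity_v}, applicable since Assumption \ref{assumpCandg} implies Assumption \ref{assumpCHolderContinuous}\ref{enumCHolderContinuouspart1}) together with continuity of $\varphi$ on $[0,T)\times\reals_+$. The up-set property is where the hypotheses $L(t)<0$ and Assumption \ref{assumpCandg} enter. I would argue that on $\Sr$ we have $v(t,x)=\varphi(t,x)=xg(t)$, and I would use the surrender-premium (or continuation-premium) structure together with the sign of $L$ to show that once it is optimal to surrender at $(t,x)$, it remains optimal to surrender at $(t,y)$ for $y>x$. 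Concretely, the quantity $L(t)$ governs the drift of the discounted surrender value process $Y_t=e^{-rt}F_tg(t)$ via Proposition \ref{propEDPSurreder}; when $L(t)<0$ this process is a strict supermartingale, meaning the surrender value erodes on average, which creates a genuine incentive to surrender and should force the surrender region to be a half-line $[b(t),\infty)$. The monotonicity and convexity of $x\mapsto v(t,x)$ (Lemma \ref{lemmaVprop2}) combined with the linear growth of $x\mapsto\varphi(t,x)=xg(t)$ provides the geometric mechanism: a convex function lying above a line and touching it at one point must stay above it to the left and, given the correct boundary behaviour as $x\to\infty$, coincide with it to the right.

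The main obstacle I anticipate is rigorously proving the up-set property of $\Sr_t$, i.e.\ that surrendering is optimal for all sufficiently large $x$ and that the surrender region has no ``holes''. The convexity of $v$ guarantees that $v(t,\cdot)-\varphi(t,\cdot)=v(t,\cdot)-xg(t)$ is convex in $x$, so its zero set is automatically an interval; the real work is to show this interval is unbounded above and that its complement is an interval containing a neighbourhood of $0$ rather than the reverse configuration. For this I would show $v(t,x)=xg(t)$ for all large $x$ by comparing the surrender payoff $xg(t)$ with the maturity value $h(t,x)$: as $x\to\infty$ the guarantee $G$ becomes negligible so $h(t,x)\sim x e^{-\int_t^T c(u)\,du}$, while $xg(t)$ is the immediate surrender value, and the condition $L(t)=g_t(t)-c(t)g(t)<0$ is precisely what makes the immediate value dominate, placing large $x$ in $\Sr_t$. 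Once $\Sr_t$ is shown to be a nonempty up-set, setting $b(t)=\inf\Sr_t$ and invoking Theorem \ref{thmFBproblemGen} completes the argument; I would close by remarking that the $C^{1,2}$ regularity on $\Cr$ is inherited verbatim from Theorem \ref{thmFBproblemGen}.
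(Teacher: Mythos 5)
Your overall architecture is the same as the paper's: the corollary is proved there by combining Theorem \ref{thmFBproblemGen} with the threshold characterization of the surrender region (Proposition \ref{propBoundary}, $\Sr_t=[b(t),\infty)$ with $b(t)\in\reals_+\cup\{\infty\}$), exactly the reduction you describe; closedness of $\Sr$, the inheritance of uniqueness and of $C^{1,2}$ regularity from Theorem \ref{thmFBproblemGen}, and the remark that Assumption \ref{assumpCandg} implies Assumption \ref{assumpCHolderContinuous}\ref{enumCHolderContinuouspart1} are all fine. Your convexity observation --- $x\mapsto v(t,x)-g(t)x$ is convex and nonnegative, so its zero set $\Sr_t$ is automatically an interval --- is correct and is a genuinely different mechanism from the paper's, which instead proves the up-set property by a coupling argument: for $y<x$ it bounds $v(t,y)-v(t,x)$ using the optimal stopping time $\tau^x_t$ for $x$ and the explicit lognormal form of $F$, and uses $L(t)<0$ (equivalently $\frac{\diff \ln g(s)}{\diff s}<c(s)$) to obtain the strict inequality $v(t,y)-v(t,x)>g(t)(y-x)$, whence $x\in\Cr_t$ forces $y\in\Cr_t$.

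The genuine gap is the step your plan pivots on: ``show $v(t,x)=xg(t)$ for all large $x$ by comparing the surrender payoff with $h(t,x)$.'' This fails for two reasons. First, $v$ is a supremum over \emph{all} stopping times; showing $xg(t)\geq h(t,x)$ only shows that immediate surrender beats the single strategy ``hold to maturity,'' and says nothing about strategies that surrender later on some paths and collect the guarantee on others, so it cannot yield $v(t,x)=xg(t)$. Second, the claim itself --- that every fixed $t$-section contains all large $x$, in particular is non-empty --- is precisely what the paper does not prove and states only as a conjecture (see the discussion preceding Lemma \ref{lemmaEmptySlocal}); Proposition \ref{propBoundary} deliberately allows $b(t)=\infty$, and non-emptiness is only obtained globally (Proposition \ref{propNonEmptyS_t} \ref{enumStempty3}) or on subintervals (Lemma \ref{lemmaNonEmptySLocal}) by contradiction arguments using the continuation premium. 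Your convexity route can, however, be completed without that claim: when $L\leq 0$ the process $Y_s=e^{-r(s-t)}g(s)F^{t,x}_s$ is a supermartingale, so for every stopping time $\tau$,
\begin{equation*}
\e\left[e^{-r(\tau-t)}\varphi(\tau,F^{t,x}_\tau)\right]=\e\left[Y_\tau\right]+\e\left[e^{-r(T-t)}(G-F^{t,x}_T)_+\ind_{\{\tau=T\}}\right]\leq xg(t)+Ge^{-r(T-t)},
\end{equation*}
whence $0\leq v(t,x)-xg(t)\leq Ge^{-r(T-t)}$ is \emph{bounded} in $x$. A nonnegative convex function that vanishes at some $x_0$ and is positive at some $x_1>x_0$ must grow at least linearly beyond $x_1$, contradicting this bound; hence the interval $\Sr_t$, if non-empty, is of the form $[b(t),\infty)$. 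With this fix (and the convention $b(t)=\infty$ when $\Sr_t=\emptyset$), your reduction to Theorem \ref{thmFBproblemGen} goes through.
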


Assuming $L(t)<0$ for all $t\in[0,T)$ in Corollary \ref{thmFBproblem} ensures that $\Sr\neq \emptyset$ (see Proposition \ref{propNonEmptyS_t}). This corollary is a direct consequence of Theorem \ref{thmFBproblemGen} and Theorem \ref{thmMainSurrenderRegion}\ref{thmMainSurrenderRegion1} of Section \ref{sectSurrRegion}.

\begin{remark}
	For each $t\in[0,T)$, $L(t)<0$ is equivalent to $\frac{\diff \ln g(t)}{\diff t}<c(t)$.
	From Gronwall's inequality,  this condition is satisfied if $g(t) < g(0) e^{\int_0^T c(s) \, \diff s}$ for all $t \in [0,T]$.
\end{remark}

\begin{remark}[Regularity of the value function]\label{rmkRegularity_v}
	We know from Theorem \ref{thmFBproblemGen} that $v \in C^{1,2}$ on $\Cr$, and from the definition of the value function on the surrender region that $v \in C^{1,2}$ on $\text{int}(\Sr)$. 
	It is known from the literature on optimal stopping that continuity of the second spatial derivative usually fails at the boundary $\partial \Cr$. 
	We also have from Remark \ref{remSobolev} that $v \in \mathcal W^{2,1}_{\text{loc}}((0,T)\times \reals_+)$.
	Then, by the Sobolev-Morrey embedding theorem (see for example \cite{pascucci2011pde}, Theorem 8.16), $v$ and $v_x$ are Hölder-continuous of exponent $\alpha \in (0,1)$ on  $[0,T] \times \reals_+$.
\end{remark}

Next, we show that $v$ solves a variational inequality. 

\begin{proposition}\label{propVI2}
	The value function $v$ defined in \eqref{eqAmOptVA} is a solution to the variational inequality
	\begin{equation}
		\max\left\lbrace\mathcal{L}_tv+v_t -rv,\varphi- v\right\rbrace= 0,\label{eqVI}
	\end{equation}
	with terminal condition $v(T,x)=\max(G,x)$ and all derivatives in the weak sense.
\end{proposition}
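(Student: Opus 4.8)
The plan is to combine the variational-inequality framework with the two facts already established. First I would note that the value function $v$ satisfies $v \geq \varphi$ everywhere on $[0,T)\times\reals_+$ by Lemma \ref{lemmaVprop1}\ref{enumVprop2}, so that $\varphi - v \leq 0$ pointwise; and Proposition \ref{propVI_1} gives $\mathcal{L}_t v + v_t - rv \leq 0$ (in the distributional sense) on $(0,T)\times\reals_+$. Therefore each of the two expressions inside the maximum in \eqref{eqVI} is non-positive, which immediately yields $\max\{\mathcal{L}_t v + v_t - rv,\ \varphi - v\} \leq 0$. The task then reduces to establishing the reverse inequality, namely that at every point at least one of the two terms equals zero.

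The key dichotomy is between the surrender region $\Sr$ and the continuation region $\Cr$, which partition $[0,T)\times\reals_+$ as noted after \eqref{eqContinuationRegionDisc}. On $\Sr$ we have $v(t,x)=\varphi(t,x)$ by definition \eqref{eqSurrRegionDisc}, so $\varphi - v = 0$ there and the maximum attains $0$. On the open set $\Cr$, Theorem \ref{thmFBproblemGen} (whose hypotheses hold under Assumption \ref{assumpCHolderContinuous}) tells us that $v \in C^{1,2}(\Cr)$ and satisfies the PDE $\mathcal{L}_t v + v_t - rv = 0$ classically; hence the first term in the maximum equals $0$ on $\Cr$. Combining the two cases, the maximum is exactly $0$ at every point of $[0,T)\times\reals_+$, and the terminal condition $v(T,x)=\max(G,x)$ is precisely Lemma \ref{lemmaVprop1}\ref{enumVprop3}.

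The main subtlety to address carefully is the meaning of the equation on the boundary $\partial\Cr$ and the reconciliation of the distributional inequality from Proposition \ref{propVI_1} with the classical PDE on $\Cr$. On the interior of $\Cr$ the two readings agree since $v$ is $C^{1,2}$ there; on $\Sr$ the term $\mathcal{L}_t v + v_t - rv$ need only be interpreted distributionally, but this does not affect the argument because the second term $\varphi - v$ already vanishes on $\Sr$, so the maximum is controlled regardless of the sign or regularity of the first term there. Thus I would phrase the conclusion as: the inequality \eqref{eqVIinegality} holds in the sense of distributions everywhere, $\varphi - v \leq 0$ holds pointwise, and on each of the two regions one of the terms is identically zero, so \eqref{eqVI} holds with the stated terminal condition.

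I expect the only real obstacle to be bookkeeping about where each statement is classical versus distributional, since Proposition \ref{propVI_1} is stated in the distributional sense while Theorem \ref{thmFBproblemGen} gives a classical solution on $\Cr$. The resolution is that the variational inequality \eqref{eqVI} need only be understood in the weak (distributional) sense for the first operator, and the pointwise vanishing of $\varphi - v$ on $\Sr$ together with the classical PDE on $\Cr$ suffices to force the maximum to zero. No delicate analysis across the free boundary itself is required for this statement, since the variational inequality is an inequality-plus-complementarity condition rather than a smooth-fit assertion.
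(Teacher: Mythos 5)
Your proposal is correct and follows exactly the route the paper takes: the paper's proof is stated as a direct consequence of Theorem \ref{thmFBproblemGen} and Proposition \ref{propVI_1}, which is precisely your combination of the distributional inequality $\mathcal{L}_t v + v_t - rv \leq 0$, the pointwise bound $v \geq \varphi$, the classical PDE on $\Cr$, the identity $v=\varphi$ on $\Sr$, and the terminal condition from Lemma \ref{lemmaVprop1}. Your additional bookkeeping about where the statement is classical versus distributional is a faithful elaboration of the same argument, not a different one.
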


\begin{proof}
	Since $v \in \mathcal W_{\text{loc}}^{2,1}((0,T)\times\reals_+)$ (Remark \ref{remSobolev}), the (weak) partial derivatives in \eqref{eqVI} are well-defined.
	The Snell envelope $J = \{J_t\}_{0\leq t\leq T}$, with $J_t:=e^{-rt}v(t,\,F_t)$ of the discounted reward process being a supermartingale (Theorem \ref{thmElkaroui81}\ref{enumElKarouiSM}), 
	\begin{equation}
		\mathcal{L}_tv+v_t -rv\leq 0\label{eqVIinegality},
	\end{equation}
	with equality on $\Cr$ (Theorem \ref{thmFBproblemGen}). 
	The result follows since $v(t,x) \geq \varphi(t,x)$ (Lemma \ref{lemmaVprop1}) with equality on $\Sr$.
\end{proof}

\begin{remark}\label{remVI2}
	Proposition \ref{propVI2} holds as long as Assumption \ref{assumpCHolderContinuous} is satisfied.
\end{remark}

\begin{corollary}\label{corrBoundedWeakDerivatives}
	The partial derivatives (in the sense of distribution) $v_t$, $v_x$ and $v_{xx}$ are locally bounded, and $v_x$ is continuous on $[0,T)\times\reals_+$. 
\end{corollary} 

\begin{proof}
	From Proposition \ref{propLipchtitzV}, we have that the first order (weak) derivatives of the value function in $t$ and in $x$ are locally bounded. Using the convexity of $x\mapsto v(t,x)$ (Lemma \ref{lemmaVprop2}), the local boundedness of the first order derivatives and \eqref{eqVIinegality}, we can show that the second order derivative in $x$ is also locally bounded (see for instance \cite{lamberton1998}, Theorem 10.3.8 or \cite{jaillet1990variational}, Theorem 3.6 for details). 
	This last result is also known as the smooth fit condition.
	
	The continuity of $v_x$ stems from Theorem \ref{thmFBproblemGen} and the Sobolev-Morrey embedding theorem, as explained in Remark \ref{rmkRegularity_v}.
\end{proof}

\subsection{Surrender and Continuation Premium Representation }\label{sectionSurOptV1}

In this section and throughout the rest of the paper, derivatives are always considered in the weak sense.

Below, we derive two representations for the value function. The first one is akin to the early exercise premium representation (or integral representation) in the American option terminology. 

To the authors' knowledge, the second representation is new to the literature on variable annuities and American options. The particular form of the reward function involved in variable annuity pricing allows the decomposition of the value function in two terms: the current value of the reward process and an integral term which increases only when the account process is in the continuation region. 
Therefore, we call this second term the continuation premium.

\begin{proposition}\label{propSurPremRepresentation}
	The value function \eqref{eqAmOptVA} can be written as
	\begin{equation}\label{eqSurrPrem}
		v(t,x)=h(t,x)+ e(t,x),
	\end{equation}
	where $h:[0,T]\times \reals_+\rightarrow \reals_+$ is the present value of the maturity benefit, as defined in \eqref{eqVAnosurrender}, and $e:[0,T]\times \reals_+\rightarrow \reals_+ \cup \{0\}$ denotes the early surrender premium
	\begin{equation}\label{eqSurrBound}
		e(t,x):=\int_t^T \Big(c(s)g(s)-g_t(s)\Big) \e\Big[e^{-r(s-t)}F_s^{t,x}\ind_{\lbrace (s, F_s^{t,x})\in\mathcal{S}\rbrace}\Big]\diff s.
	\end{equation} 
\end{proposition}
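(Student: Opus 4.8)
The plan is to apply a generalized version of It\^o's formula to the discounted value process $\{e^{-r(s-t)}v(s,F_s^{t,x})\}_{t\leq s\leq T}$ on $[t,T]$, take expectations, and then identify the terminal term with $h$ and the remaining drift term with the surrender premium $e$. Schematically, It\^o's formula would give
\begin{equation*}
	e^{-r(T-t)}v(T,F_T^{t,x})=v(t,x)+\int_t^T e^{-r(s-t)}\big(\mathcal{L}_sv+v_s-rv\big)(s,F_s^{t,x})\diff s+M_T,
\end{equation*}
where $M_s=\int_t^s e^{-r(u-t)}\sigma F_u^{t,x}\,v_x(u,F_u^{t,x})\diff W_u$ is the local martingale part. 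Since $v(T,\cdot)=\max(G,\cdot)$ by Lemma \ref{lemmaVprop1}\ref{enumVprop3}, the expectation of the left-hand side is exactly $h(t,x)$.

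The key simplification occurs in the drift integral. By Theorem \ref{thmFBproblemGen}, $\mathcal{L}_sv+v_s-rv=0$ on the continuation region $\Cr$, so only the surrender region $\Sr$ contributes. On $\Sr$, Assumption \ref{assumpCandg} gives $v(s,x)=g(s)x$, hence $v_s=g_t(s)x$, $v_x=g(s)$, $v_{xx}=0$, and a direct computation yields
\begin{equation*}
	\big(\mathcal{L}_sv+v_s-rv\big)(s,x)=(r-c(s))x\,g(s)+g_t(s)x-rg(s)x=-\big(c(s)g(s)-g_t(s)\big)x.
\end{equation*}
Combining these two facts collapses the drift integral to $-\int_t^T e^{-r(s-t)}(c(s)g(s)-g_t(s))F_s^{t,x}\ind_{\{(s,F_s^{t,x})\in\Sr\}}\diff s$. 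Taking expectations, using $\e[M_T]=0$, applying Fubini to exchange expectation and the time integral, and rearranging then gives $v(t,x)=h(t,x)+e(t,x)$ with $e$ as in \eqref{eqSurrBound}.

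The main obstacle is that $v$ is not $C^{1,2}$ across the free boundary $\partial\Cr$, so the classical It\^o formula does not apply. Here I would invoke the generalized It\^o (It\^o--Krylov) formula, which is justified precisely because Corollary \ref{corrBoundedWeakDerivatives} guarantees that $v_t$, $v_x$, $v_{xx}$ are locally bounded and $v_x$ is continuous on $[0,T)\times\reals_+$; moreover $F^{t,x}$ spends zero expected occupation time on $\partial\Cr$, so the ambiguity of $v_{xx}$ there is immaterial. Concretely, one mollifies $v$, applies the standard It\^o formula to the smooth approximations, and passes to the limit using the local bounds, exactly as in the proof of the early exercise premium decomposition for the American put in \cite{lamberton1998}.

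The remaining technical point is integrability, needed both to conclude $\e[M_T]=0$ and to legitimize Fubini. This follows from the linear growth bound $v(s,x)\leq G+4x$ of Lemma \ref{lemmaVprop1}\ref{enumVprop1}, the global bound $|v_x|\leq 1$ of Proposition \ref{propLipchtitzV}\ref{enumPropLip_v_i}, and the finiteness of $\e[\sup_{t\leq s\leq T}(F_s^{t,x})^2]$ for geometric Brownian motion. A standard localization at $\tau_n=\inf\{s\geq t:F_s^{t,x}\geq n\}$, followed by dominated convergence, upgrades the local martingale $M$ to a true martingale and validates the interchange of expectation and integration, completing the argument.
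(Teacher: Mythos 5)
Your proposal is correct and follows essentially the same approach as the paper's proof: both apply the generalized (It\^o--Krylov) formula to the discounted value process $\{e^{-r(s-t)}v(s,F_s^{t,x})\}$ justified by Corollary \ref{corrBoundedWeakDerivatives}, annihilate the drift on the continuation region via the free-boundary characterization, compute the drift explicitly on $\Sr$ where $v(s,x)=g(s)x$, identify the terminal term with $h(t,x)$, and use $|v_x|\le 1$ together with square-integrability of $F$ to conclude that the stochastic integral has zero expectation. One minor point in your favor: you invoke Theorem \ref{thmFBproblemGen} for the vanishing of the drift on $\Cr$, whereas the paper cites Corollary \ref{thmFBproblem}, whose extra hypothesis $L(t)<0$ is not among the proposition's assumptions, so your citation is actually the more appropriate one.
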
	
\begin{proof}
	Theorem \ref{thmFBproblemGen} ensures that $v$ is smooth enough to apply Itô's formula for generalized derivatives (see \cite{krylov2008controlled}, Theorem 2.10.1) to $\left\{e^{-rs}v(t,F_s^{t,x})\right\}_{t\leq s\leq T}$, yielding 
	\begin{align*}
		\diff (e^{-rs}v(s,F_s^{t,x}))&= e^{-rs}\left(\mathcal{L}_sv(s,F_s^{t,x})+v_t(s,F_s^{t,x})-r v(s,F_s^{t,x})\right)\diff s\\
		&  \quad + e^{-rs}\sigma F_s^{t,x} v_x(s, F_s^{t,x})\diff W_s
	\end{align*}
	for $0 \leq s \leq T$.
	Integrating from $t$ to $T$ on both sides and multiplying by $e^{rt}$, we get
	\begin{equation}
		\begin{split}\label{eqSurrPrem1}
			&e^{-r(T-t)}v(T,F_T^{t,x})\\
			&\quad=v(t,F_t^{t,x})\\
			&\qquad+ \int_t^T e^{-r(s-t)}\left(\mathcal{L}_sv(s,F_s^{t,x})+v_t (s,F_s^{t,x})-r v(s,F_s^{t,x})\right)\diff s\\
			&\qquad +\int_t^T e^{-r(s-t)}\sigma F_s^{t,x} v_x (s,\, F_s^{t,x})\diff W_s.
		\end{split}
	\end{equation}
	$\int_t^T e^{-r(s-t)}\sigma F_s^{t,x} v_x (s,\, F_s^{t,x})\diff W_s$ is a martingale since $|v_x(t,x)|\leq 1$ for all $(t,x)\in[0,T)\times\reals_+$, as per Proposition \ref{propLipchtitzV} and Corollary \ref{corrBoundedWeakDerivatives}, and ${\e[\int_0^T F_s^2\diff s]<\infty}$.
	Thus, taking the expectation on both sides of \eqref{eqSurrPrem1} yields
	\begin{equation*}
		\begin{split}
			v(t,x)&=\e\left[e^{-r(T-t)}v(T,F_T^{t,x})\right]\\
			&\qquad -\int_t^T \e\left[e^{-r(s-t)}\left(\mathcal{L}_sv(s,F_s^{t,x})+v_t(s,\,F_s^{t,x})-r v(s,\,F_s^{t,x})\right)\right]\diff s.
		\end{split}
	\end{equation*}
	Now, ${\e\left[e^{-r(T-t)}v(T,F_T^{t,x})\right]=\e\left[e^{-r(T-t)}\max(G,F_T^{t,x})\right]=h(t,\,x)}$, and by Corollary \ref{thmFBproblem}, $\mathcal{L}_tv(s,x)+v_t (s,x)-rv(s,x) =0$ for all $(s,\,x)\in\Cr$. Furthermore, for $s<T$, $v(s,\,x)=g(s)x$ when $(s,x)\in\Sr$. It follows that 
	\begin{equation*}
		\mathcal{L}v(s,F_s^{t,x})+v_t(s,\,F_s^{t,x})-r v(s,\,F_s^{t,x})=F_s^{t,x}\left[g_t(s)-c(s)g(s)\right]\ind_{\lbrace (s, F_s^{t,x})\in\mathcal{S}\rbrace},\quad \textrm{a.s.,}
	\end{equation*}
	which concludes the proof.
	\end{proof}

\begin{remark}\label{rmkSurrPremRep}
	If $L(t)<0$ for all $t\in[0,T)$, the surrender region has the particular shape $\Sr=\{(t,x)\in [0,T)\times \reals_+| x\geq b(t)\}$ for some $b(t)\geq Ge^{-r(T-t)}$. This will be shown in the next section in Theorem \ref{thmMainSurrenderRegion}. Under this assumption, the early surrender premium becomes
	\begin{equation}\label{eqSurrBound2}
		e(t,x):=\int_t^T \Big(c(s)g(s)-g_t(s)\Big) \e\Big[e^{-r(s-t)}F_s^{t,x}\ind_{\lbrace  F_s^{t,x}\geq b(s)\rbrace}\Big]\diff s.
	\end{equation}   
\end{remark}

Proposition \ref{propSurPremRepresentation} is a generalization of Theorem 1  and Equation (11) of \cite{bernard2014optimal} to time-dependent fee and surrender charge functions.

\begin{proposition}
	\label{thmContPrem}
	The value function \eqref{eqAmOptVA} can be written as
	\begin{equation}\label{eqLossRep}
		v(t,x)=xg(t,x)+f(t,x),
	\end{equation}
	where ${f:[0,T]\times\reals_+\rightarrow\reals_+ \cup \{0\}}$ is the continuation premium given by
	\begin{equation}\label{eqContPrem22}
		\begin{split}
			f(t,x)& = \e[e^{-r(T-t)}(G-F_T^{t,x})_+]\\
			&\qquad +\int_t^T \Big(g_t(s)-c(s)g(s)\Big) \e\Big[e^{-r(s-t)}F_s^{t,x}\ind_{\lbrace (s, F_s^{t,x})\in\mathcal{C}\rbrace}\Big]\diff s.
		\end{split}
	\end{equation} 
\end{proposition}

The proof of Proposition \ref{thmContPrem} is presented at the end of the section; it relies on the results below.

\begin{remark}\label{rmkContPremRep}
	\begin{sloppypar}
		If $L(t)<0$ for all $t\in[0,T)$, the continuation region is given by ${\Cr=\{(t,x)\in [0,T)\times \reals_+| x< b(t)\}}$ for some $b(t)\geq Ge^{-r(T-t)}$, see Theorem \ref{thmMainSurrenderRegion}. Under this assumption, the continuation premium in \eqref{eqContPrem22} becomes
		\begin{equation}\label{eqLossRepV3}
			\begin{split}
				f(t,x)&:=\e\left[e^{-r(T-t)}\left(G-F_T^{t,x}\right)_+\right] \\
				&\qquad+\int_t^T \Big(g_t(s)-c(s)g(s)\Big) \e\Big[e^{-r(s-t)}F_s^{t,x}\ind_{\lbrace  F_s^{t,x}<b(s)\rbrace}\Big]\diff s.
			\end{split}
		\end{equation}  
	\end{sloppypar}
\end{remark}

Proposition \ref{thmContPrem} decomposes the value function in terms of the immediate surrender value plus a term representing the value of holding on to the contract. This term, which we coin the \textit{continuation} premium, is the sum of the financial guarantee at maturity when the contract is held until $T$ and an integral term equal to the value added by keeping the contract until it is optimal to surrender. 
The representation in \eqref{eqLossRepV3} can be particularly helpful when developing numerical methods for approximating the value of a variable annuity contract, since it uses additional information on the shape of the continuation region. 

It is also possible to decompose the value function into the surrender benefit and the continuation premium in a more general setting, where Assumptions \ref{assumpCHolderContinuous} and \ref{assumpCandg} do not need to hold.
However, in this case, the continuation premium is written in terms of $\tau_t^x$, which complexifies computations. This result is presented in Lemma \ref{propContinuousPrem} below and is used in the proof of Proposition \ref{thmContPrem}.

\begin{lemma}\label{propContinuousPrem}
	The continuation premium in Proposition \ref{thmContPrem} can be written as
	\begin{equation}\label{eqContPrem}
		\begin{split}
			f(t,x)&:= \e\left[e^{-r(T-t)}(G-F_T^{t,x})_+\ind_{\{\tau_t^x= T\}}\right] \\
			&\qquad+\int_t^{T} \e\Big[ e^{-r(u-t)}F_u^{t,x} L(u, F_u^{t,x})\ind_{\lbrace u\leq \tau_t^x\rbrace}\Big]\diff u,
		\end{split}
	\end{equation}
	with $\tau_t^x$ defined in \eqref{eqOtpStop1}.
\end{lemma}
\begin{proof}
	The result is trivial for $t=T$. For the rest of the proof, fix $(t,x)\in[0,T)\times\reals_+$. Recall that $\tau_t^{x}$ is an optimal stopping time for $v(t,x)$ (see Corollary \ref{corrOptStop1}). Hence, we have   
	\begin{equation*}
		v(t,x)=\e\left[e^{-r(\tau_t^{x}-t)} \varphi(\tau_t^x, F_{\tau_t^x}^{t,x})\right].
	\end{equation*}
	\begin{sloppypar}
		Notice that the discounted reward process can be decomposed as 
		\begin{equation*}
			{e^{-r(s-t)}\varphi(s, F_{s}^{t,x})=Y_s^{t,x}+e^{-r(T-t)}(G-F_T^{t,x})_+\ind_{\{s=T\}}}
		\end{equation*}
		with ${Y^{t,x}=\{Y_s^{t,x}\}_{t\leq s\leq T}}$, with $Y_s^{t,x}:=e^{-r(s-t)}g(s,F_s^{t,x})F_s^{t,x}$ and observe that ${Y_T^{t,x}=e^{-r(T-t)}F_T^{t,x}}$ since ${g(T,x)=1}$ for all $x\in\reals_+$. The function $xg(t,x)$ is $C^{1,2}$, so we can apply Itô's formula to $Y$, which yields
	\end{sloppypar}
	\begin{align}
		&e^{-r(T-t)}\varphi(\tau_t^x, F_{\tau_t^x}^{t,x})\\
		&\quad= Y_{\tau_t^x}^{t,x}+ e^{-r(\tau_t^x-t)}(G-F_{\tau_t^x}^{t,x})_+\ind_{\{\tau_t^x=T\}}\nonumber\\
		\begin{split}
			&\quad= xg(t,x)+ e^{-r(T-t)}(G-F_T^{t,x})_+\ind_{\{\tau_t^x=T\}}\\
			&  \qquad +\int_t^{\tau_t^x} e^{-r(s-t)}F_s^{t,x}L(s,F_s^{t,x})  \diff s\\
			&  \qquad+ \int_t^{\tau_t^x} e^{-r(s-t)} \sigma F_s^{t,x}\left(g(s,F_s^{t,x})-g_x(s,F_s^{t,x})F_s^{t,x}\right) \diff W_s.\label{eqZtau1}
		\end{split}
	\end{align} 
	The final result is obtained by taking the expectation on both sides and using the zero-mean property of the stochastic integral and Doob's optional sampling theorem. To complete the proof, note that $f(t,x)\geq 0$ for all $(t,x)\in[0,T]\times\reals_+$, since $v(t,x)\geq\varphi(t,x)$ as per Lemma \ref{lemmaVprop1}.
\end{proof}

\begin{remark}
	As stated above, and as can be seen from its proof, Lemma \ref{propContinuousPrem} holds even when Assumptions \ref{assumpCHolderContinuous} and \ref{assumpCandg} are not satisfied.
\end{remark}

The proof of Proposition \ref{thmContPrem} also relies on the following Lemma, which is used to remove the dependence of the continuation premium on $\tau_t^x$.

\begin{lemma}\label{lemmaContPrem}
	For all $(t,x)\in[0,T]\times\reals_+$, 
	\begin{equation}\label{eqLossFctEquality}
		\begin{split}
			&\e\left[e^{-r(T-t)}(G-F_T^{t,x})_+\ind_{\{\tau_t^x\neq T\}}\right] \\
			&\qquad +\e\left[\int_{\tau_t^x}^T \Big(g_t(s)-c(s)g(s)\Big) e^{-r(s-t)} F_s^{t,x}\ind_{\lbrace (s, F_s^{t,x})\in\mathcal{C}\rbrace}\diff s\right]=0,
		\end{split}
	\end{equation}
	with $\tau_t^x$ defined in \eqref{eqOtpStop1}.
\end{lemma}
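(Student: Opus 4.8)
The plan is to apply Itô's formula for generalized derivatives to the discounted continuation premium $s\mapsto e^{-r(s-t)}f(s,F_s^{t,x})$, where $f(s,y)=v(s,y)-g(s)y$ is the function introduced in Proposition \ref{propContinuousPrem} (so that $f=v-xg$ under Assumption \ref{assumpCandg}), on the stochastic interval $[\tau_t^x,T]$, and to read off the claimed identity from its drift. The case $t=T$ is trivial, since then $\tau_t^x=T$ and both terms of \eqref{eqLossFctEquality} vanish, so I fix $t<T$. Because Corollary \ref{corrBoundedWeakDerivatives} guarantees that $v_x$, $v_t$ and $v_{xx}$ are locally bounded and $v_x$ is continuous, and $(s,y)\mapsto g(s)y$ is $C^{1,2}$, the version of Itô's formula used in Proposition \ref{propSurPremRepresentation} (\cite{krylov2008controlled}, Theorem 2.10.1) applies to $f$.

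First I would identify the drift. A direct computation gives $\mathcal{L}_s\big(g(s)y\big)+\partial_s\big(g(s)y\big)-rg(s)y=y\,L(s)$, with $L(s)=g_t(s)-c(s)g(s)$ under Assumption \ref{assumpCandg}. Combined with the free-boundary characterization of $v$ (Corollary \ref{thmFBproblem}, or Theorem \ref{thmFBproblemGen}), namely $\mathcal{L}_sv+v_t-rv=0$ on $\Cr$ and $v(s,y)=g(s)y$ on $\Sr$, the generator applied to $f$ vanishes on $\Sr$ and equals $-F_s^{t,x}L(s)$ on $\Cr$. Since the parabolic boundary $\partial\Cr$ is Lebesgue-negligible in time along almost every trajectory, $\ind_{\Cr}+\ind_{\Sr}=1$ holds $\diff s$-almost everywhere, so the drift of $e^{-r(s-t)}f(s,F_s^{t,x})$ is $-e^{-r(s-t)}F_s^{t,x}L(s)\,\ind_{\{(s,F_s^{t,x})\in\Cr\}}\diff s$, while its martingale part is $e^{-r(s-t)}\sigma F_s^{t,x}\big(v_x(s,F_s^{t,x})-g(s)\big)\diff W_s$.

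Next I would integrate from $\tau_t^x$ to $T$ and evaluate the boundary terms. At maturity, $f(T,F_T^{t,x})=\max(G,F_T^{t,x})-F_T^{t,x}=(G-F_T^{t,x})_+$. At the stopping time I use the optimality identity $v(\tau_t^x,F_{\tau_t^x}^{t,x})=\varphi(\tau_t^x,F_{\tau_t^x}^{t,x})$ from Corollary \ref{corrOptStop1} together with the continuity of $v$ (Theorem \ref{thmContinuity_v}): on $\{\tau_t^x<T\}$ the point $(\tau_t^x,F_{\tau_t^x}^{t,x})$ lies in the closed set $\Sr$, whence $f(\tau_t^x,F_{\tau_t^x}^{t,x})=0$, while on $\{\tau_t^x=T\}$ one has $f(\tau_t^x,F_{\tau_t^x}^{t,x})=(G-F_T^{t,x})_+$. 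Hence $e^{-r(\tau_t^x-t)}f(\tau_t^x,F_{\tau_t^x}^{t,x})=e^{-r(T-t)}(G-F_T^{t,x})_+\ind_{\{\tau_t^x=T\}}$, and rearranging the integrated identity leaves exactly the left-hand side of \eqref{eqLossFctEquality} equal to the stochastic integral $\int_{\tau_t^x}^T e^{-r(s-t)}\sigma F_s^{t,x}\big(v_x(s,F_s^{t,x})-g(s)\big)\diff W_s$. Taking expectations then gives the claim, provided this integral has zero mean.

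The main obstacle will be making these formal manipulations rigorous. The delicate points are: (i) applying the generalized Itô formula with a random (stopping-time) lower limit, which I would handle by a localization/optional-sampling argument or by conditioning via the strong Markov property; (ii) showing the stochastic integral is a true martingale, for which $|v_x|\leq 1$ (Proposition \ref{propLipchtitzV}) and $g\leq 1$ bound the integrand and $\e\big[\int_0^T (F_s^{t,x})^2\diff s\big]<\infty$ yields square-integrability, exactly as in Proposition \ref{propSurPremRepresentation}; and (iii) confirming that $\partial\Cr$ does not contribute to the time integral, which rests on the smooth-fit regularity of $v$ established in Corollary \ref{corrBoundedWeakDerivatives}.
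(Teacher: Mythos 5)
Your proof is correct, but it takes a genuinely different route from the paper's. You apply the generalized It\^o formula directly to the discounted continuation premium $e^{-r(s-t)}f(s,F_s^{t,x})$, $f=v-gy$, on the random interval $[\tau_t^x,T]$, identify its drift as $-e^{-r(s-t)}F_s^{t,x}L(s)\ind_{\Cr}$ via the free-boundary characterization (Theorem \ref{thmFBproblemGen} is the right citation here, since Corollary \ref{thmFBproblem} presupposes $L<0$, which this lemma does not assume), and read off \eqref{eqLossFctEquality} from the boundary terms plus the zero-mean martingale part. The paper instead works with the \emph{smooth} process $Y_s^{t,x}=e^{-r(s-t)}g(s)F_s^{t,x}$: it applies classical It\^o to $Y$ on $[\tau_t^x,T]$, substitutes into the optimality identity $v(t,x)=\e[e^{-r(\tau_t^x-t)}\varphi(\tau_t^x,F_{\tau_t^x}^{t,x})]$, converts $\int_{\tau_t^x}^T(\cdot)\ind_{\Sr}\diff s$ into $\int_t^T(\cdot)\ind_{\Sr}\diff s$ (the process never visits $\Sr$ before $\tau_t^x$), and then recognizes the surrender premium $e(t,x)$; the lemma follows by subtracting the representation $v=h+e$ of Proposition \ref{propSurPremRepresentation}. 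The trade-off: the paper's route confines all weak-derivative/generalized-It\^o machinery to the already-proven Proposition \ref{propSurPremRepresentation} and only ever applies classical It\^o within this proof, at the cost of being an indirect comparison-of-representations argument; your route is self-contained and makes the identity transparent as ``expectation of a martingale increment equals zero,'' at the cost of invoking the generalized It\^o formula on a stopping-time interval, with the attendant localization issues you flag. One small correction: the $\diff s$-negligibility of $\partial\Cr$ along trajectories does not follow from smooth fit (continuity of $v_x$); it follows because $\partial\Cr_s$ is Lebesgue-null in space and a non-degenerate diffusion spends zero expected time on such sets (Krylov-type occupation estimates) --- the same point the paper glosses over when asserting its drift identity ``a.s.'' in Proposition \ref{propSurPremRepresentation}.
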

\begin{proof}
	The proof is trivial for $t=T$. For the rest of the proof, fix $(t,x)\in[0,T)\times\reals_+$.
	First, note that 
	\begin{equation}
		\begin{split}
			v(t,x)&=\e\left[e^{-r(T-t)} \varphi(T, F_{T}^{t,x})\right] \\
			&\quad+\e\left[e^{-r(\tau_t^x-t)}\varphi(\tau_t^x, F_{\tau_t^x})-e^{-r(T-t)} \varphi(T, F_{T}^{t,x})\right].\label{eqJt2}
		\end{split}
	\end{equation}
	Now recall that the discounted reward process admits the following decomposition 
	\begin{equation*}
		{e^{-r(s-t)}\varphi(s, F_{s}^{t,x})=Y_s^{t,x}+e^{-r(T-t)}(G-F_T^{t,x})_+\ind_{\{s=T\}}},
	\end{equation*}
	with $Y_s^{t,x}=e^{-r(s-t)}g(s)F_s^{t,x}$ for $t \leq s \leq T$. Hence, applying Itô's lemma to $Y^{t,x}=\{Y^{t,x}_s\}_{t \leq s \leq T}$, 
	we obtain   
	\begin{align}
		&e^{-r(T-t)}\varphi(T, F_{T}^{t,x})\nonumber\\ 
		&\qquad=e^{-r(T-t)}Y_{T}^{t,x}+ e^{-r(T-t)}(G-F_T^{t,x})_+\nonumber\\
		&\qquad= e^{-r(\tau_t^x-t)}Y_{\tau_t^x}+e^{-r(T-t)}(G-F_T^{t,x})_+\ind_{\{\tau_t^x=T\}}\nonumber\\
		&  \quad\quad 
		+\int_{\tau_t^x}^T e^{-r(s-t)}F_s^{t,x} (g_t(s)-c(s)g(s))\diff s \nonumber \\
		&  \quad\quad + \int_{\tau_t^x}^T e^{-r(s-t)}g(s)F_s^{t,x}\sigma \diff W_s+ e^{-r(T-t)}(G-F_T^{t,x})_+\ind_{\{\tau_t^x\neq T\}} \nonumber\\
		\begin{split}
			&\qquad= e^{-r(\tau_t^x-t)}\varphi(\tau_t^x, F_{\tau_t^x}^{t,x}) 
			+\int_{\tau_t^x}^T e^{-r(s-t)}F_s^{t,x} (g_t(s)-c(s)g(s))\diff s\\
			&  \qquad + \int_{\tau_t^x}^T e^{-r(s-t)}g(s)F_s^{t,x}\sigma \diff W_s 
			+ e^{-r(T-t)}(G-F_T^{t,x})_+ \ind_{\{\tau_t^x\neq T\}}.\label{eqZtau2}
		\end{split}
	\end{align}
	Replacing \eqref{eqZtau2} in \eqref{eqJt2} yields
	\begin{align}
		v(t,x)&= \e\left[e^{-r(T-t)} \varphi(T, F_{T}^{t,x})\right]+\e\left[\int_{\tau_t^x}^T e^{-r(s-t)}F_s^{t,x} (c(s)g(s)-g_t(s))\diff s\right]\nonumber\\
		& \quad-\e\left[e^{-r(T-t)}(G-F_T^{t,x})_+ \ind_{\{\tau_t^x\neq T\}}\right]\nonumber\\ 
		\begin{split} \label{eqSurrPrem22}
			\phantom{v(t,x)}&= h(t,x) +\e\left[\int_{\tau_t^x}^T e^{-r(s-t)}F_s^{t,x} (c(s)g(s)-g_t(s))\ind_{\{(s,F_s^{t,x}\in\Sr\}}\diff s\right]\\
			& \quad+\e\left[\int_{\tau_t^x}^T e^{-r(s-t)}F_s^{t,x} (c(s)g(s)-g_t(s))\ind_{\{(s,F_s^{t,x}\in\Cr\}}\diff s\right]\\
			& \quad-\e\left[e^{-r(T-t)}(G-F_T^{t,x})_+ \ind_{\{\tau_t^x\neq T\}}\right] ,
		\end{split}
	\end{align}  
	where $h(t,x)$ is defined as in \eqref{eqVAnosurrender}.
	Now note that
	\begin{equation*}
		\int_{t}^{\tau_t^x} e^{-r(s-t)}F_s^{t,x} (c(s)g(s)-g_t(s))\ind_{\{(s, F_s^{t,x})\in\Sr\}} \diff s=0\quad\textrm{ a.s.,}
	\end{equation*}
	since for $s\in [t,\tau_t^x]$, $F_s^{t,x}\in \Cr_s$, by definition of $\tau_t^x$ being the first entry time of $F_s^{t,x}$ in $\Sr_s$ between $t$ and $T$. Thus, 
	\begin{equation*}
		\begin{split}
			&\int_{\tau_t^x}^{T} e^{-r(s-t)}F_s^{t,x} (c(s)g(s)-g_t(s))\ind_{\{(s, F^{t,x}_s)\in\Sr\}} \diff s \\
			&\qquad= \int_{t}^{T} e^{-r(s-t)}F_s^{t,x} (c(s)g(s)-g_t(s))\ind_{\{(s, F_s^{t,x})\in\Sr\}} \diff s,\quad\textrm{ a.s.,}
		\end{split}
	\end{equation*}
	so \eqref{eqSurrPrem22} becomes
	\begin{equation}
		\begin{split}
			v(t,x)&= h(t,x)+e(t,x)\\
			& \quad+\e\left[\int_{\tau_t^x}^T e^{-r(s-t)}F_s^{t,x} (c(s)g(s)-g_t(s))\ind_{\{(s,F_s^{t,x}\in\Cr\}}\diff s\right]\\
			& \quad-\e\left[e^{-r(T-t)}(G-F_T^{t,x})_+ \ind_{\{\tau_t^x\neq T\}}\right],\label{eqSurrPrem4}
		\end{split}
	\end{equation}
	where $e(t,x)$ denotes the surrender premium defined in \eqref{eqSurrBound}.
	The final result is obtained by comparing \eqref{eqSurrPrem} and \eqref{eqSurrPrem4}.
	\end{proof}

We note that Lemma \ref{lemmaContPrem} requires Assumptions \ref{assumpCHolderContinuous} and \ref{assumpCandg} to hold since the proof makes use of the surrender premium representation \eqref{eqSurrPrem} in Proposition \ref{propSurPremRepresentation}. 

Lemmas \ref{propContinuousPrem} and \ref{lemmaContPrem} can now be used to prove Proposition \ref{thmContPrem}.

\begin{proof}[Proof of Proposition \ref{thmContPrem}] 
	When $t=T$, the proof is trivial. For the rest of the proof, fix $(t,x)\in [0,T)$. Using the continuation premium representation in \eqref{eqContPrem} and recalling from the proof of Lemma \ref{lemmaContPrem} that $\int_{t}^{\tau_t^x} e^{-r(s-t)}F_s^{t,x} (c(s)g(s)-g_t(s))\ind_{\{(s, F_s^{t,x})\in\Sr\}} \diff s=0$ a.s., we find that
	\begin{align*}
		v(t,x) &= xg(t)+ \e\left[e^{-r(T-t)}(G-F_T^{t,x})_+\ind_{\{\tau_t^x= T\}}\right]+\e\Big[\int_t^{\tau_t^x}  e^{-r(s-t)}F_s^{t,x} L(s) \diff s\Big]\\
		&= xg(t)+ \e\left[e^{-r(T-t)}(G-F_T^{t,x})_+(1-\ind_{\{\tau_t^x\neq T\}})\right]\\
		& \qquad\qquad +\e\Big[\int_t^{\tau_t^x}  e^{-r(s-t)}F_s^{t,x} L(s) \left(\ind_{\{F_s^{t,x}\in \Cr_s\}})+\ind_{\{F_s^{t,x}\in\Sr_s\}}\right) \diff s\Big]\\
		& = xg(t)+ \e\left[e^{-r(T-t)}(G-F_T^{t,x})_+\right] -\e\left[e^{-r(T-t)}(G-F_T^{t,x})_+\ind_{\{\tau_t^x\neq T\}}\right]\\
		& \qquad\qquad + \e\Big[\int_t^{\tau_t^x}  e^{-r(s-t)}F_s^{t,x} L(s) \ind_{\{F_s^{t,x}\in \Cr_s\}}\diff s\Big]\\
		&= xg(t)+ \e\left[e^{-r(T-t)}(G-F_T^{t,x})_+\right]\\
		& \qquad \qquad + \int_t^T L(s) \e\Big[e^{-r(s-t)}F_s^{t,x} \ind_{\{F_s^{t,x}\in \Cr_s\}}\Big]\diff s,
	\end{align*}
	where the last equality follows from Lemma \ref{lemmaContPrem}. 
	\end{proof}


\subsection{Characterization of the Surrender Region}\label{sectSurrRegion}

In this section, we study the impact of the fee and surrender charge functions on the shape of the surrender (and continuation) region.  
The main results of this section characterize specific types of surrender regions and are summarized in Theorem \ref{thmMainSurrenderRegion} below.
The rest of the section contains auxiliary results that are used in the proof of Theorem \ref{thmMainSurrenderRegion} and that are of interest on their own.
The proof of Theorem \ref{thmMainSurrenderRegion} is presented at the end of the section.

Henceforth, the $t$-section of the surrender (resp. continuation) region is denoted by $\Sr_t$ (resp. $\Cr_t$). That is, for $t \in [0,\,T)$,
\begin{equation*}
	\mathcal{S}_t=\lbrace x\in \reals_+:v(t,x)=\varphi(t,x)\rbrace
	\qquad \text{and} \qquad
	\mathcal{C}_t=\lbrace x \in \reals_+:v(t,x)>\varphi(t,x)\rbrace.
\end{equation*}

Although the shape of the surrender region can vary significantly (see for example Figures 4 and 5 of \cite{Mackay2017}, when the fee rate is constant and the (time-dependent) surrender charge function satisfies certain conditions, \cite{Mackay2014}, Appendix 2.A shows that the surrender region is a connected set. 
The first part of Theorem \ref{thmMainSurrenderRegion} below indicates that as soon as the surrender region is non-empty and the fee and surrender charge functions depend only on time, then $t$-section of $\Sr$ has the form $\Sr_t=[b(t),\infty)$ for some $b(t)\in\reals_+\cup\{\infty\}$.
This result  builds on \cite{jacka1991optimal}, Proposition 2.1 in the context of a bounded and time-homogeneous reward function in a time-homogeneous market model.

The second part of Theorem \ref{thmMainSurrenderRegion} confirms that the optimal stopping time obtained in Proposition \ref{propEDPSurreder} is unique.

\begin{theorem}\label{thmMainSurrenderRegion}
	Let $L$ be defined by \eqref{eqDefinitionLt}.
	\begin{enumerate}[label=(\roman*)]  
		\item\label{thmMainSurrenderRegion1}
		If $L(t)<0$ for all $t\in[0,T)$ then the $t$-sections $\mathcal{S}_t$ are of the form
		\begin{equation}
			\mathcal{S}_t = [b(t),\,\infty),\label{eqSurrBoundDefn}
		\end{equation}
		for some $b(t)\in \reals_+\cup\{\infty\}$ satisfying $b(t)\geq Ge^{-r(T-t)}$ and $t\in[0,T)$.
		\item\label{thmMainSurrenderRegion2}
		If $L(t)\geq 0$ for all $t\in[0,T)$, then $\Sr=\emptyset$ and $T$ is the unique optimal stopping time for \eqref{eqAmOptVA}.
	\end{enumerate}
\end{theorem}

The proof of Theorem \ref{thmMainSurrenderRegion} is presented at the end of the section, as it builds upon additional results presented below.

\begin{remark}\label{remarkMainThmNoAssumptions}
	The proof of part \ref{thmMainSurrenderRegion2} of Theorem \ref{thmMainSurrenderRegion} does not require Assumptions \ref{assumpCHolderContinuous} and \ref{assumpCandg} to hold. In this more general setting, if $L(t,x)\geq 0$ for all $(t,x)\in[0,T)\times\reals_+$, then $\Sr=\emptyset$ and $T$ is the unique optimal stopping time for \eqref{eqAmOptVA}.
\end{remark}

Under the assumptions of Theorem \ref{thmMainSurrenderRegion}, $b(t)$ is the smallest account value for which it is optimal to surrender the contract at time $t<T$, and for any account value greater than $b(t)$, it is also optimal to surrender, so that
\begin{equation}
	b(t) = \inf\left\{x\in\reals_+\Big|x\in \mathcal{S}_{t}\right\}=\inf\{\mathcal{S}_{t}\},
\end{equation}
with $b(t)=\infty$ if $\Sr_t=\emptyset$.
Under these assumptions, the continuation and the surrender regions can be expressed as 
\begin{equation*}
	\mathcal{C}=\left\lbrace (t,x) \in [0,T)\times\reals_+ \Big|x< b(t)\right\rbrace, 
\end{equation*}
and
\begin{equation*}
	\mathcal{S}=\left\lbrace (t,x) \in [0,T)\times\reals_+\Big|x\geq b(t)\right\rbrace,
\end{equation*}
respectively.
When $b(t)<\infty$ for all $t\in[0,T)$, the surrender boundary splits $[0,T)\times\reals_+$ in two regions: the surrender region is at or above the boundary, and the continuation region is below. 
It follows that the set $\mathcal{S}$ is connected. 
Henceforth, we say that the surrender region is of ``threshold type'' if for any $t\in[0,T)$, there exists a $b(t)<\infty$ such that $\Sr_t=[b(t),\infty)$.
Such a geometry for the surrender region can be explained, as in \cite{milevsky2001real}, by the fact that when the account value is low, it is financially advantageous for the policyholder to hold on to the contract since there is a significant chance that the guarantee will be triggered at maturity.
Since this guarantee is financed by the policyholder via the continuous fee, which reduces the net return on the account, there is a point above which it is no longer profitable to hold the contract and continue paying the fee; this threshold is the optimal surrender boundary.

The rest of this section presents further results on the (non-)emptiness of the surrender region and ends with the proof of Theorem \ref{thmMainSurrenderRegion}. Many of the results presented here are inspired by the work of \cite{villeneuve1999exercise}, which we adapted to the time-dependent and discontinuous reward function considered in \eqref{eqAmOptVA}. As pointed out by \cite{villeneuve1999exercise}, Remark 2.1, adapting some of their results to a time-dependent payoff is not trivial. 

Proposition \ref{propNonEmptyS_t} characterizes the non-emptiness of the surrender region. The first part involves $\text{relint} (\Sr_t)$, the relative interior of a $t$-section $\Sr_t$, which is defined as its interior within the affine hull of $\Sr_t$. If it is not empty, the relative interior of $\Sr_t$ forms a vertical line in the plane. If $\text{relint}(\Sr_t) \in \text{int}(\Sr)$, it contains some finite $x$ for which $(t,x) \in \text{int}(\Sr)$.
The first part of Proposition \ref{propNonEmptyS_t} \ref{enumStempty3} can be seen as a local version of the second part.


\begin{proposition} \label{propNonEmptyS_t}
	Let $L$ be defined by \eqref{eqDefinitionL}. 
	\begin{enumerate}[label=(\roman*)]  
		\item \label{enumStnonEmpty1}
		Let $\Sr \neq\emptyset$. For each $t\in [0,T)$, if $\text{relint}(\Sr_t) \subset \text{int} (\Sr)$, then $L(t) \leq 0.$
		\item \label{enumStempty3}
		If $L(t)<0$ for all $t\in[0,T)$, then 
		\begin{itemize}
			\item[$\bullet$]
			$\Sr \cap ([t_1,t_2) \times \reals_+) \neq \emptyset$ for any $0 \leq t_1<t_2 < T$, and
			\item[$\bullet$]
			$\Sr\neq \emptyset$ and $\Cr\neq \emptyset$.
		\end{itemize}  
	\end{enumerate}
\end{proposition}

\begin{proof}
	
	\begin{enumerate}[label=(\roman*)]
		
		\item 
		From the definition of the value function on $\Sr$ and of the surrender charge function $g$, $v\in C^{1,2}(\text{int}(\Sr))$. 
		Observe that for $x\in \text{relint}(\Sr_t)$, $v(t,x)=xg(t)$, so that $xL(t)=\mathcal{L}_tv+v_t-rv$. 
		Recall from Proposition \ref{propVI2} that $\mathcal{L}_tv+v_t-rv\leq 0$ for all $(t,x) \in (0,T) \times \reals_+$. The result follows since $x > 0$.
		
		\item 
		To show that $\Sr \cap ([t_1,t_2) \times \reals_+) \neq \emptyset$ for any $0 \leq t_1<t_2 < T$, we proceed by contradiction. Fix $0 \leq t_1 < t_2 < T$. 
		Suppose that $L(s)< 0$ for all $s\in[0,T)$ and $\Sr \cap ([t_1,t_2) \times \reals_+) = \emptyset$ . 
		This implies $\Cr_t=\reals_+$ for all $t\in[t_1,t_2)$, so that $v(t,x)-\varphi(t,x)>0$ for all $(t,x)\in [t_1,t_2)\times\reals_+$. Fix $(t,x)\in [t_1,t_2)\times\reals_+$. 
		Using the continuation premium representation \eqref{eqContPrem22} with $\Cr_t=\reals_+$ for all $t\in[t_1,t_2)$, we have 
		\begin{equation*}
			\begin{split}
				v(t,x)-\varphi(t,x)
				&=\e\left[e^{-r(T-t)}\left(G-F_T^{t,x}\right)_+\right]\\
				&\quad+\int_t^TL(s)\e\left[e^{-r(s-t)}F_s^{t,x}\ind_{\{F_s^{t,x}\in\Cr_s\}}\right] \diff s > 0, 
			\end{split}
		\end{equation*}
		so that,
		\begin{align}
			\e\left[e^{-r(T-t)}\left(G-F_T^{t,x}\right)_+\right]&>-\int_t^TL(s)\e\left[e^{-r(s-t)}F_s^{t,x} \ind_{\{F_s^{t,x}\in\Cr_s\}}\right] \diff s\nonumber\\
			&\geq- x \sup_{t\leq r<t_2} L(r) \int_{t}^{t_2}e^{-\int_t^{s}c(u)\diff u}\diff s.\label{eqProofStlocalNonEmpty}
		\end{align}
		\begin{sloppypar}
			Note that the function on the right-hand side is positive, equal to 0 when $x=0$, strictly increasing in $x$ (since $L(t)<0$ by assumption) and approaches infinity as $x \rightarrow \infty$. 
			Then, since the left-hand side of \eqref{eqProofStlocalNonEmpty} is bounded by $G e^{-r(T-t)}$,
			there must exist ${y\in\reals_+}$ such that  for all ${x\geq y}$, ${\e\left[e^{-r(T-t)}\left(G-F_T^{t,x}\right)_+\right]\leq-x \sup_{t\leq s<t_2} L(s) \int_{t}^{t_2}e^{-\int_t^{s}c(u)\diff u}\diff s}$, contradicting the strict inequality in \eqref{eqProofStlocalNonEmpty} which must hold for all ${x\in\reals_+}$. 
			We conclude that or $\Sr \cap ([t_1,t_2) \times \reals_+) \neq \emptyset$.
		\end{sloppypar}
		
		The first part of the second statement, which states that $\Sr$ is not empty, is obtained by taking $t_1 = 0$ and letting $t_2 = \lim_{\epsilon \rightarrow 0^+} T-\epsilon$ in the first statement. 
		To show that the continuation region is not empty either,
		we use the continuation premium representation \eqref{eqContPrem22}, so that
		\begin{align}
			\begin{split}
				v(t,x)-xg(t)& = \e\left[e^{-r(T-t)}(G-F_T^{t,x})_+\right] \\
				&\qquad +\int_t^TL(s)\e\left[e^{-r(s-t)}F_s^{t,x}\ind_{\{F_s^{t,x}\in\Cr_s\}}\right]\diff s,\label{eqContPremRepProof}
			\end{split}
		\end{align}
		for all $(t,x) \in [0,T]\times\reals_+$. 
		We want to show that there exists some $(t_1,x_1)\in [0,T) \times \reals_+$ satisfying ${v(t_2,x_2)>\varphi(t_2,x_2)}$.
		We proceed by contradiction. 
		
		Suppose that $L(s)< 0$ for all $s\in[0,T)$ and $\Cr=\emptyset$. Fix $t \in [0,T)$. Since $\Cr=\emptyset$, \eqref{eqContPremRepProof} becomes
		\begin{equation*}
			v(t,x)-\varphi(t,x)=v(t,x)-xg(t)= \e\left[e^{-r(T-t)}(G-F_T^{t,x})_+\right].
		\end{equation*}
		
		Now note that $\Cr_t=\emptyset$ implies $v(t,x)=\varphi(t,x)$ for all $x\in\reals_+$. However, we know that for all $x\in\reals_+$, $\e\left[e^{-r(T-t)}(G - F_T^{t,x})_+\right]>0$, leading to a contradiction.
		Thus, we conclude that $\Cr_t \neq \emptyset$, and thus $\Cr \neq \emptyset$.
	\end{enumerate}
\end{proof}

Proposition \ref{propEmptyS_t} characterizes the emptiness of the surrender region.

\begin{proposition}\label{propEmptyS_t}
	Let $L$ be defined by \eqref{eqDefinitionLt}.
	\begin{enumerate}[label=(\roman*)]  
		\item \label{enumStEmptyGlobal}
		For each $t\in [0,T)$, if $L(t) > 0$, then $\Sr_t = \emptyset$ or $\Sr_t \subseteq \partial \Sr$. 
		\item \label{enumStEmptyLocal}
		For any $[t_1,t_2)\subset[0,T)$ such that $t_1<t_2$, if $L(t)>0$ for all $t\in[t_1,t_2)$, then $\Sr \cap ([t_1,t_2) \times \reals_+)=\emptyset$.
		\item \label{enumStEmpty4}
		If $\Sr =\emptyset$, then $g(t)\leq e^{-\int_t^Tc(s)\diff s}$ for all $t\in[0,T)$.
	\end{enumerate}
\end{proposition} 

\begin{remark}\label{remStEmpty_assumptions}
	Some of the assumptions of Proposition \ref{propEmptyS_t} can be relaxed. The proof of part \ref{enumStEmptyGlobal} only requires Assumption \ref{assumpCandg}. Furthermore, part \ref{enumStEmptyLocal} holds in a general setting, without Assumptions \ref{assumpCHolderContinuous} and \ref{assumpCandg}. In this general setting, it holds that for any $[t_1,t_2)\subset[0,T)$ such that $t_1<t_2$, if $L(t,x)>0$ for all $(t,x)\in[t_1,t_2)\times\reals_+$, then $\Sr \cap ([t_1,t_2) \times \reals_+)=\emptyset$.    
\end{remark}

\begin{proof} 
	\begin{enumerate}[label=(\roman*)] 
		\item
		This statement is obtained by observing the contrapositive of Proposition \ref{propNonEmptyS_t} \ref{enumStnonEmpty1}.
		
		\item
		Since $\Sr \cap ([t_1,t_2) \times \reals_+)=\emptyset$ implies ${\Cr \cap ([t_1,t_2) \times \reals_+)=[t_1,t_2)\times\reals_+}$, we need to show that $v(t,x)>\varphi(t,x)$ for all $(t,x)\in[t_1,t_2)\times\reals_+$.
		Fix $(t,x)\in[t_1,t_2)\times\reals_+$ and recall from the proof of Lemma \ref{propContinuousPrem} that for any $0<t<s\leq T$
		\begin{equation*}
			e^{-r(s-t)}\varphi(s,F_s^{t,x})=Y_s^{t,x}+ e^{-r(T-t)}(G-F_T^{t,x})_+\ind_{\{s=T\}},
		\end{equation*}
		where $Y^{t,x}=\{Y_s^{t,x}\}_{t\leq s\leq T}$, with $Y_s^{t,x}=e^{-r(s-t)}g(s, F_s^{t,x})F_s^{t,x}$, is the discounted surrender value process. Hence, applying Itô's formula to $Y^{t,x}$, we find that
		\begin{align*}
			\e\left[e^{-r(t_2-t)}\varphi(t_2,F_{t_2}^{t,x})\right] & = xg(t) +\int_{t}^{t_2}  \e[L(s,F_s^{t,x}) e^{-r(s-t)}F_s^{t,x}]\diff s \\
			& \qquad\qquad\qquad+\e\left[e^{-r(T-t)}(G-F_T^{t,x})_+\ind_{\{t_2=T\}}\right]\\
			& > xg(t)=\varphi(t,x),
		\end{align*}
		where the last inequality follows since $L(t)>0$ for all $t \in[t_1,t_2)$. It follows that
		\begin{align*}
			v(t,x) & \geq \e\left[e^{-r(t_2-t)}\varphi(t_2,F_{t_2}^{t,x})\right] > \varphi(t,x),
		\end{align*}
		for all $(t,x)\in[t_1,t_2)\times\reals_+$, which completes the proof.
		
		\item
		Since $\Sr=\emptyset$, $\Cr=[0,T)\times\reals_+$, $v(t,x)-\varphi(t,x)>0$ for all $(t,x)\in[0,T)\times\reals_+$, which implies
		\begin{equation}
			\e\left[e^{-r(T-t)}(G-F_T^{t,x})_+\right]> x\left(g(t)-e^{-\int_t^T c(u) \diff u}\right).\label{eqProof }
		\end{equation}
		As observed in the proof of Proposition \ref{propNonEmptyS_t} \ref{enumStempty3}, the expectation on the left-hand side is a continuous and monotonically decreasing function of $x$, is equal to $Ge^{-r(T-t)}$ when $x=0$ and decreases to $0$ as $x\rightarrow\infty$. 
		Therefore, in order for the inequality to hold for all $(t,x) \in [0,T)\times\reals_+$, it must be true that
		\begin{equation}
			\label{eqRHSnegative}
			g(t)-e^{-\int_t^T c(u) \diff u} \leq 0
		\end{equation}
		for all $t \in [0,T)$.   
	\end{enumerate}
\end{proof}

\begin{remark}
	The first statements of Propositions \ref{propNonEmptyS_t} and \ref{propEmptyS_t} can apply to surrender charge functions that depend on both time and the value of the underlying process. 
\end{remark}

\begin{remark}\label{rmkL(t)=0}
	The statements of Proposition \ref{propNonEmptyS_t} \ref{enumStempty3} and \ref{propEmptyS_t} \ref{enumStEmptyGlobal} do not cover the case where $L(t)=0$. 
	Heuristically, when $L(t)=0$ for all $0\leq t\leq T$, the discounted surrender value process $\{Y_t\}_{0\leq t\leq T}$, with $Y_t=e^{-rt}F_tg(t)$, is a martingale, so one might expect that all stopping times $\tau$ such that $0\leq \tau\leq T$ are optimal, implying that ${\Sr=[0,T)\times\reals_+}$, see for instance \cite{bjork2009}, Proposition 21.2.
	However, this is incorrect; due to the time discontinuity of the reward function in \eqref{eqAmOptVA}, the discounted reward process $\{Z_t\}_{0\leq t\leq T}$, with $Z_t=e^{-rt}\varphi(t,F_t)$, is a submartingale (see Lemma \ref{lemmaVarphiSG}). 
	Hence, the policyholder can always profit (on average) from holding on to the contract because of the guaranteed amount at maturity. 
	The optimal stopping time is then $T$ and the surrender region is empty. 
	This illustrates another major difference  
	between the optimal stopping problem in \eqref{eqAmOptVA} and the one involved in the pricing of standard American options.
	
	The specificity of the VA pricing problem discussed above is also illustrated in the continuation premium representation in \eqref{eqContPrem22}.
	Indeed, when ${L(t)=0}$ for all $t\in[0,T]$, then for any $x\in\reals_+$, ${f(t,x)=\e[e^{-r(T-t)}(G-F_T^{t,x})_+]>0}$. 
	That is, the continuation premium $f$ is equal to the expected present value of the financial guarantee. 
	Hence, $v(t,x)=\varphi(t,x)+f(t,x)>\varphi(t,x)$ for all $(t,x)\in [0,T)\times\reals_+$, which implies that $\Sr=\emptyset$. 
\end{remark}

Propositions \ref{propNonEmptyS_t} and \ref{propEmptyS_t} can now be used to prove Theorem \ref{thmMainSurrenderRegion}.

\begin{proof}[Theorem \ref{thmMainSurrenderRegion}]
	
	To show part \ref{thmMainSurrenderRegion1}, let $L(t)<0$ for all $t\in[0,T)$.
	It follows from Proposition \ref{propNonEmptyS_t} \ref{enumStempty3} that $\Sr\neq\emptyset$, and thus there exists $t \in [0,T)$ for which $\Sr_t\neq \emptyset$. 
	Fix $t\in[0,T)$. If $\Sr_t=\emptyset$ then $b(t)=\infty$ and the proof is complete. 
	If $\Sr_t\neq \emptyset$, define
	\begin{equation*}
		b(t)=\inf\{x\in\reals_+|v(t,x)=\varphi(t,x)\}=\inf\{\Sr_t\}.
	\end{equation*}
	Since $\Sr_t$ is non-empty, we know that $b(t)<\infty$. Moreover, since $v$ and $\varphi$ are continuous on $[0,T)$ (see Theorem  \ref{thmContinuity_v}), $\Sr$ is a closed set that is bounded below, so that $b(t)\in\Sr_t$ and satisfies ${v(t,b(t))=\varphi(t,b(t))}$. 
	Thus, for $t<T,$
	\begin{align*}
		b(t)\geq g(t)b(t)=v(t,b(t))\geq \e\left[e^{-r(T-t)}\max(G,F_T^{t,b(t)})\right]\geq G^{-r(T-t)},
	\end{align*}
	since $g$ takes values in $(0,1]$.

	Next, we show that $\mathcal{S}_t= [b(t),\,\infty)$. 
	Fix $t\in [0,\,T)$ and note that if $\mathcal{S}_t=[b(t),\infty)$, then $\mathcal{C}_t=(0,b(t))$. Thus, we need to prove that $x\in\Cr_t\Rightarrow y\in\Cr_t$ for any $y<x$, which we do next.
	
	Recall that $\tau_{t}^x=\inf\left\lbrace t \leq s\leq T\,\big|\, v(s, F_s^{t,x})= \varphi(s, F_s^{t, x})\right\rbrace$ is optimal for $v(t,x)$. Hence, we have that
	\begin{align}
		0&\geq v(t,y)-v(t,\,x)\quad\text{ (since $x\mapsto v(t,x)$ is non-decreasing, see Lemma \ref{lemmaVprop2})}\nonumber\\
		&\geq \e\left[e^{-r(\xtau-t)}\varphi(\xtau,F_{\xtau}^{t,y})-e^{-r(\xtau-t)}\varphi(\xtau,F_{\xtau}^{t,x})\right]\nonumber\\
		\begin{split} \label{eqMaxF1}  
			&=\e\left[e^{-r(\xtau-t)}g(\xtau)e^{(r-\sigma^2/2)(\xtau-t)-\int_{t}^{\xtau}c(s)\diff s+\sigma W_{\xtau-t}} (y-x)\ind_{\lbrace \xtau <T\rbrace}\right] \\
			&  \quad + \e\left[e^{-r(\xtau-t)}\left\lbrace \max(G,\,F_{\xtau}^{t,y})-\max(G,\,F_{\xtau}^{t,x})\right\rbrace\ind_{\lbrace \xtau =T\rbrace}\right]. 
		\end{split}
	\end{align} 
	For all $\omega\in\Omega$ and $\tau\in\mathcal{T}_{t,T}$, we have 
	$F_{\tau(\omega)}^{t,x}(\omega)> F_{\tau(\omega)}^{t,y}(\omega)$, since $y< x$.
	It follows that $(G-F_{\tau}^{t,y})_+-(G-F_{\tau}^{t,x})_+\geq 0$, for all $\omega\in\Omega$, so that
	\begin{align}
		&  \e\left[e^{-r(\xtau-t)}\left\lbrace \max(G,\,F_{\xtau}^{t,y})-\max(G,\,F_{\xtau}^{t,x})\right\rbrace\ind_{\lbrace \xtau =T\rbrace}\right]\nonumber\\ 
		&  \quad \geq \e\left[e^{-r(\xtau-t)} \left\lbrace F_{\xtau}^{t,y}-F_{\xtau}^{t,x} \right\rbrace\ind_{\lbrace \xtau =T\rbrace}\right]\nonumber\\
		& \quad = \e\left[e^{-r(\xtau-t)}g(\xtau)e^{(r-\sigma^2/2)(\xtau-t)-\int_{t}^{\xtau}c(s)\diff s+\sigma W_{\xtau-t}} (y-x)\ind_{\lbrace \xtau =T\rbrace}\right],\label{eqMaxF2}
	\end{align} 
	since $g(T)=1$. 
	Combining \eqref{eqMaxF1} and \eqref{eqMaxF2} yields
	\begin{align}
		0	\geq v(t,y)-v(t,\,x)&\geq (y-x)  \e\left[e^{-r(\xtau-t)}g(\xtau)e^{(r-\sigma^2/2)(\xtau-t)-\int_{t}^{\xtau}c(s)\diff s+\sigma W_{\xtau-t}}\right]\nonumber\\
		&= g(t)(y-x)  \e\left[\frac{g(\xtau)}{g(t)}e^{-\int_{t}^{\xtau}c(s)}e^{\sigma W_{\xtau-t}-\frac{\sigma^2}{2}(\xtau-t)\diff s} \right]\nonumber\\
		&= g(t)(y-x)  \e\left[e^{\int_t^{\xtau}\frac{\diff \ln g(s)}{\diff s}-c(s)\diff s}e^{\sigma W_{\xtau-t}-\frac{\sigma^2}{2}(\xtau-t)} \right]\nonumber\\
		&> g(t) (y-x) \e\left[e^{\sigma W_{\xtau-t}-\frac{\sigma^2}{2}(\xtau-t)}  \right] 
		\\
		&= g(t)(y-x).\nonumber
	\end{align}
	Finally, since $x\in\Cr_t$, we have $v(t,\,x)>g(t) x$, so that $v(t,y) > g(t)y$, which implies $y\in\Cr_t$. 
	
	For part \ref{thmMainSurrenderRegion2},  we have from Proposition \ref{propEDPSurreder} that $v(t,x)=h(t,x)$, for all $(t,x)\in[0,T]\times \reals_+$. Now using Itô's lemma on the discounted surrender value process and the zero-mean property of the stochastic integral, we find that for any $(t,x) \in [0,T)\times\reals_+$, 
	\begin{align*}
		v(t,x)&=\e\left[e^{-r(T-t)}\max(G,F_T^{t,x})\right]\\
		&=\e\left[e^{-r(T-t)} g(T,F_T^{t,x}) F_T^{t,x}\right]+\e\left[e^{-r(T-t)}(G-F_T^{t,x})_+\right]\\
		&=g(t,x)x+\e\left[\int_t^Te^{-r(s-t)}F_s^{t,x} L(s,F_s^{t,x}) \diff s\right]+\e\left[e^{-r(T-t)}(G-F_T^{t,x})_+\right]\\
		&>g(t,x)x.
	\end{align*}
	\begin{sloppypar}
		Hence, $\Sr=\{(t,x)\in[0,T)\times\reals_+| v(t,x)=\varphi(t,x)\}=\emptyset$, so that ${\tau_t^x=\inf\{t\leq s\leq T | v(s,F_s^{t,x})=\varphi(s,F_s^{t,x})\}=T}$. Now since $\tau_t^x$ is the smallest optimal stopping time for \eqref{eqAmOptVA},  as per Theorem \ref{thmElkaroui81} \ref{enumElKarouiMinTau}, we conclude that it is unique.
	\end{sloppypar}

\end{proof}

\subsection{Equivalence of the Optimal Stopping Problems}\label{sectionEquivalenceProblem}
In this section, we compare the optimal stopping problems with continuous and discontinuous reward functions introduced in Section \ref{sectOptStop}, and provide a simple condition under which the two problems lead to the exact same surrender regions and optimal stopping times.

\begin{proposition}\label{propSurrRegionsEqual}
	Let $L$ be defined by \eqref{eqDefinitionLt}.
	If $L(t)<0$ for all $t\in[0,T)$, then $\mathcal{S}=\tilde{\mathcal{S}}$.
\end{proposition}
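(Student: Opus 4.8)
The plan is to prove the two inclusions $\Sr\subseteq\tilde{\Sr}$ and $\tilde{\Sr}\subseteq\Sr$ separately, and to reduce the second (hard) one to a strict positivity statement for the surrender premium $e:=v-h$ on the continuation region. First note that taking $\tau=t$ and $\tau=T$ in the continuous representation \eqref{eqVASurOpt3} gives $v(t,x)\geq xg(t)\vee h(t,x)$, so $e\geq 0$ everywhere. On $\Sr$ we have $v=xg(t)$ and, by Lemma \ref{lemmaStoppingRegionTau}, $xg(t)\geq h(t,x)$; hence $xg(t)\vee h(t,x)=xg(t)=v$, which shows $\Sr\subseteq\tilde{\Sr}$, equivalently $\tilde{\Cr}\subseteq\Cr$. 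Since $\tilde{\Cr}=\{v>xg(t)\vee h\}=\Cr\cap\{v>h\}=\Cr\cap\{e>0\}$, the desired equality $\Sr=\tilde{\Sr}$ is \emph{equivalent} to $\tilde{\Cr}=\Cr$, i.e. to $e(t,x)>0$ for every $(t,x)\in\Cr$. Thus the whole proposition reduces to showing that the surrender premium is strictly positive throughout the continuation region.

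I would establish this by contradiction, using the free-boundary structure. Under $L(t)<0$, Proposition \ref{propBoundary} gives $\Sr_t=[b(t),\infty)$ with $b(t)\geq Ge^{-r(T-t)}\geq Ge^{-rT}>0$, so $\Cr=\{x<b(t)\}$ and, because the strip $\{0<x<Ge^{-rT}\}$ is contained in every $\Cr_t$, the set $\Cr$ (and each $\Cr\cap\{t\geq t_0\}$) is connected. On $\Cr$ both $v$ (Theorem \ref{thmFBproblemGen}) and $h$ (Feynman--Kac, Remark \ref{rmkRegularity_v}) are $C^{1,2}$ and satisfy $\mathcal{L}_tf+f_t-rf=0$, so $e=v-h\geq 0$ is a $C^{1,2}$ solution of the same parabolic equation in $\Cr$. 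If $e(t_0,x_0)=0$ at some interior point, then after the standard reduction removing the zeroth-order term (passing to $e^{r(T-t)}e$) and reversing time, the strong minimum principle forces $e$ to vanish on the component of $\Cr$ reachable from $(t_0,x_0)$ forward in time; moving horizontally into the common strip and then forward in $t$, this component is all of $\Cr\cap\{t\geq t_0\}$. Hence $v\equiv h$ on $\Cr\cap\{t\geq t_0\}$.

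It remains to contradict this. By Lemma \ref{lemmaNonEmptySLocal} there is a time $s^*\geq t_0$ with $\Sr_{s^*}\neq\emptyset$, i.e. $b(s^*)<\infty$. At the boundary point $(s^*,b(s^*))$, continuity of $v$ gives $h(s^*,b(s^*))=b(s^*)g(s^*)$, while the smooth-fit property (continuity of $v_x$, Corollary \ref{corrBoundedWeakDerivatives}), together with $v=h$ on the $\Cr$-side and $v=xg$ on the $\Sr$-side, forces $h_x(s^*,b(s^*))=g(s^*)$. Writing $h(s^*,x)=e^{-r(T-s^*)}\bigl(G+\e[(x\eta-G)_+]\bigr)$ with $\eta$ a positive lognormal variable and $G>0$, the map $x\mapsto h(s^*,x)$ is \emph{strictly} convex, so the tangency $h(s^*,b)=bg(s^*)$, $h_x(s^*,b)=g(s^*)$ yields $h(s^*,x)>xg(s^*)$ for all $x>b(s^*)$. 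This contradicts Lemma \ref{lemmaStoppingRegionTau}, which gives $xg(s^*)\geq h(s^*,x)$ on $\Sr_{s^*}=[b(s^*),\infty)$. Therefore $e>0$ on $\Cr$, giving $\tilde{\Sr}\subseteq\Sr$ and hence $\Sr=\tilde{\Sr}$.

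The hard part, and the reason a naive argument fails, is precisely the reachability of $\Sr$. From the surrender-premium formula \eqref{eqSurrBound} one has $e(t,x)>0$ iff the discounted fund accumulates positive (weighted) occupation time in $\Sr$; but a priori $\Sr$ could have time-slices that are nonempty only on a Lebesgue-null (though dense, by Lemma \ref{lemmaNonEmptySLocal}) set of times, so that the process touches $\Sr$ without ever accumulating occupation time, and an occupation-time/reachability argument does not close. The PDE route above is designed to avoid this: the strong minimum principle propagates $e\equiv 0$ from a single point to an open set, and the contradiction is then obtained at just \emph{one} nonempty slice via smooth fit and strict convexity of $h$, so only a single point of $\Sr$ (which Lemma \ref{lemmaNonEmptySLocal} always supplies) is required.
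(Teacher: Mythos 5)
Your proof is correct, and it reaches the conclusion by a genuinely different route than the paper. The first inclusion ($\Sr\subseteq\tilde{\mathcal{S}}$ via Lemma \ref{lemmaStoppingRegionTau}) and the reduction of the converse to strict positivity of $e=v-h$ on $\Cr$ coincide with the paper's strategy: the paper isolates this positivity as Lemma \ref{lemmaValueFonction} (proving in fact $v>h$ on all of $[0,T)\times\reals_+$, not just on $\Cr$). The difference lies in how the positivity is obtained. The paper argues probabilistically: assuming $v(t,x)=h(t,x)$ at some point, it applies It\^o's formula to the discounted surrender value process to write $h(t,x)=xg(t)+\e\big[\int_t^T e^{-r(s-t)}F_s^{t,x}L(s)\diff s\big]+\e\big[e^{-r(T-t)}(G-F_T^{t,x})_+\big]$ and then invokes a large-$x$ comparison against $v\geq\varphi$, ``as in Lemma \ref{lemmaNonEmptySLocal}''. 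You argue analytically: $v$ and $h$ solve the same parabolic equation on $\Cr$ (Theorem \ref{thmFBproblemGen} and Feynman--Kac), the strong minimum principle propagates a zero of $e$ forward in time through the connected set $\Cr$, and the contradiction is extracted at a single nonempty $t$-section of $\Sr$ (supplied by Lemma \ref{lemmaNonEmptySLocal}) via smooth fit (Corollary \ref{corrBoundedWeakDerivatives}) and strict convexity of $h(s^*,\cdot)$. Two comparative remarks. First, your route is heavier: it needs Theorem \ref{thmFBproblemGen}, Proposition \ref{propBoundary}, Corollary \ref{corrBoundedWeakDerivatives} and Lemma \ref{lemmaNonEmptySLocal}, hence implicitly the H\"older-continuity part of Assumption \ref{assumpCHolderContinuous}, which is not automatic under Assumption \ref{assumpCandg} alone, while the proposition as stated assumes only Assumption \ref{assumpCandg} (the paper's own chain of citations is not immune to this either). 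Second, your route is rigorous exactly where the paper is terse: as written, the paper's contradiction is asserted at the fixed point $(t,x)$ where $v=h$ is assumed, and a large-$x$ argument does not apply at a fixed $x$ (moreover $\sup_{t\leq s\leq T}L(s)$ may vanish even when $L<0$ on $[0,T)$); your minimum-principle propagation is one honest way of turning a single zero of $e$ into a statement at arbitrarily large $x$.

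That said, your closing paragraph overstates the obstruction to the probabilistic route. With the threshold structure $\Sr_s=[b(s),\infty)$ of Proposition \ref{propBoundary} in hand --- which you use anyway --- the occupation-time argument does close, with no PDE input: if $e(t,x_0)=0$, then by \eqref{eqSurrBound} and $-L(s)>0$ one gets $\qp\big(F_s^{t,x_0}\geq b(s)\big)=0$ for a.e.\ $s\in(t,T)$, hence $b(s)=\infty$ for a.e.\ $s$, since $F_s^{t,x_0}$ has full support on $\reals_+$; because this exceptional time set does not depend on the initial value, $e(t,y)=0$, i.e.\ $v(t,y)=h(t,y)$, for \emph{every} $y\in\reals_+$. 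Then $v\geq\varphi$ forces $y\big(g(t)-e^{-\int_t^T c(u)\diff u}\big)\leq Ge^{-r(T-t)}$ for all $y$, which is impossible since $L<0$ implies $g(t)>e^{-\int_t^T c(u)\diff u}$. This is presumably the argument the paper intends, and it shows that the ``null-in-time surrender sections'' pathology you worry about can be excluded by elementary means; your strong-minimum-principle argument is a valid, self-contained alternative rather than the only way out.
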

The proof of Proposition \ref{propSurrRegionsEqual} relies on the following lemma.
\begin{lemma}\label{lemmaValueFonction}
	If $L(t) < 0$ for all $t\in[0,T)$,
	${v(t,x)>h(t,x)}$ for all $(t,x)\in [0,T)\times\reals_+$, where ${h(t,x)=\e\left[e^{-r(T-t)}\max(G,F_T^{t,x})\right]}$.
\end{lemma}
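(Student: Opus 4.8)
The plan is to show that the \emph{surrender premium} $e(t,x)=v(t,x)-h(t,x)$, which is nonnegative by construction, is in fact strictly positive. Under Assumption~\ref{assumpCandg} the hypotheses of Propositions~\ref{propSurPremRepresentation} and~\ref{propBoundary} are met (\ref{assumpLips_g} and \ref{assumpCHolderContinuous}\ref{enumCHolderContinuouspart1} being automatic once $g$ and $C$ depend only on $t$), so since $L(t)<0$ for all $t$ the surrender region is of threshold type, $\Sr_s=[b(s),\infty)$, and the surrender premium representation \eqref{eqSurrBound2} reads
\[
e(t,x)=\int_t^T\big(c(s)g(s)-g_t(s)\big)\,\e\Big[e^{-r(s-t)}F_s^{t,x}\ind_{\{F_s^{t,x}>b(s)\}}\Big]\diff s .
\]
Here $c(s)g(s)-g_t(s)=-L(s)>0$, so the integrand is nonnegative, and because $F_s^{t,x}$ is lognormal with full support on $\reals_+$ it is strictly positive exactly when $b(s)<\infty$, i.e.\ when $\Sr_s\neq\emptyset$. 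The crucial observation is that this positivity set $\{s\in(t,T):\Sr_s\neq\emptyset\}$ does not depend on $x$; hence $e(t,x)>0$ for some (equivalently, every) $x$ if and only if this set has positive Lebesgue measure.

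The remaining task is thus to rule out that $\{s\in(t_0,T):\Sr_s\neq\emptyset\}$ is Lebesgue-null. I would argue by contradiction, using the explicit identity for $h$ obtained by applying It\^o's formula to $e^{-r(s-t)}g(s)F_s^{t,x}$ on $[t,T]$ and taking expectations, exactly as in the proof of Corollary~\ref{corrSurrRegionEmpty}:
\[
h(t,x)=g(t)x+x\,\Lambda(t)+\e\big[e^{-r(T-t)}(G-F_T^{t,x})_+\big],\qquad \Lambda(t):=\int_t^T L(s)\,e^{-\int_t^s c(u)\diff u}\diff s .
\]
Since $L<0$, we have $\Lambda(t)<0$ for every $t\in[0,T)$.

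Now suppose, for contradiction, that for some $t_0\in[0,T)$ the set $\{s\in(t_0,T):\Sr_s\neq\emptyset\}$ is null. By Lemma~\ref{lemmaNonEmptySLocal} the surrender region is nonempty in every time strip, so $\{s:\Sr_s\neq\emptyset\}$ is dense and we may pick $t_1\in(t_0,T)$ with $\Sr_{t_1}\neq\emptyset$, i.e.\ $b(t_1)<\infty$. The set $\{s\in(t_1,T):\Sr_s\neq\emptyset\}$ is contained in the null set, so by the first paragraph $e(t_1,x)=0$ for \emph{all} $x$, i.e.\ $v(t_1,x)=h(t_1,x)$ for all $x\in\reals_+$. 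On the other hand, for $x\geq b(t_1)$ we have $(t_1,x)\in\Sr$ and hence $v(t_1,x)=\varphi(t_1,x)=g(t_1)x$. Combining these with the $h$-identity yields
\[
\e\big[e^{-r(T-t_1)}(G-F_T^{t_1,x})_+\big]=-x\,\Lambda(t_1)\qquad\text{for all }x\geq b(t_1).
\]
The right-hand side tends to $+\infty$ as $x\to\infty$ (because $\Lambda(t_1)<0$), whereas the left-hand side is bounded above by $Ge^{-r(T-t_1)}$; this is the desired contradiction. Consequently $\{s\in(t_0,T):\Sr_s\neq\emptyset\}$ has positive measure and $e(t_0,x_0)>0$ for every $x_0$, proving $v>h$ on $[0,T)\times\reals_+$.

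The main obstacle is precisely the passage from Lemma~\ref{lemmaNonEmptySLocal} --- which only guarantees $\Sr_s\neq\emptyset$ on a \emph{dense} set of times --- to the strict positivity of the time-integral defining $e(t,x)$, since a dense set can be Lebesgue-null. The device that overcomes it is the self-improvement step above: were the positivity set null, the representation would force $v=h$ on an entire later time-slice $\{t_1\}\times\reals_+$, and evaluating this identity on the threshold ray $[b(t_1),\infty)$ (where $v=g(t_1)x$) against the linear-in-$x$ growth coming from $\Lambda(t_1)<0$ produces a contradiction for large $x$.
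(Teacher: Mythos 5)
Your argument is genuinely different from the paper's, and its core logic (the measure-theoretic reduction plus the bounded-versus-linear-growth contradiction) is sound, but it has a real hypothesis gap at the very first step. You claim that under Assumption \ref{assumpCandg} the hypotheses of Propositions \ref{propSurPremRepresentation} and \ref{propBoundary} (and of Lemma \ref{lemmaNonEmptySLocal}, which you also invoke) are met, justifying only Assumption \ref{assumpLips_g} and Assumption \ref{assumpCHolderContinuous} \ref{enumCHolderContinuouspart1}. Those two are indeed automatic, but all three results you cite require Assumption \ref{assumpCHolderContinuous} in full, and part~(ii) --- local H\"older continuity of the fee in $t$ --- is \emph{not} implied by Assumption \ref{assumpCandg}. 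It is genuinely needed where you use it: the surrender-premium representation is established through the free-boundary and regularity results (Theorem \ref{thmFBproblemGen}, Corollaries \ref{thmFBproblem} and \ref{corrBoundedWeakDerivatives}), which rest on PDE theory with H\"older-continuous coefficients. The lemma, however, is stated under the assumptions of Proposition \ref{propSurrRegionsEqual} alone, i.e.\ Assumption \ref{assumpCandg} and $L(t)<0$; a bounded, measurable but discontinuous fee (e.g.\ piecewise constant in $t$, as in Section \ref{sectNumResults}) satisfies these hypotheses yet falls outside your argument. As written, you therefore prove a strictly weaker statement.

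The paper's own proof avoids this machinery entirely: it applies It\^o's formula only to $e^{-r(s-t)}g(s)F_s^{t,x}$ (legitimate under Assumption \ref{assumpCandg} alone, since $g\in C^{1,2}$), obtaining the same identity you derive, $h(t,x)=g(t)x+x\Lambda(t)+\e\left[e^{-r(T-t)}(G-F_T^{t,x})_+\right]$ with $\Lambda(t):=\int_t^T L(s)e^{-\int_t^s c(u)\diff u}\diff s<0$, and then combines the supposition $v=h$ with $v\geq\varphi$ to get $\e\left[e^{-r(T-t)}(G-F_T^{t,x})_+\right]\geq -x\Lambda(t)$, which the growth comparison contradicts. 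That said, your diagnosis of the genuine difficulty is astute: the contradiction needs the inequality at arbitrarily large $x$, whereas negating the lemma gives $v=h$ at a single point only; your ``self-improvement'' step, forcing $v(t_1,\cdot)\equiv h(t_1,\cdot)$ on an entire time-slice, is exactly the missing link (on which the paper itself is terse). The problem is that your propagation mechanism is the premium representation, which is what drags in the extra assumption; note also that once you have $v(t_1,\cdot)\equiv h(t_1,\cdot)$, the detour through Lemma \ref{lemmaNonEmptySLocal} and Proposition \ref{propBoundary} is unnecessary, since $v\geq\varphi$ already yields the inequality for all $x$. To stay within the lemma's stated hypotheses, propagate instead via the Snell-envelope structure: $e^{-rs}\left(v(s,F_s^{t_0,x_0})-h(s,F_s^{t_0,x_0})\right)$ is a non-negative supermartingale (the Snell envelope of Theorem \ref{thmElkaroui81} \ref{enumElKarouiSM} minus a martingale) started at $0$, hence identically zero; full support of $F_s^{t_0,x_0}$ and continuity of $v$ (Theorem \ref{thmContinuity_v}, available since part \ref{enumCHolderContinuouspart1} is automatic) then give $v(s,\cdot)\equiv h(s,\cdot)$ for $s\in(t_0,T)$, and your growth comparison finishes the proof.
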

\begin{proof}
	We proceed by contradiction. Suppose that $L(t)<0$ for all $t\in[0,T)$ and that $v(t,x)=h(t,x)$ for some $(t,x)\in[0,T)\times \reals_+$. Using Itô's lemma on the discounted surrender value process and the zero-mean property of the stochastic integral, we find that
	\begin{align*}
		v(t,x)&=\e\left[e^{-r(T-t)}\max(G,F_T^{t,x})\right]\\
		&=\e\left[e^{-r(T-t)} g(T) F_T^{t,x}\right]+\e\left[e^{-r(T-t)}(G-F_T^{t,x})_+\right]\\
		&=xg(t)+\e\left[\int_t^Te^{-r(s-t)}F_s^{t,x} L(s) \diff s\right]+\e\left[e^{-r(T-t)}(G-F_T^{t,x})_+\right]\\
		&\leq xg(t)+x \sup_{t\leq s\leq T} L(s) \int_t^T e^{-\int_t^s c(u)\diff u}\diff s + \e\left[e^{-r(T-t)}(G-F_T^{t,x})_+\right]
	\end{align*}
	Now recall from Lemma \ref{lemmaVprop1} \ref{enumVprop2} that $v(t,x)\geq xg(t)$ for all $(t,x)\in [0,T)\times \reals_+$. Hence, it follows that
	\begin{equation*}
		xg(t)\leq v(t,x)\leq xg(t)+x \sup_{t\leq s\leq T} L(s) \int_t^T e^{-\int_t^s c(u)\diff u}\diff s + \e\left[e^{-r(T-t)}(G-F_T^{t,x})_+\right].
	\end{equation*}
	For the last inequality to be satisfied, it must be true that
	\begin{equation*}
		\e\left[e^{-r(T-t)}(G-F_T^{t,x})_+\right] \geq -x \sup_{t\leq s\leq T} L(s) \int_t^T e^{-\int_t^s c(u)\diff u}\diff s. 
	\end{equation*}
	Using arguments similar to those of the proof of Proposition \ref{propNonEmptyS_t}\ref{enumStempty3}, we find a contradiction and therefore conclude that $v(t,x) > h(t,x)$ for all $(t,x) \in [0,T)\times \reals_+$.
	\end{proof}
We can now prove Proposition \ref{propSurrRegionsEqual}
\begin{proof}[Proposition \ref{propSurrRegionsEqual}]
	To show that $\mathcal{S}\subseteq\tilde{\mathcal{S}}$, fix ${(t,x)\in\mathcal{S}}$, and observe that ${\varphi(t,x)=v(t,x)\geq h(t,x)}$. Hence, ${v(t,x)=\max(\varphi(t,x),h(t,x))}$, and thus $(t,x) \in \tilde{\mathcal S}$, by definition of $\tilde \Sr$
	
	To show that $\mathcal{S}\supseteq\tilde{\mathcal{S}}$, we show $\mathcal{C}\subseteq\tilde{\mathcal{C}}$. 
	Fix $(t,x)\in\Cr$, so that ${v(t,x)>\varphi(t,x)}$.
	Since $L(t)<0$ for all $t\in[0,T)$, we have by Lemma \ref{lemmaValueFonction} that ${v(t,x)>h(t,x)}$ for all $(t,x)\in [0,T)\times\reals_+$.
	Thus, $v(t,x)> \max(\varphi(t,x), h(t,x))$, which confirms that $(t,x)\in\tilde{\Cr}$.
\end{proof}
The next corollary establishes the equality of $\tau_t^x$ and $\tilde{\tau}_t^x$ under a simple condition, complementing the previous results of Lemma \ref{lemmaStoppingTimes}.
\begin{corollary}\label{corrEqualOptStop}
	If $L(t)<0$ for all $t\in[0,T)$, then $\tau_t^x=\tilde{\tau}_t^x$.
\end{corollary}
\begin{proof}
	The stopping times $\tau^{x}_t$ and $\tilde{\tau}^{x}_t$ can be written as
	\begin{equation*}
		\begin{array}{cc}
			\tau^{x}_t & =\inf\{t\leq s\leq T| (s,F_s^{t,x})\in\mathcal{S}\},\\
			\tilde{\tau}^{x}_t &= \inf\{t\leq s\leq T| (s,F_s^{t,x})\in\tilde{\mathcal{S}}\}.
			\end {array}
		\end{equation*}
		The proof then follows immediately from  Proposition \ref{propSurrRegionsEqual}.
		\end{proof}
	
	From the results above, we can conclude that when the fee and surrender charge functions only depend on time (Assumption \ref{assumpCandg}) and satisfy other regularity assumptions (Assumption \ref{assumpCHolderContinuous}), if $L(t)<0$ for all $t\in[0,T)$, the optimal stopping problem with the continuous reward function \eqref{eqVASurOpt3} is equivalent to the optimal stopping problem with the discontinuous reward function \eqref{eqAmOptVA}. 
	That is, the two problems lead to the same value function (Theorem \ref{thmContRep}), surrender region (Proposition \ref{propSurrRegionsEqual}), and optimal stopping time (Corollary \ref{corrEqualOptStop}). 
	It also follows that the optimal surrender boundary and the continuation region will be the same for the two problems.
	
	To provide some intuition for this result, we remark that the condition $L(t) < 0$ is associated with a non-empty surrender region (see Proposition \ref{propNonEmptyS_t}); at any point during the life of the contract, there is always a possibility that early surrender will become optimal in the future. 
	Then, the value of the VA contract is strictly above that of the EPV of the maturity benefit, and if surrender occurs it will be due to the contract value being equal to the surrender benefit. 
	When this is the case, all three stopping times coincide (see the discussion after the proof of Lemma \ref{lemmaStoppingTimes}).
	

\section{Numerical Examples}\label{sectNumResults}

\subsection{Time-dependent fee function}

In this section, we present two simple examples to highlight specific features that can be observed in the optimal stopping region associated with variable annuity pricing: its disconnectedness and the discontinuity of its boundary.

We consider the time-dependent fee functions $c_1$ and $c_2$, given by
\begin{align*}
	c_1(t) &=\frac{1}{100}\left(0.00889t^2-0.1330t + 1\right), \qquad
	c_2(t) &=\frac{1}{100}\left(-0.0333 t + 1\right),
\end{align*}
for $0 \leq t \leq T$, with $T=15$. The surrender charge function is set to $g(t)=e^{-0.0055 (T-t)}$, a commonly used form in the variable annuity literature.
The resulting surrender charge at inception of the contract is close to $8\%$, and it goes down to less than $0.5\%$ at $t=14$, one year before maturity.

The fee functions were chosen to illustrate some of the results obtained in Section 4.
Function $c_1$ shows the impact of a change in the sign of $L(t)$; it is negative on $[0,5.17]\cup[9.79,15]$ and positive elsewhere.
Function $c_2$ was chosen such that $L(t) > 0$ for $t\in(13.5,T]$ for $c_2$, thus eliminating the risk-neutral surrender incentive in the last 1.5 year of the contract.
In both cases, the annual fee rates vary between $0.5\%$ and $1.0\%$.
In practice, the fee is typically set as a constant.
Numerical examples in this setting are provided, for example, in \cite{bernard2014optimal}, \cite{Mackay2014} and \cite{Mackay2017}.

\begin{figure}[h!]
	\centering
	\begin{tabular}{cc}
		\includegraphics[scale=0.3]{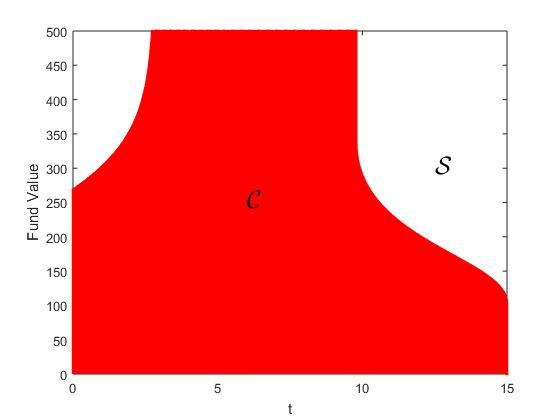}&
		\includegraphics[scale=0.3]{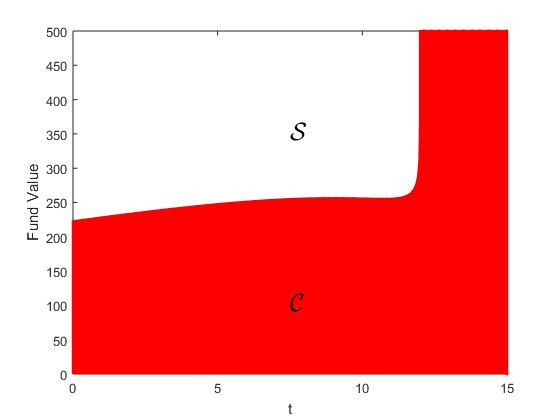}\\
		\textsc{(a)} $c_1$ & \textsc{(b)} $c_2$  \\
	\end{tabular}
	\caption{\small{The continuation region (in red) of the optimal stopping problem with the discontinuous reward function \eqref{eqAmOptVA}. The value of the variable annuity contract is approximated using the continuous-time Markov chain approximation described in \cite{MackayVachonCui2022VaCTMC}. 
			Market and VA parameters are $r=0.03$, $\sigma=0.2$, $F_0=G=100$ and $T=15$.}}\label{figComparisonContRegions}
\end{figure}

Figure \ref{figComparisonContRegions} shows the continuation regions (in red) associated with the optimal stopping problem \eqref{eqAmOptVA} for the two fee functions $c_1$ and $c_2$. 
The surrender regions (in white) are empty between years $5$ and $10$ for the fee function $c_1$, and between years $13.5$ and $15$ for $c_2$. 
This corresponds to the time intervals during which $L(t)>0$, illustrating a result of Proposition \ref{propEmptyS_t} \ref{enumStEmptyLocal}. Thus, the optimal surrender boundary is discontinuous when the fee is given by $c_1$. Other examples of the influence of the fee function on the surrender region exist, for example, in \cite{Mackay2017}, Figures 4 and 5. However, disconnected sets for the surrender region have not been observed in prior numerical work on similar problems, see among others \cite{bernard2014optimal}, \cite{Mackay2017}, \cite{kang2018optimal}, and \cite{MackayVachonCui2022VaCTMC}. 

Another important feature highlighted in the second example (with fee function $c_2$) is that $\lim_{t\rightarrow T} b(t)=G$ does not hold. 
This is because the surrender region is empty for $t>13.5$, just before maturity (since $L(t)>0$).
Moreover, in this particular case, since $\Sr\cap ([13.5,T)\times \reals_+)=\emptyset$, we have $v(t,x)=h(t,x)$ for $(t,x)\in [13.5,T)$, so that  $\tilde{\Sr}\cap ([13.5,T)\times \reals_+)=[13.5,T)\times \reals_+$, providing another example in which the optimal stopping problems with the discontinuous and the continuous reward functions have different surrender (and continuation) regions.

Finally, we note that in our figures, the surrender region becomes empty slightly before $t=5$ for the fee function $c_1$, and before $t=13.5$ for $c_2$.
This is likely due to the $y$-axis stopping at 500; it may be that there exists a surrender region at $t=5$ for account values that are (much) higher than 500, and that the surrender boundary is continuous in a general sense, ``reaching infinity'' continuously when the sign of $L$ changes.
Thus, we conjecture that $L(t) \leq 0$ if and only if $\Sr_t$ is non-empty, but this is not something we have been able to show.

An economic interpretation for the condition $L(t)>0$, which leads to the surrender region being empty, is that the surrender charge to be paid upon surrender is greater than the expected future fee payment.
Surrenders then become more expensive and the policyholder prefers to hold on to her contract.
This is also discussed in \cite{Mackay2017} in a less general setting.

\subsection{State-dependent fee function}

Next, to show the impact of a fee function that depends on the value of the underlying account, we define
\begin{align*}
	c_3(t,x) =  \frac{0.12 e^{150-x}}{1+e^{150-x}}.
\end{align*}
This function was chosen to resemble the discontinuous fee studied in \cite{Mackay2017} while satisfying Assumption \ref{assumpCHolderContinuous}. 
The resulting fee stays very close to 0.012 until the investment account $F_t$ reaches values around 150, at which point it drops quickly to 0.
The idea behind such a state-dependent fee is to reduce the surrender incentive by eliminating the fee when the value of the maturity guarantee is low.
Following \cite{Mackay2017}, we also use $g(t,x) =  1-0.05\left(1-\frac tT\right)^3$, $F_0 = G = 100$, $T=10$, $\sigma = 0.165$ and $r=0.03$.
The resulting optimal surrender region is presented in Figure \ref{figSDFee}.

\begin{figure}[h!]
	\centering
	\includegraphics[scale=0.3]{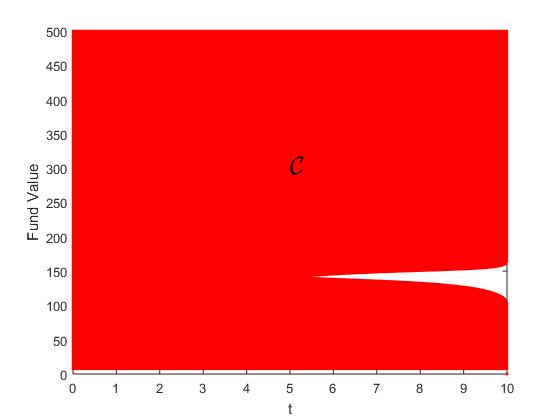}
	\caption{\small{Continuation region (in red) of the optimal stopping problem \eqref{eqAmOptVA}. The value of the variable annuity contract is approximated using the continuous-time Markov chain approximation described in \cite{MackayVachonCui2022VaCTMC}. 
			Market and VA parameters are $r=0.03$, $\sigma=0.165$, $F_0=G=100$ and $T=10$.}}\label{figSDFee}
\end{figure}

This specific state-dependent fee removes the optimal surrender incentive on $0 \leq t \leq 5$ (that is, $\Sr \cap [0,5] = \emptyset$). 
Indeed, whenever the $t$-section of the surrender region is non-empty, it has the form $(b_1(t),b_2(t))$, for $b_1(t)$, $b_2(t) < \infty$.
The surrender region is not of the so-called ``threshold'' type; if the account value increases quickly at the beginning of the contract and remains high until maturity, it may never be optimal to surrender.
However, it is optimal to do so for some values of $x$ on $t \in (5,10]$.
This is due to the vanishing fee as $F_t$ increases.

\section{Conclusion}\label{sectionConclu}
In this paper, we perform a rigorous theoretical analysis of the value function involved in the pricing of a variable annuity contract with guaranteed minimum maturity benefit under the Black-Scholes setting with general fee and surrender charge functions. Because of the time dependence, the discontinuity, and the unboundedness of the reward function, many of the standard results in optimal stopping theory do not apply directly to our problem. We show that the optimal stopping problem in \eqref{eqAmOptVA} admits another representation with a continuous reward function \eqref{eqVASurOpt3}, which facilitates the study of the regularity of the value function. 

In particular, the continuous reward representation allows us to directly apply results from optimal stopping theory to obtain the continuity of the value function.  We also prove its convexity in the state variable, its Lipschitz property in $x$ and characterize its continuity in $t$. Then, we derive an integral expression for the early surrender premium, generalizing the results of \cite{Bernard2014} to general time-dependent fee and surrender charge functions. This second representation of the value function is also known as the early exercise premium representation in the American option pricing literature. From there, we develop a third representation for the value function in terms of the current surrender value and an integral expression that only takes value in the continuation region. We call this third representation the continuation premium representation. 

The continuation premium representation turns out to be very helpful in studying the shape of the surrender region. We show that the (non-)emptiness of the surrender region depends on an explicit condition that is expressed solely in terms of the fee and the surrender charge functions. This result is new to the literature and provides a better understanding of the interaction between the fees and the surrender penalty on surrender incentives. When the surrender region is nonempty, we show that $t$-sections of $\Sr$ are of the form $\Sr_t=[b(t),\infty)$, for some $b(t)\in \reals_+ \cup \{\infty\}$.  Investigating the regularity of the optimal surrender boundary theoretically is left as future research.

\section*{Appendix: Proof of Proposition \ref{propLipchtitzV}}

\begin{proof}
	\begin{enumerate}[label=(\roman*)]
		\item 
		We first observe that under Assumption \ref{assumpCandg}, the function $x \mapsto \varphi(t,x)$ satisfies $\abs{\varphi(t,x)-\varphi(t,y)}\leq \abs{x-y}$ for all $t \in [0,T]$. 
		Suppose $x>y$.
		Since $x \mapsto \varphi(t,x)$ is non-decreasing (see Lemma \ref{lemmaVprop2}), we have
		\begin{align*}
			|v(t,x)-v(t,\,y)|&=v(t,x)-v(t,\,y) \\
			&\leq \e\left[e^{-r(\xtau-t)}\left|\varphi(\xtau, F_{\xtau}^{t,x})-\varphi(\xtau, F_{\xtau}^{t,y})\right|\right]\\
			&\leq \e\left[e^{-r(\xtau-t)}\left|F_{\xtau}^{t,x}- F_{\xtau}^{t,y}\right|\right] \\
			&\leq\e\left[e^{-r(\xtau-t)}\left|xe^{(r-\sigma^2/2)(\xtau-t)-\int_0^{\tau_t^x-t}c(t+s)\diff s+\sigma W_{\xtau-t}}\right.\right.\\
			& \quad\left.\left.- ye^{(r-\sigma^2/2)(\xtau-t)-\int_0^{\tau_t^x-t}c(t+s)\diff s+\sigma W_{\xtau-t}}\right|\right]\\
			&\leq |x-y|\e\left[ e^{\sigma W_{\xtau-t}-\sigma^2(\xtau-t)/2}\right]\\
			&=|x-y|.
		\end{align*}
		The last equality follows from an application of Doob's optional sampling theorem.
		
		\item 
		To show (ii), we need some auxiliary results. 
		
		First, if $\ttau$ is a stopping time taking values in $[0,1]$ and $c: \reals_+ \mapsto [0,1]$, then for $0\leq s\leq t \leq T$, it can be shown (by considering separately the cases $t \leq s+\ttau (T-s)$ and $s+\ttau(T-s) < t$) that
		\begin{equation}
			\Big|\int_s^{s+\ttau(T-s)} c(u)\diff u-\int_t^{t+\ttau(T-t)} c(u)\diff u\Big| \leq 3|t-s|.
			\label{eq:boundIntcu}
		\end{equation}
		
		
		Moreover, defining $M_1 := \sup_{0\leq s\leq 1} |W_s|$, we have that
		\begin{align}
			&\left|\sigma\sqrt{T-t} W_{\ttau(\omega)} (\omega)+(r-\sigma^2/2)\ttau(\omega)(T-t)-\int_0^{\ttau(\omega)(T-t)}c(t+u)\diff u\right|\nonumber\\
			& \qquad\leq \sigma \sqrt{T}M_1(\omega)+|r-\sigma^2/2|T+\int_0^{T}c(u)\diff u.\label{eqLipsW}
		\end{align}
		
		We also observe that if $|z|\leq r$, $|w|\leq r$, for some $r>0$, then 
		\begin{equation}
			|e^z-e^w|\leq e^r |z-w|.\label{eqLipsEXP}
		\end{equation}
		This is known as the Lipschitz property of the exponential function on a disk and follows from $z^{n+1}-w^{n+1}=(z-w)\sum_{k=0}^nz^k w^{n-k}$ for all $n \in \mathbb N$.

		Finally, note that for any $t \in [0,T)$, $v(t,x)$ can be written as
		\begin{align*}
			&\sup_{\ttau \in [0,1]} \e\left[ e^{-r\ttau(T-t)} 
			\varphi\left(t+\ttau(T-t), x e^{(r-\sigma^2/2)\ttau(T-t) + \int_t^{t+\ttau(T-t)} c(u)\, \diff u + \sigma (W_{t+\ttau(T-t)} - W_t)}\right)\right]\\
			&\quad= \sup_{\ttau \in [0,1]} \e\left[ e^{-r\ttau(T-t)} 
			\varphi\left(t+\ttau(T-t), x e^{(r-\sigma^2/2)\ttau(T-t) + \int_t^{\ttau(t+T-t)} c(u)\, \diff u + \sigma \sqrt{T-t} W_{\ttau}}\right)\right],
		\end{align*}
		since $(W_{t+\ttau(T-t)} - W_t)$ and $\sqrt{T-t} W_{\ttau}$ have the same distribution.
		
		Going forward, to simplify the notation, we define
		\begin{align*}
			\tilde F^{t,x}_{t+\ttau(T-t)} \coloneqq
			x e^{(r-\sigma^2/2)\ttau(T-t) + \int_0^{\ttau(T-t)} c(t+u)\, \diff u + \sigma \sqrt{T-t} W_{\ttau}},
		\end{align*}
		for $(t,x) \in [0,T]\times\reals_+$. 
		
		To prove (ii), we fix $x \in \reals_+$ and define
		$\ttau_t \coloneqq \frac{\tau_t^x-t}{T-t}$
		for any $t \in [0,T]$ and where $\xtau$ is defined as in Corollary \ref{corrOptStop1}. 
		We drop the superscript $x$ for ease of notation.
		Then, the value function $v(t,x)$ can be written as 
		\begin{equation*}
			v(t,x) = \e\left[e^{-r\ttau_t(T-t)} \varphi\left(t+\ttau_t(T-t), \tilde F^{t,x}_{t+\ttau(T-t)}\right)\right].
		\end{equation*}
		
		Fix $0\leq s \leq t \leq T$.
		Since $v$ is non-monotone in $t$ (see Remark \ref{rmkNonMonotonicity_t}), we consider two cases: $v(s,x)\geq v(t,x)$ and $v(s,x) < v(t,x)$ .
		
		Case 1: $v(s,x)\geq v(t,x)$. 
		Recall from Corollary \ref{corrOptStop1} that $\xtaus$ is optimal for $v(s,x)$.
		Then, we can write
		\begin{align*}
			&|v(s,x)-v(t,\,x)| =v(s,x)-v(t,\,x) \\
			&  \quad \leq \e\left[e^{-r\ttau_s(T-t)}\left(\varphi\left(s+\ttau_s(T-s), \tilde F_{s+\ttau_s(T-s)}^{s,x}\right)-\varphi\left(t+\ttau_s(T-t), \tilde F_{t+\ttau_s(T-t)}^{t,x}\right)\right)\right]\\
			&  \quad \leq\e\left[\left(g(s+\ttau_s(T-s)) \tilde F_{s+\ttau_s(T-s)}^{s,x}-g(t+\ttau_s(T-t)) \tilde F_{t+\ttau_s(T-t)}^{t,x}\right) \ind_{\lbrace\ttau_s<1\rbrace}\right.\\
			&  \quad\quad\left.\quad\quad\quad +\left(\max(G,F_{s+\ttau_s(T-s)}^{s,x})-\max(G,F_{t+\ttau_s(T-t)}^{t,x})\right)\ind_{\lbrace\ttau_s=1\rbrace}\right]\\
			&  \quad \leq \e\left[\left|F_{t+\ttau_s(T-t)}^{t,x}-F_{s+\ttau_s(T-s)}^{s,x}\right|\right]\\
			&  \quad \leq\e\left[xe^{\sigma\sqrt{T}M_1+|r-\sigma^2/2|T+\int_0^T c(u)\diff u} \Big|(r-\sigma^2/2)\ttau_s (s-t) 
			+\int_s^{s+\ttau_s(T-s)}c(u)\diff u\right.\\
			&  \quad\quad\left.-\int_t^{\ttau(t+T-t)}c(u)\diff u+\sigma\sqrt{T-t}W_{\ttau_s}-\sigma\sqrt{T-s}W_{\ttau_s}\Big|\right]\text{ (by \eqref{eqLipsW} and \eqref{eqLipsEXP})}\\
			&  \quad \leq\e\left[xe^{\sigma\sqrt{T}M_1+|r-\sigma^2/2|T+\int_0^T c(u)\diff u}\Big\lbrace \left|r-\sigma^2/2\right|\ttau_s |t-s|\right.\\
			&  \quad\quad\quad\quad\left.+3|t-s|+\sigma |W_{\ttau_s}| |\sqrt{T-t}-\sqrt{T-s}|\Big\rbrace\right]\qquad\textrm{(by \eqref{eq:boundIntcu})}\\
			& \qquad \leq |t-s|  (|r-\sigma^2/2| + 3) xe^{|r-\sigma^2/2|T+\int_0^T c(u)\diff u}
			\e\left[xe^{\sigma\sqrt{T}M_1}\right] \\
			&\qquad\qquad + |\sqrt{t-s}| x \sigma e^{|r-\sigma^2/2|T+\int_0^T c(u)\diff u}  
			\e[M_1 e^{\sigma\sqrt{T}M_1}]\\
			&  \quad = \tilde C_1|t-s|+C_2|\sqrt{t-s}|,
		\end{align*}
		where 
		$\tilde C_1 = (|r-\sigma^2/2|+3) x e^{|r-\sigma^2/2|T+\int_0^T c(u)\diff u}
		\e\left[xe^{\sigma\sqrt{T}M_1}\right]$ 
		and
		$C_2 = x \sigma e^{|r-\sigma^2/2|T+\int_0^T c(u)\diff u}  
		\e[M_1 e^{\sigma\sqrt{T}M_1}]$.
		
		Case 2: $v(s,x) < v(t,x)$. Note that the surrender function $g$ is Lipschitz, since it is in $C^1$ and it has bounded first-order derivatives (by hypothesis). Hence, we have
		\begin{align*}
			\allowdisplaybreaks
			& |v(t,x)-v(s,\,x)|=v(t,x)-v(s,\,x) \\
			&\quad \leq \e\left[e^{-r\ttau_t(T-t)}\varphi\left(t+\ttau_t(T-t), \tilde F_{t+\ttau_t(T-t)}^{t,x}\right)\right.\\
			&\quad\quad\quad\quad \left.-e^{-r\ttau_t(T-s)}\varphi\left(\ttau_t(T-s)+s, \tilde F_{s+\ttau_t(T-s)}^{s,x}\right)\right]\\
			&\quad= \e\left[\left|e^{-r\ttau_t(T-t)}-e^{-r\ttau_t(T-s)}\right|\varphi\left(t+\ttau_t(T-t), \tilde F_{t+\ttau_t(T-t)}^{t,x}\right)\right]\\
			&\quad\quad\quad\quad+\e\left[e^{-r\ttau_t(T-s)}\left\{\varphi\left(t+\ttau_t(T-t), \tilde F_{t+\ttau_t(T-t)}^{t,x}\right)-\varphi\left(\ttau_t(T-s)+s, \tilde F_{s+\ttau_t(T-s)}^{s,x}\right)\right\}\right]\\
			&\quad\leq \e\left[e^{rT}|r\ttau_t (T-t)-r\ttau_t(T-s)|\varphi\left(t+\ttau_t(T-t), \tilde F_{t+\ttau_t(T-t)}^{t,x}\right)\right]\quad\textrm{ (by \eqref{eqLipsEXP})}\\
			&\quad\quad\quad\quad +\e\left[e^{-r\ttau_t(T-s)}\left(g(t+\ttau_t(T-t)) \tilde F_{t+\ttau_t(T-t)}^{t,x}-g(s+\ttau_t(T-s)) \tilde F_{s+\ttau_t(T-s)}^{s,x}\right)\ind_{\lbrace\ttau_t<1\rbrace}\right.\\
			&\quad \left.\quad\quad\quad\quad +e^{-r\ttau_t(T-s)}\left(\max(G,\tilde F_{t+\ttau_t(T-t)}^{t,x})-\max(G,\tilde F_{s+\ttau_t(T-s)}^{s,x})\right)\ind_{\lbrace\ttau_t=1\rbrace}\right]\\
			&\quad\leq re^{2rT} \left(G + x\right) |t-s| \quad\textrm{ (by Lemma \ref{lemmaVprop1})}\\
			& \quad\quad\quad\quad +\e\Bigg[e^{-r\ttau_t(T-s)}\Big(\left(g(t+\ttau_t(T-t))-g(s+\ttau_t(T-s))\right) \tilde F_{s+\ttau_t(T-s)}^{s,x}\\
			& \quad\quad\quad\quad\quad +g(t+\ttau_t(T-t))\left[ \tilde F_{t+\ttau_t(T-t)}^{t,x}- \tilde F_{s+\ttau_t(T-s)}^{s,x}\right]\Big)\ind_{\lbrace\ttau_t<1\rbrace}\\
			& \quad\quad\quad\quad\quad +e^{-r\ttau_t(T-s)}\left|\tilde F_{t+\ttau_t(T-t)}^{t,x} - \tilde F_{s+\ttau_t(T-s)}^{s,x} \right|\ind_{\lbrace\ttau_t=1\rbrace}\Bigg]\\
			&\quad\leq re^{2rT} \left(G + x\right) |t-s| 
			+ \e\left[e^{-r\ttau_t(T-s)}C|t-s| \tilde F_{s+\ttau_t(T-s)}^{s,x}\right]\\
			&\quad\quad\quad\quad + \e\left[\left|\tilde F_{t+\ttau_t(T-t)}^{t,x}- \tilde F_{s+\ttau_t(T-s)}^{s,x}\right|\right]\\
			&\quad \leq (re^{2rT}(G+x) + Cx + \tilde C_1) |t-s| + C_2 \left|\sqrt{T-s} - \sqrt{T-t}\right| \\
			&\quad \leq C_1 |t-s| + C_2 \left|\sqrt{t-s}\right|
		\end{align*}
		where $C$ is the Lipschitz constant of $g$, $\tilde C_1$ and $C_2$ are the constants defined in Case 1 and $C_1 = re^{2rT}(G+x) + Cx + \tilde C_1$.
	\end{enumerate}
	
\end{proof}

\bibliographystyle{spbasic}      

\bibliography{myBib}

\begin{thebibliography}{54}
\providecommand{\natexlab}[1]{#1}
\providecommand{\url}[1]{{#1}}
\providecommand{\urlprefix}{URL }
\expandafter\ifx\csname urlstyle\endcsname\relax
  \providecommand{\doi}[1]{DOI~\discretionary{}{}{}#1}\else
  \providecommand{\doi}{DOI~\discretionary{}{}{}\begingroup
  \urlstyle{rm}\Url}\fi
\providecommand{\eprint}[2][]{\url{#2}}

\bibitem[{Alonso-Garc{\'\i}a et~al.(2022)Alonso-Garc{\'\i}a, Sherris,
  Thirurajah, and Ziveyi}]{alonso2020taxation}
Alonso-Garc{\'\i}a J, Sherris M, Thirurajah S, Ziveyi J (2022) {Taxation and
  policyholder behavior: the case of guaranteed minimum accumulation benefits}.
  {CEPAR Working Paper 2020/17}

\bibitem[{Bacinello and Zoccolan(2019)}]{bacinello2019variable}
Bacinello AR, Zoccolan I (2019) Variable annuities with a threshold fee:
  valuation, numerical implementation and comparative static analysis.
  Decisions in Economics and Finance 42:21--49

\bibitem[{Bassan and Ceci(2002)}]{bassan2002optimal}
Bassan B, Ceci C (2002) Optimal stopping problems with discontinous reward:
  Regularity of the value function and viscosity solutions. Stochastics and
  Stochastic Reports 72(1-2):55--77

\bibitem[{Bauer and Moenig(2023)}]{bauer2023cheaper}
Bauer D, Moenig T (2023) {Cheaper by the bundle: The interaction of frictions
  and option exercise in variable annuities}. Journal of Risk and Insurance
  90(2):459--486

\bibitem[{Bauer et~al.(2017)Bauer, Gao, Moenig, Ulm, and
  Zhu}]{bauer2017policyholder}
Bauer D, Gao J, Moenig T, Ulm ER, Zhu N (2017) {Policyholder Exercise Behavior
  in Life Insurance: The State of Affairs}. North American Actuarial Journal
  21(4):485--501

\bibitem[{Bensoussan and Lions(1982)}]{bensoussan2011applications}
Bensoussan A, Lions JL (1982) {Applications of Variational Inequalities in
  Stochastic Control}. Noth-Holland Publishing Company

\bibitem[{Bernard and MacKay(2015)}]{BernardMackay2015}
Bernard C, MacKay A (2015) {Reducing Surrender Incentives Through Fee Structure
  in Variable Annuities}. In: Glau K, Scherer M, Zagst R (eds) {Innovations in
  Quantitative Risk Management}, Springer, pp 209--223

\bibitem[{Bernard and Moenig(2019)}]{bernard2019less}
Bernard C, Moenig T (2019) {Where Less is More: Reducing Variable Annuity Fees
  to Benefit Policyholder and Insurer}. Journal of Risk and Insurance
  86(3):761--782

\bibitem[{Bernard et~al.(2014{\natexlab{a}})Bernard, Hardy, and
  MacKay}]{Bernard2014}
Bernard C, Hardy M, MacKay A (2014{\natexlab{a}}) {State-Dependent Fees for
  Variable Annuity Guarantees}. ASTIN Bulletin: The Journal of the IAA
  44(3):559--585

\bibitem[{Bernard et~al.(2014{\natexlab{b}})Bernard, MacKay, and
  Muehlbeyer}]{bernard2014optimal}
Bernard C, MacKay A, Muehlbeyer M (2014{\natexlab{b}}) {Optimal surrender
  policy for variable annuity guarantees}. Insurance: Mathematics and Economics
  55:116--128

\bibitem[{Bj{\"o}rk(2009)}]{bjork2009}
Bj{\"o}rk T (2009) {Arbitrage Theory in Continuous Time}, {Third} edn. Oxford
  university press

\bibitem[{Chiarolla et~al.(2022)Chiarolla, De~Angelis, and
  Stabile}]{chiarolla2020analytical}
Chiarolla MB, De~Angelis T, Stabile G (2022) {An analytical study of
  participating policies with minimum rate guarantee and surrender option}.
  Finance and Stochastics 26:173--216

\bibitem[{Cui et~al.(2017)Cui, Feng, and MacKay}]{cui2017}
Cui Z, Feng R, MacKay A (2017) {Variable Annuities with VIX-Linked Fee
  Structure under a Heston-Type Stochastic Volatility Model}. North American
  Actuarial Journal 21(3):458--483

\bibitem[{De~Angelis and Stabile(2019)}]{de2019lipschitz}
De~Angelis T, Stabile G (2019) {On Lipschitz Continuous Optimal Stopping
  Boundaries}. SIAM Journal on Control and Optimization 57(1):402--436

\bibitem[{De~Angelis et~al.(2024)De~Angelis, Milazzo, and
  Stabile}]{de2024variable}
De~Angelis T, Milazzo A, Stabile G (2024) On variable annuities with surrender
  charges. arXiv preprint arXiv:240502115

\bibitem[{Delong(2014)}]{delong2014pricing}
Delong {\L} (2014) Pricing and hedging of variable annuities with
  state-dependent fees. Insurance: Mathematics and Economics 58:24--33

\bibitem[{El~Karoui(1981)}]{el1981aspects}
El~Karoui N (1981) {Les Aspects Probabilistes du Contr{\^o}le Stochastique}.
  In: Hennequin PL (ed) {\'E}cole d’{\'E}t{\'e} de Probabilit{\'e}s de
  Saint-Flour IX-1979, Springer, pp 73--238

\bibitem[{Evans(2010)}]{evans2010partial}
Evans LC (2010) {Partial Differential Equations}, {Second} edn. American
  Mathematical Society

\bibitem[{Feng et~al.(2022)Feng, Gan, and Zhang}]{feng2022variable}
Feng R, Gan G, Zhang N (2022) Variable annuity pricing, valuation, and risk
  management: a survey. Scandinavian Actuarial Journal 2022(10):867--900

\bibitem[{Friedman(1964)}]{friedman1964partial}
Friedman A (1964) {Partial Differential Equations of Parabolic Type}.
  Prentice-Hall

\bibitem[{Gao and Ulm(2012)}]{gao2012optimal}
Gao J, Ulm ER (2012) {Optimal consumption and allocation in variable annuities
  with Guaranteed Minimum Death Benefits}. Insurance: Mathematics and Economics
  51(3):586--598

\bibitem[{Jacka(1991)}]{jacka1991optimal}
Jacka SD (1991) {Optimal Stopping and the American Put}. Mathematical Finance
  1(2):1--14

\bibitem[{Jacka and Lynn(1992)}]{jacka1992finite}
Jacka SD, Lynn JR (1992) Finite-horizon optimal stopping, obstacle problems and
  the shape of the continuation region. Stochastics and Stochastic Reports
  39(1):25--42

\bibitem[{Jaillet et~al.(1990)Jaillet, Lamberton, and
  Lapeyre}]{jaillet1990variational}
Jaillet P, Lamberton D, Lapeyre B (1990) {Variational inequalities and the
  pricing of American options}. Acta Applicandae Mathematica 21:263--289

\bibitem[{Jeon and Kwak(2018)}]{jeon2018optimal}
Jeon J, Kwak M (2018) Optimal surrender strategies and valuations of
  path-dependent guarantees in variable annuities. Insurance: Mathematics and
  Economics 83:93--109

\bibitem[{J{\"o}nsson et~al.(2006)J{\"o}nsson, Kukush, and
  Silvestrov}]{jonsson2006threshold}
J{\"o}nsson H, Kukush AG, Silvestrov DS (2006) {Threshold structure of optimal
  stopping strategies for American type option. II}. Theory of Probability and
  Mathematical Statistics 72:47--58

\bibitem[{Kang and Ziveyi(2018)}]{kang2018optimal}
Kang B, Ziveyi J (2018) Optimal surrender of guaranteed minimum maturity
  benefits under stochastic volatility and interest rates. Insurance:
  Mathematics and Economics 79:43--56

\bibitem[{Karatzas and Shreve(1991)}]{karatzasShreve1991}
Karatzas I, Shreve SE (1991) {Brownian Motion and Stochastic Calculus},
  {Second} edn. Springer-Verlag

\bibitem[{Karatzas and Shreve(1998)}]{karatzas1998methods}
Karatzas I, Shreve SE (1998) {Methods of Mathematical Finance}. Springer

\bibitem[{Kirkby and Aguilar(2023)}]{kirkby2022valuation}
Kirkby JL, Aguilar JP (2023) {Valuation and optimal surrender of variable
  annuities with guaranteed minimum benefits and periodic fees}. Scandinavian
  Actuarial Journal 2023(6):624--654

\bibitem[{Kotlow(1973)}]{kotlow1973free}
Kotlow DB (1973) A free boundary problem connected with the optimal stopping
  problem for diffusion processes. Transactions of the American Mathematical
  Society 184:457--478

\bibitem[{Kouritzin and MacKay(2018)}]{kouritzin2018vix}
Kouritzin MA, MacKay A (2018) {VIX-linked fees for GMWBs via explicit solution
  simulation methods}. Insurance: Mathematics and Economics 81:1--17

\bibitem[{Krylov(1980)}]{krylov2008controlled}
Krylov NV (1980) {Controlled Diffusion Processes}. Springer

\bibitem[{Lamberton(1998)}]{lamberton1998}
Lamberton D (1998) {American options}. In: Hand DJ, Jacka SD (eds) Statistic in
  Finance, Arnold

\bibitem[{Lamberton(2009)}]{Lamberton2009}
Lamberton D (2009) {Optimal stopping and American options}. Ljubljana Summer
  School on Financial Mathematics

\bibitem[{Luo and Xing(2021)}]{luo2021optimal}
Luo X, Xing J (2021) {Optimal Surrender Policy of Guaranteed Minimum Maturity
  Benefits in Variable Annuities with Regime-Switching Volatility}.
  Mathematical Problems in Engineering 2021:1--20

\bibitem[{MacKay(2014)}]{Mackay2014}
MacKay A (2014) {Fee Structure and Surrender Incentives in Variable Annuities}.
  PhD thesis, University of Waterloo

\bibitem[{MacKay et~al.(2017)MacKay, Augustyniak, Bernard, and
  Hardy}]{Mackay2017}
MacKay A, Augustyniak M, Bernard C, Hardy MR (2017) {Risk Management of
  Policyholder Behavior in Equity-Linked Life Insurance}. Journal of Risk and
  Insurance 84(2):661--690

\bibitem[{MacKay et~al.(2023)MacKay, Vachon, and
  Cui}]{MackayVachonCui2022VaCTMC}
MacKay A, Vachon MC, Cui Z (2023) {Analysis of VIX-linked fee incentives in
  variable annuities via continuous-time Markov chain approximation}.
  Quantitative Finance 23(7-8):1055--1078

\bibitem[{Milevsky and Salisbury(2001)}]{milevsky2001real}
Milevsky MA, Salisbury TS (2001) {The Real Option to Lapse a Variable Annuity:
  Can Surrender Charges Complete the Market?} In: Conference Proceedings of the
  11th Annual International AFIR Colloquium

\bibitem[{Moenig and Bauer(2016)}]{moenig2016revisiting}
Moenig T, Bauer D (2016) {Revisiting the Risk-Neutral Approach to Optimal
  Policyholder Behavior: A Study of Withdrawal Guarantees in Variable
  Annuities}. Review of Finance 20(2):759--794

\bibitem[{Moenig and Zhu(2018)}]{moenig2018lapse}
Moenig T, Zhu N (2018) {Lapse-and-Reentry in Variable Annuities}. Journal of
  Risk and Insurance 85(4):911--938

\bibitem[{Moenig and Zhu(2021)}]{moenig2021economics}
Moenig T, Zhu N (2021) {The Economics of a Secondary Market for Variable
  Annuities}. North American Actuarial Journal 25(4):604--630

\bibitem[{Niittuinperä(2022)}]{niittuinpera2020IAA}
Niittuinperä J (2022) {Chapter 18— Policyholder Behavior and Management
  Actions}. In: IAA Risk Book

\bibitem[{Palczewski and Stettner(2010)}]{palczewski2010finite}
Palczewski J, Stettner {\L} (2010) {Finite Horizon Optimal Stopping of
  Time-Discontinuous Functionals with Applications to Impulse Control with
  Delay}. SIAM Journal on Control and Optimization 48(8):4874--4909

\bibitem[{Palmer(2006)}]{palmer2006equity}
Palmer BA (2006) {Equity-Indexed Annuities: Fundamental Concepts and Issues}.
  Tech. rep., Insurance Information Institute

\bibitem[{Pascucci(2011)}]{pascucci2011pde}
Pascucci A (2011) PDE and martingale methods in option pricing. Springer
  Science \& Business Media

\bibitem[{Peskir and Shiryaev(2006)}]{peskir2006optimal}
Peskir G, Shiryaev A (2006) {Optimal Stopping and Free-Boundary Problems}.
  Birkh{\"a}user Basel

\bibitem[{Revuz and Yor(1999)}]{revuz1990continuous}
Revuz D, Yor M (1999) {Continuous Martingales and Brownian Motion}, {Third}
  edn. Springer

\bibitem[{Van~Moerbeke(1974)}]{van1974optimal}
Van~Moerbeke P (1974) {Optimal Stopping and Free Boundary Problems}. The Rocky
  Mountain Journal of Mathematics 4(3):539--578

\bibitem[{Veretennikov(1981)}]{veretennikov1981strong}
Veretennikov AJ (1981) {On Strong Solutions and Explicit Formulas for Solutions
  of Stochastic Integral Equations}. Mathematics of the USSR-Sbornik
  39(3):387--403

\bibitem[{Villeneuve(1999)}]{villeneuve1999exercise}
Villeneuve S (1999) {Exercise regions of American options on several assets}.
  Finance and Stochastics 3:295--322

\bibitem[{Villeneuve(2007)}]{villeneuve2007threshold}
Villeneuve S (2007) {On Threshold Strategies and the Smooth-Fit Principle for
  Optimal Stopping Problems}. Journal of Applied Probability 44(1):181--198

\bibitem[{Zhu and Welsch(2015)}]{zhu2015statistical}
Zhu Z, Welsch RE (2015) Statistical learning for variable annuity policyholder
  withdrawal behavior. Applied Stochastic Models in Business and Industry
  31(2):137--147

\end{thebibliography}

\end{document}